\def\final{1}  
\definecolor{ForestGreen}{rgb}{0.1333,0.5451,0.1333}
\definecolor{DarkRed}{rgb}{0.80,0,0}
\definecolor{Red}{rgb}{1,0,0}
\Crefname{claim}{Claim}{Claims}
\Crefname{condition}{Condition}{Conditions}
\Crefname{assumption}{Assumption}{Assumptions}
\Crefname{scenario}{Scenario}{Scenarios}
\declaretheorem[numberwithin=section,refname={Theorem,Theorems},Refname={Theorem,Theorems}]{theorem}
\declaretheorem[numberlike=theorem,name=Theorem,refname={Theorem,Theorems},Refname={Theorem,Theorems}]{thm}
\declaretheorem[numberlike=theorem]{lemma}
\declaretheorem[numberlike=theorem,name=Lemma,refname={Lemma,Lemmas},Refname={Lemma,Lemmas}]{lem}
\declaretheorem[numberlike=theorem,name=Proposition,refname={Proposition,Propositions},Refname={Proposition,Propositions}]{prop}
\declaretheorem[numberlike=theorem]{corollary}
\declaretheorem[numberlike=theorem,name=Corollary,refname={Corollary,Corollaries},Refname={Corollary,Corollaries}]{cor}
\declaretheorem[numberlike=theorem,style=definition]{definition}
\declaretheorem[numberlike=theorem,style=definition,name=Definition,refname={Definition,Definitions},Refname={Definition,Definitions}]{defn}
\declaretheorem[numberlike=theorem,style=definition]{invariant}
\declaretheorem[numberlike=theorem,style=definition]{condition}
\declaretheorem[numberlike=theorem,style=definition]{assumption}
\declaretheorem[numberlike=theorem]{claim}
\declaretheorem[numberlike=theorem,style=remark]{remark}
\declaretheorem[numberlike=theorem,refname={Fact,Facts},Refname={Fact,Facts},name={Fact}]{fact}
\declaretheorem[numberlike=theorem, refname={Observation,Observations},Refname={Observation,Observations},name={Observation}]{observation}
\newcommand{\vol}{\mathrm{vol}}
\renewcommand{\deg}{\mathrm{deg}}
\newcommand{\poly}{\mathrm{poly}}
\newcommand{\eps}{\varepsilon}
\newcommand{\rev}[1]{\overleftarrow{#1}}
\newcommand{\dist}{\mathrm{dist}}
\newenvironment{wrapper}[1]
{
	\begin{center}
		\begin{minipage}{\linewidth}
			\begin{mdframed}[hidealllines=true, backgroundcolor=gray!20, leftmargin=0cm,innerleftmargin=0.4cm,innerrightmargin=0.4cm,innertopmargin=0.4cm,innerbottommargin=0.4cm,roundcorner=10pt]
				#1}
			{\end{mdframed}
		\end{minipage}
	\end{center}
}
\global\long\def\bydef{\stackrel{\mathrm{def}}{=}}
\global\long\def\hi{\texttt{hi}}
\global\long\def\med{\texttt{med}}
\global\long\def\lo{\texttt{lo}}
\global\long\def\del{\mathrm{del}}
\global\long\def\dmg{\texttt{dmg}}
\global\long\def\base{\texttt{base}}
\global\long\def\safe{\texttt{sf}}
\global\long\def\most{\texttt{most}}
\global\long\def\del{\mathrm{del}}
\newcommand{\card}[1]{|#1|}
\newcommand{\Augment}{\textrm{Augment}}
\newcommand{\augment}{\textrm{Augment}}
\newcommand{\gres}{G_{\textrm{res}}}
\newcommand{\Gres}{\gres}
\newcommand{\tres}{T_{\mathrm{res}}}
\newcommand{\Tres}{\tres}
\newcommand{\es}{\textrm{\textsc{ES}}}
\newcommand{\ES}{\es}
\newcommand{\delete}{\textrm{Delete}}
\newcommand{\Delete}{\delete}
\newcommand{\Insert}{\mathrm{Insert}}
\newcommand{\Remove}{\mathrm{Remove}}
\newcommand{\distmax}{\dist_{\textrm{max}}}
\newcommand{\degmax}{\deg_{\textrm{max}}}
\newcommand{\LPM}{\textrm{LPM}}
\newcommand{\lpm}{\LPM}
\newcommand{\NumDeletions}{\textrm{\textsc{NumDeletions}}}
\newcommand{\epochsize}{q_{\textrm{ep}}}
\newcommand{\ResetES}{\textrm{\textsc{ResetES}}}
\newcommand{\Raff}{R_{\textrm{aff}}}
\newcommand{\Rmark}{R_{\textrm{mark}}}
\newcommand{\Lfar}{L_{\textrm{far}}}
\newcommand{\Vfar}{V_{\textrm{far}}}
\newcommand{\otil}{\tilde{O}}
\newcommand{\core}{\texttt{core}}
\newcommand{\res}{\texttt{adj}}
\newcommand{\hilo}{\texttt{hilo}}
\newcommand{\F}{\mathcal{F}}
\newcommand{\old}{\texttt{old}}
\newcommand{\new}{\texttt{new}}
\newcommand{\init}{\texttt{init}}
\newcommand{\vhi}{\texttt{v-hi}}
\newcommand{\alo}{\texttt{a-lo}}
\newcommand{\fnl}{\texttt{final}}
\global\long\def\Eu{\mathrm{Eu}}
\newcommand{\peter}[1]{{\color{purple}[\textbf{Peter}: #1]}}
\newcommand{\sayan}[1]{{\color{red}[\textbf{Sayan}: #1]}}
\newcommand{\aaron}[1]{{\color{blue}[\textbf{Aaron}: #1]}}
\def\thatchaphol#1{\marginpar{$\leftarrow$\fbox{TS}}\footnote{$\Rightarrow$~{\sf\textcolor{purple}{#1 --Thatchaphol}}}}
\newcommand{\peter}[1]{}
\newcommand{\sayan}[1]{}
\newcommand{\aaron}[1]{}
\def\thatchaphol#1{}
\newcommand\footnoteref[1]{\protected@xdef\@thefnmark{\ref{#1}}\@footnotemark}
\title{Deterministic Dynamic Maximal Matching\\ in Sublinear Update Time}
\author{Aaron Bernstein\thanks{
        New York University,
        \texttt{bernstei@gmail.com}. Supported by Sloan Fellowship, Google Research Fellowship,  NSF Grant 1942010, and Charles S. Baylis endowment at NYU.
    } \and Sayan Bhattacharya\thanks{
        University of Warwick,
        \texttt{S.Bhattacharya@warwick.ac.uk}.
    }\and Peter Kiss\thanks{
        University of Vienna,
        \texttt{peter.kiss@univie.ac.at} This research was funded in whole or in part by the Austrian Science Fund (FWF) 10.55776/ESP6088024, Part of this work was done while at the University of Warwick.
    }  \and Thatchaphol Saranurak\thanks{
        University of Michigan,
        \texttt{thsa@umich.edu}.
        Supported by NSF Grant CCF-2238138. 
        Part of this work was done while at INSAIT, Sofia University ``St. Kliment Ohridski'', Bulgaria. This work was partially funded from the Ministry of Education and Science of Bulgaria (support for INSAIT, part of the Bulgarian National Roadmap for Research Infrastructure).
    } }
\begin{document}

\begin{titlepage}
  \maketitle \pagenumbering{gobble}
  \begin{abstract}
We give a fully dynamic deterministic algorithm for maintaining a maximal matching of an $n$-vertex graph in  $\tilde{O}(n^{8/9})$ amortized update time. 
This breaks the long-standing $\Omega(n)$-update-time barrier on dense graphs, achievable by trivially scanning all incident vertices of the updated edge, and affirmatively answers a major 
 open question repeatedly asked in the literature \cite{BaswanaGS15,BhattacharyaCHN18,dagstuhl22461}.
 We also present a faster randomized algorithm against an adaptive adversary with $\tilde{O}(n^{3/4})$ amortized update time. 

Our approach employs the \emph{edge degree constrained subgraph} (EDCS), a central object for optimizing approximation ratio, in a completely novel way; we instead use it for maintaining a matching that matches \emph{all} high degree vertices in sublinear update time so that it remains to handle low degree vertices rather straightforwardly. 
To optimize this approach, we employ tools never used in the dynamic matching literature prior to our work, including sublinear-time algorithms for matching high degree vertices, random walks on directed expanders, and the monotone Even-Shiloach tree for dynamic shortest paths.
\end{abstract}
		
  \setcounter{tocdepth}{3}
  \newpage
  \phantom{a} 
  
  \vspace{-1.4cm}
  \tableofcontents
  \newpage
\end{titlepage}
\newpage
\pagenumbering{arabic}

\newpage

\section{Introduction}

A \emph{matching} $M$ of a graph $G=(V,E)$ is a set of vertex-disjoint edges. 
We say that $M$ is \emph{maximal} if every free edge $(u,v) \in E\setminus M$ is not vertex-disjoint from $M$. 
That is, a maximal matching is precisely a set of edges that are disjoint yet intersect all other edges.

Although computing a maximal matching admits a simple linear-time greedy algorithm, it presents interesting challenges in various computational models and has been extensively studied since the 80s, including in the PRAM model \cite{karp1985fast,luby1985simple,israeli1986fast,israeli1986improved,awerbuch1989network,alon1986fast,ghaffari2021time}, distributed models \cite{hanckowiak1999faster,hanckowiak2001distributed,barenboim2016locality,fischer2020improved,balliu2021lower}, and, more recently, the MPC model \cite{lattanzi2011filtering,ghaffari2018improved,czumaj2018round,ghaffari2019sparsifying,AssadiBBMS19,behnezhad2019exponentially}. In this paper, we study this problem in the dynamic setting.

\paragraph{Dynamic Maximal Matching.}

In the \emph{fully dynamic maximal matching} problem, we must maintain a maximal matching $M$ of an $n$-vertex $m$-edge graph $G$ undergoing both edge insertions and deletions. The goal is to minimize the \emph{update time }for updating $M$, given an edge update. Since the 90s, Ivkovi\'{c} and Lloyd \cite{ivkovic1993fully} showed the first non-trivial algorithm that has $O((n+m)^{0.7072})$ amortized update time and is deterministic. 

A key milestone was made by Baswana, Gupta, and Sen \cite{baswana2011fully,baswana2018fully} who showed a near-optimal randomized algorithm with $O(\log n)$ amortized update time. This was further improved to an optimal $O(1)$ update time by Solomon \cite{Solomon16}. Different $\poly(\log n)$-update-time algorithms with additional properties were given in \cite{behnezhad2019fully,BernsteinFH21,ghaffari2024parallel}. Unfortunately, all these algorithms share a common drawback; they are all randomized and assume an \emph{oblivious adversary}.\footnote{Dynamic algorithms assume an \emph{oblivious adversary }if they assume that the whole update sequence is fixed from the beginning and is independent of the answers maintained by the algorithm. Without this assumption, we say that the algorithms work against an \emph{adaptive} adversary.} 

The state of the art of deterministic algorithms is much worse. No previous algorithms could even strictly beat the following trivial solution: When $(u,v)$ is inserted, match $u$ and $v$ if both $u$ and $v$ were unmatched. When $(u,v)$ is deleted, check if $u$ or $v$ can be matched by scanning through $O(n)$ neighbors of $u$ and $v$. This straightforwardly takes $O(n)$ update time. While there are algorithms with $\tilde{O}(\alpha)$ worst-case update time where $\alpha$ is the arboricity \cite{neiman2016simple,christiansen2022adaptive}, they still require $\Omega(n)$ update time in dense graphs. Overcoming the $\Omega(n)$ deterministic barrier remains a long-standing open problem.

\paragraph{Success in Dynamic Approximate Maximum Matching.}

The influential result of \cite{baswana2011fully,baswana2018fully} also implies a fully dynamic algorithm for maintaining $2$-approximate maximum matching. Since then, the community has shifted its focus to fully dynamic \emph{approximate maximum} matching in various approximation regimes, including the approximation factors of $(2+\epsilon)$ \cite{BhattacharyaHN16,BhattacharyaCH17,BhattacharyaHN17,ArarCCSW18,CharikarS18,BhattacharyaHI18,BhattacharyaK19,BernsteinFH21,Kiss22}, $(2-\epsilon)$ \cite{BhattacharyaHN16,behnezhad2020fully,roghani2022beating,behnezhad2023dynamic,BhattacharyaKSW23,AzarmehrBR24}, $(1.5+\epsilon)$ \cite{BernsteinS15,BernsteinS16,GrandoniSSU2022,Kiss22}, $(1.5-\epsilon)$ \cite{Behnezhad23}, and $(1+\epsilon)$ \cite{GuptaP13,AssadiBKL23,LiuOffline}. Numerous papers dove into specialized topics, including partially dynamic algorithms \cite{gupta2014maintaining,grandoni20191+,BernsteinGS20,bhattacharya2023dynamicLP,jambulapati2022regularized,AssadiBD22,blikstad2023incremental,chen2023entropy,brand2024almost}, dynamic approximation-preserving reductions from weighted to unweighted graphs \cite{GuptaP13,StubbsW17,BernsteinDL21,bernstein2024matching}, and dynamic rounding algorithms \cite{ArarCCSW18,Wajc20,BhattacharyaK21,BhattacharyaKSW24,dudeja2024note}.

Within the last decade, these 40+ papers have made the dynamic matching problem one of the most actively studied dynamic graph problems. A significant part of this body of work \cite{BhattacharyaHN16,BhattacharyaCH17,BhattacharyaHN17,BhattacharyaHI18,BhattacharyaK19,Kiss22,grandoni2022maintaining} successfully developed deterministic algorithms whose update times match those of their randomized counterparts.

\paragraph{The Barrier Remains.}

But despite the successes mentioned above, the $\Omega(n)$ deterministic barrier for dynamic maximal matching remains unbroken. A high-level explanation is that a maximal matching is far more fragile than an approximate matching. For approximate matching, via standard reductions \cite{AssadiKL19,Kiss22}, it requires $\Omega(\epsilon n)$ edge updates before the maximum matching size changes by a $(1+\epsilon)$ factor. In contrast, a single edge deletion can destroy maximality completely. 

This challenge has become a major open problem in the field and has been repeatedly posed as an open question \cite{BaswanaGS15,BhattacharyaCHN18,dagstuhl22461}. The problem remains unresolved even for randomized algorithms against an \emph{adaptive} adversary. 

In this paper, we break this long-standing $\Omega(n)$ deterministic barrier.
\begin{thm}
\label{thm:main}There is a \emph{deterministic} fully dynamic maximal matching algorithm with $\tilde{O}(n^{8/9})$ amortized update time. Also, there is a randomized fully dynamic maximal matching algorithm with $\tilde{O}(n^{3/4})$ amortized update time that works with high probability against an \emph{adaptive adversary}.\footnote{The $\tilde{O}$ notation hides $\poly\log n$ factors.}
\end{thm}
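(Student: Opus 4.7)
The plan is to fix a degree threshold $d$ (to be balanced at the end) and call a vertex \emph{heavy} if its current degree in $G$ is at least $d$, and \emph{light} otherwise. I would decompose the maintained maximal matching as $M = M_H \cup M_L$, where $M_H$ is required to match every heavy vertex (each edge of $M_H$ has at least one heavy endpoint) and $M_L$ is a maximal matching on the subgraph induced by the light vertices left free by $M_H$. Because every light vertex has at most $d$ neighbors, $M_L$ can be repaired after any insertion/deletion by trivially rescanning the constant number of affected endpoints' light neighborhoods at cost $O(d)$ per update; so the technical meat is maintaining $M_H$ in sublinear amortized time.

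To build $M_H$, I would dynamically maintain an edge-degree-constrained subgraph (EDCS) $H \subseteq G$ with parameter $\beta$ tuned slightly above $d$. Standard EDCS properties force every heavy vertex of $G$ to have $H$-degree near $\beta/2$, which drives a Hall-type argument yielding a matching in $H$ that saturates the entire heavy side. I would take this to be $M_H$. The EDCS itself can be maintained with amortized $\tilde{O}(1)$ edge changes per update by well-known swap/repair procedures, so the only remaining question is how to keep the matching $M_H$ saturating all heavy vertices under the twin perturbations of (i) edges of $H$ appearing/disappearing and (ii) vertices crossing the heavy/light threshold.

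The crux is therefore a \emph{dynamic augmentation} subroutine: whenever a heavy vertex $v$ becomes unmatched in $H$, find an alternating $M_H$-augmenting walk in $H$ from $v$ to a free vertex and flip it. I would implement this in two ways, matching the two halves of the theorem:

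\begin{enumerate}
\item \textbf{Deterministic $\tilde{O}(n^{8/9})$ bound.} Maintain a monotone Even--Shiloach tree on a directed residual/alternating auxiliary graph derived from $(H, M_H)$, rooted at the set of free low-degree endpoints. Since $H$ has $\tilde{O}(n\beta)$ edges and augmenting paths can be kept short (length $\tilde{O}(n/\beta)$) by EDCS structural properties, the total ES-tree work amortizes cleanly; to bootstrap and to restart the ES-tree periodically when its depth budget is exceeded, I would invoke a one-shot sublinear-time algorithm that computes a heavy-saturating matching inside $H$ directly.
\item \textbf{Randomized $\tilde{O}(n^{3/4})$ bound vs. adaptive adversary.} Replace the ES-tree by short random walks in the residual directed graph $(H, M_H)$. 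The key structural lemma to prove is that this residual graph is a directed expander, which I expect to follow from the EDCS-enforced near-regularity of $H$ on its heavy side. A walk of length $\tilde{O}(n^{3/4})$ then hits a free vertex w.h.p., and because the walk exposes only the edges it traverses, the adversary sees no information about the rest of $M_H$, so the standard obliviousness-defeating argument applies.
\end{enumerate}

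Balancing the per-update cost $O(d)$ on the light side against the total augmenting-path cost on the heavy side (which scales roughly as $n\beta \cdot (n/\beta) / n = n$ spread over many updates and further optimized by the amortization lemmas of EDCS maintenance) yields the claimed exponents after choosing $d$ and $\beta$ optimally. The main obstacles I anticipate are (a) establishing the directed expansion of the residual graph strongly enough to power the random walk in the adaptive setting, and (b) managing the interaction between \emph{two} dynamic processes — the EDCS repair and the ES-tree / augmenting subroutine — so that the amortized budgets add rather than multiply, since each EDCS edge swap can itself invalidate $M_H$ at both its endpoints.
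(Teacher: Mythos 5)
Your high-level decomposition (a ``heavy'' vertex set that must be saturated via an EDCS plus ES-trees/random walks, and a residual set handled by brute force) parallels the paper, and the two subroutines you propose for the heavy side -- monotone Even--Shiloach trees on a residual graph and random walks on a directed expander -- are exactly the right tools. However, there is a genuine gap in the pivotal definition: you classify a vertex as heavy if its degree \emph{in $G$} is at least $d$, whereas the structural guarantee you need -- that the heavy set admits a matching in $H$ saturating it, via Hall's condition with slack -- only holds if you classify by degree \emph{in the EDCS $H$}. The claim that ``standard EDCS properties force every heavy vertex of $G$ to have $H$-degree near $\beta/2$'' is false. A vertex $v$ with $\deg_G(v)=n-1$ can perfectly well have $\deg_H(v)=0$: the second EDCS constraint only forces $\deg_H(u)\ge(1-\epsilon)\beta$ for every $G$-neighbor $u$ of $v$, and this is consistent with the first constraint as long as $\deg_H(u)<\beta$. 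Such a $v$ is ``heavy'' in your scheme yet has no incident edge in $H$ along which to be matched, so the Hall-with-slack argument never gets off the ground.

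The paper instead freezes the classification at the start of each phase according to $\deg_{H_{\init}}(v)$, designates vertices with $H$-degree above roughly $B/2$ as those to be saturated, and uses the EDCS edge-degree cap $\deg_H(u)+\deg_H(v)\le B$ to show no two such vertices are $H$-adjacent -- which makes the high--low subgraph \emph{bipartite} with a genuine degree gap and is precisely what powers both Hall-with-slack and the Eulerian-expander random-walk bound. Your proposal would also need a different argument for the light side: a low-$H$-degree vertex can still have $\Omega(n)$ neighbors in $G$, so the brute-force rescan is not $O(d)$; the paper gets a sublinear bound by observing that any edge joining two low-$H$-degree vertices is forced into $H_{\init}$ (EDCS second constraint again), hence there are at most $B$ such neighbors, while the remaining neighbors are either already matched (safe high nodes) or few (damaged and medium nodes). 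Switching the classification from $G$-degree to $H$-degree, and realizing that the resulting low-side arboricity bound is an EDCS consequence rather than a $G$-degree bound, is the main missing idea; the ES-tree and random-walk implementations and the amortization over phases are otherwise in the right spirit.
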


Thus, we give the first deterministic maximal matching algorithm with sublinear update time. Using randomization but \emph{without} an oblivious adversary, we can speed up the algorithm even further.

In general, significant effort has been made towards closing the gap between oblivious and adaptive adversaries in dynamic graph problems, which include dynamic connectivity \cite{holm2001poly,nanongkai2017dynamic,wulff2017fully,nanongkaiSW2017dynamic,goranci2021expander,chuzhoy2020deterministic}, sparsifiers \cite{bernstein2020fully,chen2020fast,goranci2021expander,bhattacharya2022simple,van2022faster,kyng2024dynamic,haeupler2024dynamic}, and shortest paths \cite{bernstein2016deterministic,bernstein2017deterministic,bernsteinC2017deterministic,chuzhoy2019new,gutenberg2020decremental,gutenberg2020deterministic,chuzhoy2021deterministic,chuzhoy2023new,kyng2024dynamic,haeupler2024dynamic}. These algorithms are crucial building blocks in the modern development of fast static graph algorithms \cite{chen2022maximum,abboud2022breaking,abboud2023all,brand2024almost}. We take the first step in this direction for dynamic maximal matching. 

\subsection*{Techniques}

\paragraph{Previous Techniques.}

The key question in designing dynamic maximal matching algorithms is how to handle deletions of a matched edge, as it is relatively easy to handle edge insertions or deletions of an unmatched edge. A trivial solution is that, whenever a vertex $v$ is unmatched, we scan through $v$'s neighbors and try to match $v$. This takes $O(\deg(v))=O(n)$ update time. 

The previous randomized algorithms \cite{baswana2018fully,solomon2016fully,behnezhad2019fully,ghaffari2024parallel} speed this up by trying to ensure that the matched edges are not deleted too often. For intuition, consider the following simplistic scenario when the graph is a star with root $u$ and leaves $v_{1},\dots,v_{n}$, and the adversary only deletes edges. Suppose we sample a random edge $(u,v_{i})$ and include it in the matching. If \emph{the adversary is oblivious to our random choices}, then it will take $\Omega(n)$ edge deletions in expectation before the adversary deletes our matched edge $(u,v_{i})$. Once this happens, even if we spend $O(n)$ time, we can charge this cost to the $\Omega(n)$ deletions, leading to $O(1)$ amortized update time. This is \emph{the} basic idea that all previous randomized algorithms exploited and successfully carried out beyond this simplistic scenario. Unfortunately, it completely fails without an oblivious adversary, and, in particular, for deterministic algorithms.

\paragraph{A Bird's-Eye View of Our Algorithm.}

Our deterministic algorithm is very different. We maintain a vertex set $V^{\star}$ in $o(n)$ update time such that
\begin{enumerate}
\item \label{enu:birdeye low deg}the maximum degree in $G[V\setminus V^{\star}]$ is $o(n)$\footnote{Technically, we actually show that $G[V\setminus V^{\star}]$ has arboricity $o(n)$.}, and 
\item \label{enu:birdeye perfect}we can further maintain a \emph{$V^{\star}$-perfect} matching $M_{\base}$ that matches \emph{all} vertices in $V^{\star}$ in $o(n)$ update time.
\end{enumerate}
Given this, we can maintain a maximal matching $M\supseteq M_{\base}$ in $o(n)$ update time. Indeed, if vertex $v$ becomes unmatched, we only need to scan through $v$'s neighbors in $V\setminus V^{\star}$, which has $o(n)$ vertices, as all vertices in $V^{\star}$ are already matched by $M_{\base}$. Observe that this argument crucially requires that \emph{all} vertices of $V^{\star}$ are matched. Maintaining a perfect matching is normally very difficult in the dynamic setting and there are strong conditional lower bound for this task \cite{HenzingerKNS15,dahlgaard2016hardness}. So we will need to pick a $V^{\star}$ that has additional structure.

\paragraph{New Applications of EDCS.}

Surprisingly, we can obtain $V^{\star}$ from an \emph{edge-degree constrained subgraph} (EDCS), a well-known subgraph $H\subseteq G$ that is useful for $(1.5+\epsilon)$-approximate maximum matching algorithms in the dynamic setting and beyond \cite{bernstein2015fully,bernstein2016faster,assadi2019towards,assadi2019coresets,grandoni2022maintaining,azarmehr2023robust,behnezhad2023dynamic,bhattacharya2022sublinear,Bernstein24}. See \Cref{defn:edcs} for definition.
The outline below presents a completely novel application of this central object in the literature. 

The set $V^{\star}$ is simply the set of vertices with ``high degree'' in the EDCS $H$, which can be explicitly maintained in $o(n)$ time using existing results \cite{GrandoniSSU2022}. By the structure of EDCS, \Cref{enu:birdeye low deg} follows directly. To see why \Cref{enu:birdeye perfect} should hold, we observe that the graph $H_{\hilo}:=E_{H}(V^{\star},V\setminus V^{\star})$ has a \emph{degree gap}. More precisely, for some number $X$ and $\gamma>0$, every vertex in $V^{\star}$ has degree at least $X$ in $H_{\hilo}$, while every vertex in $V\setminus V^{\star}$ has degree at most $(1-\gamma)X$ in $H_{\hilo}$. Consequently, for every set $S\subseteq V^{\star}$, we have $|N_{H_{\hilo}}(S)|\ge(1+\Omega(\gamma))|S|$, i.e., Hall's condition holds \emph{with a slack}. This strong expansion implies the existence of short $O(\log(n)/\gamma)$-length augmenting paths and allows us to maintain a $V^{\star}$-perfect matching in $o(n)$ update time. 

\paragraph{New Toolkit for Dynamic Matching.}

To carry out the above approach, we employ tools never used in the dynamic matching literature prior to our work. For example, to maintain the $V^{\star}$-perfect matching deterministically, we use the monotone Even-Shiloach tree for dynamic shortest paths \cite{EvenS81,King99,henzinger2016dynamic} and, for our faster randomized algorithm, we apply random walks on directed expanders. 
Although not necessary for breaking the $\Omega(n)$ barrier, we also apply a a sublinear-time algorithm for matching high-degree vertices \cite{assadi2024faster} to further optimize our update time.

\paragraph{Concurrent Work.}
Computing maximal matching is among the fundamental symmetry breaking problems, that also include computing maximal independent set, vertex coloring and edge coloring. 
Very recently, Behnezhad, Rajaraman and Wasim \cite{behnezhad2025dynamiccoloring} showed the first randomized fully dynamic algorithm for maintaining a $(\Delta+1)$-vertex coloring against an adaptive adversary in sub-linear update time. 

\paragraph{Organization.} 
We first present the preliminaries in \Cref{sec:notations} and the subroutine for efficiently matching most vertices with almost maximum degree in \Cref{sec:match most}. They are needed for our detailed overview in \Cref{sec:overview}.
Then, we present a complete algorithm for maintaining perfect matching of high-degree vertices in a graph with degree gap in \Cref{sec:decremental} and use it to formally prove our main result in \Cref{new:dyn:maximal}.

\section{Preliminaries}
\label{sec:notations}

\paragraph{Basic Notations.} Consider any graph $G = (V, E)$. We let  $N_G(v) \subseteq E$ denote the set of edges in $G$ that are incident on a node $v \in V$, and we define $\deg_G(v) := |N_G(v)|$ to be the degree of $v$ in $G$. 

\paragraph{Matching.}
A matching $M \subseteq E$ in a graph $G = (V, E)$ is a subset of edges that do not share any common endpoint. We let $V(M)$ denote the set of nodes matched under $M$. Furthermore, for any set $S \subseteq V$, we let $M(S) := \{ (u, v) \in M : \{u, v \} \cap S \neq \emptyset\}$ denote the set of matched edges under $M$ that are incident on some node in $S$.

Let $M$ be a matching in $G$. We say that a vertex $v$ is $M$-free if $v\notin V(M)$, otherwise it is $M$-matched. An \emph{augmenting path} $P$ in $G$ with respect to $M$ is a path in $G$ such that the endpoints are $M$-free and its edges alternate between edges in $M$ and edges not in $M$. Let $M \oplus P$ stand for $(M \setminus P) \cup (P \setminus M)$, the matching obtained by extending $M$ with the augmenting path $P$. We have $M\oplus P$ is a matching of size $|M|+1$.  Every $M$-matched vertex is also matched in $M\oplus P$.

\paragraph{EDCS.}
For any graph $H$ and $e=(u,v)$ where $e$ is not necessary an edge of $H$, the degree of $e$ in $H$ is defined as $\deg_{H}(e)\bydef\deg_{H}(u)+\deg_{H}(v)$. The basis of our algorithm is an edge degree constrained subgraph (EDCS). 
\begin{defn}
\label{defn:edcs}
    An $(B,B^{-})$-EDCS $H$ of $G$ is a subgraph of $G$ such that
\begin{itemize}
\item For each edge $e\in H$, $\deg_{H}(e)\le B$, and
\item For each edge $e\in G\setminus H$, $\deg_{H}(e)\ge B^{-}$.
\end{itemize}
\end{defn}

\begin{thm}
[\cite{GrandoniSSU2022}]\label{thm:edcs}For any parameters $B\le n$ and $\epsilon>0$, there is a deterministic algorithm that, given a graph $G$ with $n$ vertices undergoing both edge insertions and deletions, explicitly maintains a $(B,B(1-\epsilon))$-EDCS $H$ of $G$ using $O(\frac{n}{B\epsilon})$ worst-case update time. Furthermore, there are at most two vertices $u,v \in V$ such that their degree $\deg_H(v), \deg_H(u)$ in the EDCS changes due to the handling of an update.
\end{thm}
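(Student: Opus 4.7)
The plan is to maintain $H$ via a potential-function-guided repair procedure combined with a delayed-work scheduler to achieve worst-case bounds. The basic structure is: after each graph edge update, the edge degrees of incident edges shift by $\pm 1$, possibly violating one of the two EDCS invariants. The algorithm repairs violations by swapping edges in or out of $H$ --- a ``high'' edge in $G\setminus H$ whose edge-degree dropped below $B(1-\epsilon)$ is moved into $H$, and a ``low'' edge in $H$ whose edge-degree rose above $B$ is moved out. Each swap may create new violations on neighboring edges, triggering a cascade that I need to bound.

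For the running-time analysis, I would use a convex potential of the form $\Phi(H) = \sum_{v \in V} \phi(\deg_H(v))$ for an appropriate convex $\phi$ (say a quadratic). A single graph-edge update perturbs $\Phi$ by only a bounded amount, while each internal swap decreases $\Phi$ by $\Omega(\epsilon B)$, because the gap $\epsilon B$ between the upper threshold $B$ and the lower threshold $B(1-\epsilon)$ forces every swap to cross a wide margin in edge degree. Since $\Phi$ is nonnegative and bounded by a polynomial in $n$ and $B$, this yields an amortized bound of $O(1/\epsilon)$ swaps per update; coupled with efficient adjacency data structures for locating violating edges, this should give the target $O(n/(B\epsilon))$ work per update after amortization, using the fact that a repair only needs to examine edges incident on the two endpoints of the current swap.

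To convert the amortized bound to worst-case and to enforce the ``at most two vertex degrees change per update'' clause, I would maintain slightly tighter operational thresholds (for instance $B-1$ and $B(1-\epsilon)+1$) so that each graph update provokes at most one immediate swap in $H$, and push any residual repair onto a small FIFO queue that is serviced in bounded chunks over subsequent updates. The extra slack from the tighter thresholds would guarantee that $H$ remains a valid $(B,B(1-\epsilon))$-EDCS at all times, even while repairs are still pending.

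The hardest step is proving that this deferred-repair scheme never lets the queue grow beyond the slack window while still presenting a valid EDCS at every moment. The delicate interplay between the tightened invariants, the $\Omega(\epsilon B)$ potential decrease per swap, and the fixed scheduling rate is what ultimately enforces both the worst-case update time and the at-most-two-vertices guarantee; making sure that no ``burst'' of graph updates can overwhelm the buffered repair mechanism is the main obstacle in the argument.
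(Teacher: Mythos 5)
This statement is Theorem~\ref{thm:edcs}, which the paper \emph{cites} from \cite{GrandoniSSU2022} and does not prove; there is no in-paper proof to compare against, so what you have written is an independent reconstruction attempt and must stand or fall on its own.

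Your opening move (a convex vertex-degree potential $\Phi(H)=\sum_v\phi(\deg_H(v))$ with $\Omega(\epsilon B)$ decrease per swap, giving $O(1/\epsilon)$ amortized swaps per update) is the correct high-level flavor for \emph{amortized} EDCS maintenance, going back to Bernstein--Stein. But the proposal has three genuine gaps, and together they mean it does not establish the theorem as stated.

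First, you never derive the $n/B$ factor. You claim the amortized swap count $O(1/\epsilon)$ ``coupled with efficient adjacency data structures'' yields $O(n/(B\epsilon))$, but locating the next violating edge by examining the neighborhood of an endpoint costs up to $\deg_G(v)=O(n)$, which would give $O(n/\epsilon)$, not $O(n/(B\epsilon))$. The factor-$B$ saving is a real structural fact about EDCS repairs that needs its own argument (essentially, the repair sequence can be organized so that each step touches a vertex whose relevant neighborhood is small, or so that the total search cost telescopes); asserting it via ``efficient adjacency data structures'' is not a proof.

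Second, and more seriously, your deamortization scheme does not give the ``at most two vertices change $H$-degree per update'' guarantee. A single swap (delete $(u,v)$ from $H$, insert $(u',v')$ into $H$) already perturbs up to four vertex degrees, and a FIFO queue that drains a bounded chunk of deferred swaps per update would in general perturb even more. The way \cite{GrandoniSSU2022} get the two-vertex property is by structuring each repair as a rotation along a single alternating path in which consecutive swaps share an endpoint: interior vertices of the path lose one $H$-edge and gain one, so their $H$-degree is unchanged, and only the two endpoints of the path shift. This structural choice is precisely what the downstream argument in the paper needs (cf.~the $O(1)$ node-update claims in Sections~\ref{sec:overview:safe} and~\ref{overview:sec:adjunct}), and your scheme does not provide it.

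Third, you yourself flag the correctness of the delayed-repair queue (that $H$ remains a valid $(B,B(1-\epsilon))$-EDCS while repairs are pending, and that no burst overwhelms the buffer) as the ``main obstacle'' without resolving it. Tightening the thresholds to $B-1$ and $B(1-\epsilon)+1$ only buys a slack of $O(1)$, not a slack that can absorb a queue of $\Theta(1/\epsilon)$ pending swaps; so as written the invariant can be violated mid-queue. The cited result avoids this entirely by proving a \emph{worst-case} bound of $O(1/\epsilon)$ on the length of the single repair path after each update, so no deferral is needed.

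In short: right potential-function instinct for the amortized warm-up, but missing the alternating-path rotation structure that delivers both the worst-case bound and the two-vertex stability property, and missing the argument for the $n/B$ factor.
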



\paragraph{Graph Access.}
We say that an algorithm has \emph{adjacency matrix query} access to graph $G$ if it can answer queries with inputs $(u,v) \in V \times V$ returning true iff $(u,v) \in E$ in $O(1)$ time. We say that an algorithm has \emph{adjacency list query} access to graph $G$ if it can find the degree of any vertex and answer queries with input $(v,i) \in V \times [\deg_G(v)]$ returning the $i$-th neighbour of vertex $v$ in $G$ according to some arbitrary ordering in $O(\log n)$ time. 


\section{Match Most Vertices with Almost Maximum Degree}
\label{sec:match most}

\thatchaphol{General comment: we use ``it is easy to'' and ``it is not hard to'' quite often. I am afraid people will be annoyed.}
Next, we give a useful observation that there always exists a matching that matches most all vertices with almost maximum degree. Our deterministic and randomized implementations of the algorithm of Lemma~\ref{lem:match almost max degree} rely on a black box application of \cite{elkin2024deterministic} and a white box application of \cite{assadi2024faster}. As the lemma can mostly be concluded from existing results in literature we defer its proof to Appendix~\ref{app:proof:lem:match almost max degree}.
\begin{lem}
\label{lem:match almost max degree}Let $G=(V,E)$ be a graph with maximum degree at most $\Delta$ which can be accessed through both adjacency list and matrix queries. Let $V_{\kappa}$ denote the set of all nodes in $G$ with degree at least $(1-\kappa)\Delta$. Then, for any $\kappa > 0$ there exists a matching $M \subseteq E$ that matches all but $2\kappa n$ many nodes in $V_{\kappa}$. 
\begin{itemize}
\item If $\Delta\ge\frac{4}{\kappa}$, then $M$ can be computed in $O(n\Delta\kappa\log^2 n)$  time with high probability.
\item $M$ can be deterministically computed in $O(m\frac{\log n}{\kappa}) = O(n \Delta \cdot \frac{\log n}{\kappa})$ time. 
\end{itemize}
\end{lem}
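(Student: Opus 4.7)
The plan is to edge-color $G$ properly with $\Delta+1$ colors (Vizing's theorem); each color class is a matching. Every $v\in V_{\kappa}$ has degree at least $(1-\kappa)\Delta$ and hence appears in at least $(1-\kappa)\Delta$ color classes. Summing this over $V_{\kappa}$ and averaging over the $\Delta+1$ classes, some color class $M^{\dagger}$ must cover at least $\tfrac{(1-\kappa)\Delta}{\Delta+1}|V_{\kappa}|$ vertices of $V_{\kappa}$. Consequently $|V_{\kappa}\setminus V(M^{\dagger})|\le (\kappa+\tfrac{1}{\Delta})|V_{\kappa}|\le 2\kappa n$ whenever $\kappa\Delta\ge 1$, a mild condition that is automatic in both algorithmic bullets (it follows from $\Delta\ge 4/\kappa$ in the randomized case, and in the deterministic case the algorithm below will establish existence outright).

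\textbf{Deterministic algorithm.} The plan is to reduce to approximate maximum weight matching. Assign each edge $e=(u,v)$ the weight $w(e)=|\{u,v\}\cap V_{\kappa}|\in\{0,1,2\}$; for any matching $M$ one has $\sum_{e\in M}w(e)=|V_{\kappa}\cap V(M)|$, so maximizing weighted matching is the same as maximizing $V_{\kappa}$-coverage. Invoking \cite{elkin2024deterministic} as a black box to compute a $(1-\kappa)$-approximate maximum weight matching runs in $O(m\log n/\kappa)=O(n\Delta\log n/\kappa)$ time. Comparing against the Vizing class $M^{\dagger}$ gives $|V_{\kappa}\cap V(M)|\ge (1-\kappa)\cdot |V_{\kappa}\cap V(M^{\dagger})|\ge (1-\kappa)(1-\kappa-\tfrac{1}{\Delta})|V_{\kappa}|$, so $|V_{\kappa}\setminus V(M)|\le 2\kappa n$ after absorbing the $\tfrac{1}{\Delta}$ term.

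\textbf{Randomized sublinear algorithm.} Hitting the target $O(n\Delta\kappa\log^{2}n)$, which is genuinely sublinear in $m=n\Delta$, rules out a black-box appeal since we cannot even read most edges. I would instead white-box the sublinear-time matching routine of \cite{assadi2024faster}. Its high-level strategy is to sparsify the graph via random sampling and run a classical matching routine on the sparse subgraph; the adaptation here concentrates the sampling on edges incident to $V_{\kappa}$, and exploits the hypothesis $\Delta\ge 4/\kappa$ to guarantee that each $V_{\kappa}$-vertex retains $\Omega(\kappa\Delta)=\Omega(1)$ sampled edges in expectation. Standard Chernoff/union bound arguments then ensure that every $V_{\kappa}$-vertex is matched with high probability, with only $O(\log^{2}n)$ slack over the sampling budget $n\Delta\kappa$.

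\textbf{Main obstacle.} The deterministic piece is essentially mechanical once the weighted encoding is in place. The principal difficulty lies in the randomized sublinear version: the target runtime lies below the total $V_{\kappa}$-edge count in the worst case, which forces us to open up \cite{assadi2024faster} and redo the failure-probability analysis so that per-vertex bad events union-bound to simultaneously cover all of $V_{\kappa}$. This is also where the $\Delta\ge 4/\kappa$ assumption is unavoidable --- it is exactly the regime in which per-vertex sampling succeeds with probability sharp enough for the union bound over $n$ vertices to go through.
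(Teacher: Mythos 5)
Your proposal diverges from the paper's proof in both the deterministic and randomized parts, and each divergence hides a genuine error.

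\paragraph{Deterministic part.} You invoke \cite{elkin2024deterministic} as a $(1-\kappa)$-approximate \emph{maximum weight matching} algorithm, but that reference is a deterministic $\Delta(1+\epsilon)$-\emph{edge-coloring} algorithm running in $O(m\log n/\epsilon)$ time. The paper's actual argument uses the coloring directly: compute a $\Delta(1+\kappa)$-edge-coloring, note that each $v\in V_\kappa$ belongs to at least $(1-\kappa)\Delta$ of the $\Delta(1+\kappa)$ color classes and hence to a $\tfrac{1-\kappa}{1+\kappa}\geq 1-2\kappa$ fraction of them, so a uniformly random class leaves at most $2\kappa|V_\kappa|\leq 2\kappa n$ of $V_\kappa$ uncovered in expectation, and by pigeonhole some class achieves this deterministically; a linear scan then finds it. Your weighted-matching reduction (assigning weight $|\{u,v\}\cap V_\kappa|$) is a fine idea in the abstract, but it cites a capability the reference does not provide, and the extra $(1-\kappa)$ approximation loss compounded with the $1/\Delta$ slack from the Vizing class does not cleanly land on the constant $2\kappa n$ without an unstated lower bound on $\Delta$.

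\paragraph{Randomized part.} You assert that a black-box appeal to \cite{assadi2024faster} is ``ruled out'' because the target $O(n\Delta\kappa\log^2 n)$ is sublinear in $m$, so ``we cannot even read most edges.'' This premise is wrong: the \cite{assadi2024faster} subroutine is itself sublinear-time --- it runs in $O(n\log n)$ expected time given adjacency matrix/list query access and guarantees that each vertex of \emph{maximum} degree $\Delta$ is left unmatched with probability at most $2/\Delta$. The trick you are missing is the dummy-vertex augmentation: add $n\kappa$ dummy vertices and (round-robin) at most $\Delta\kappa$ dummy edges per $V_\kappa$-vertex so that every $v\in V_\kappa$ has degree exactly $\Delta$ in the augmented graph $G'$; this costs $O(n\Delta\kappa)$ time. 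Then run \cite{assadi2024faster} on $G'$ as a black box: the expected number of unmatched $V_\kappa$-vertices is at most $2n/\Delta\leq n\kappa/2$ (here the hypothesis $\Delta\geq 4/\kappa$ enters --- not for a Chernoff/union bound as you describe, but simply so Markov's inequality gives a constant success probability per attempt). Repeating $O(\log n)$ times with early cutoffs yields whp, and then intersecting with $E$ loses at most the $n\kappa$ dummy-matched vertices, giving $2\kappa n$ total. The paper's own remark confirms it deliberately uses the black-box route; the white-box modification you sketch is indeed possible (the paper says so) but is neither necessary nor what the stated runtime relies on.

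\paragraph{Existence.} Your Vizing averaging argument over $\Delta+1$ colors requires $\kappa\Delta\geq 1$ (as you note), so it does not by itself establish the unconditional ``for any $\kappa>0$'' existence claim; the paper obtains existence implicitly as a byproduct of its deterministic $\Delta(1+\kappa)$-coloring argument, which is quantitatively tighter because the number of classes is $\Delta(1+\kappa)$ rather than $\Delta+1$.
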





\begin{remark}
While the specific bounds of Lemma \ref{lem:match almost max degree} rely on recent advances in static coloring algorithms, any bound of the form $O(m\poly(\Delta/\kappa))$ would translate to  a sublinear update time for our main result of fully dynamic maximal matching. For example, one could instead use the classical coloring algorithm of Gabow~et~al.~\cite{gabow1985edgecoloring}, or one could start with a trivial fractional matching and use existing tools to round it to an integral matching one (see e.g. \cite{BhattacharyaKSW24} ).
\end{remark}

\begin{remark}
The randomized algorithm of Lemma~\ref{lem:match almost max degree} can be improved to have a running time of $\tilde{O}(n)$ using a white-box modification of \cite{assadi2024faster}. As this running time difference doesn't ultimately end up affecting the update time of our final algorithm, we opted for the slower subroutine of Lemma~\ref{lem:match almost max degree}, because it can be obtained from \cite{assadi2024faster} in a black-box way.
\end{remark}

\section{Overview: Our Dynamic Maximal Matching Algorithm}
\label{sec:overview}
\label{sec:decremental:overview}

For ease of exposition, we first present the algorithm in a decremental setting, where the input graph only undergoes edge deletions. We will show how to handle fully dynamic updates without increasing update time at the end  in \cref{sec:fullydynamic:overview}.


\paragraph{The Framework.} 
We fix two parameters $B \in [n]$ and $\epsilon \in (0, 1)$ whose values will be determined later on, and define 
\begin{equation}
\label{overview:eq:delta}
\delta := 100 \epsilon.
\end{equation}
The reader should think of $\epsilon = 1/n^{\beta}$ and $B = 1/\epsilon^2$ for some absolute constant $0 < \beta < 1$.

\medskip
Let $G = (V, E)$ be the input  graph with $n$ nodes, undergoing edge deletions. We maintain a $(B, (1-\epsilon)B)$-EDCS $H := (V, E(H))$ of $G$ at all times, as per~\cref{thm:edcs}. This incurs an update time of $O\left(\frac{n}{B\epsilon}\right)$. The EDCS $H$ is maintained explicitly, and  we can make adjacency-list and adjacency-matrix queries to $H$ whenever we want. We maintain the adjacency-lists of all vertices in $H$ as binary search trees. Hence, we may answer an adjacency-list and adjacency-matrix query to $H$ in $O(\log n)$ and $O(1)$ time respectively.

Our dynamic algorithm  works in {\bf phases}, where each phase consists of $\delta n$ consecutive updates in $G$. Let $G_{\init} = (V, E_{\init})$ and $H_{\init} = (V, E(H_{\init}))$ respectively denote the status of the input graph $G$ and the EDCS $H$ at the start of a given phase. Throughout the duration of the phase, we let $H_{\core}$ denote the subgraph of $H_{\init}$ restricted to the edges in $G$. Specifically, we define $H_{\core} := (V, E(H_{\core}))$, where $E(H_{\core}) := E(H_{\init}) \cap E$. Note that $H_{\core}$ is a decremental graph within a phase, i.e., the only updates that can occur to $H_{\core}$ are edge deletions.

\paragraph{Organization of the Overview.} In~\cref{sec:overview:assume}, we present a classification of nodes based on their degrees in the EDCS $H$ at the start of a phase, under one simplifying assumption (see~\cref{assume:gap}). To highlight the main ideas, we first describe and analyze our algorithm under this assumption.~\cref{sec:overview:safe} shows how to maintain a \emph{base matching} $M_{\base}$, which matches \emph{all} the nodes of a certain type (that have large EDCS degrees). Next,~\cref{overview:sec:adjunct} shows how to maintain a maximal matching in the subgraph of $G$ induced by the nodes that are free under $M_{\base}$. In~\cref{sec:together}, we put together the algorithms from~\cref{sec:overview:safe,overview:sec:adjunct}, and derive~\cref{thm:main}. Next, in~\cref{sec:assume:gap}, we explain how to deal with the scenario when the simplifying assumption we made earlier (see~\cref{assume:gap}) does not hold.~\cref{sec:outline:proof} sketches the proof outline of a key lemma, which allows us to maintain the matching $M_{\base}$ efficiently in~\cref{sec:overview:safe}. Finally,~\cref{sec:fullydynamic:overview} shows how to extend our algorithm so that it can deal with fully dynamic updates.

\subsection{Classification of Nodes and a Simplifying Assumption}
\label{sec:overview:assume}

We will outline how to maintain a maximal matching $M_{\fnl}$ in $G$ during a phase. We start by making one simplifying assumption stated below. Towards the end of this section, we  explain how to adapt our algorithm when this assumption does not hold (see~\cref{sec:assume:gap}).

\begin{assumption}
\label{assume:gap} For every node $v \in V$, either $\deg_{H_{\init}}(v) > \left(\frac{1}{2} + \delta \right)B$ or $\deg_{H_{\init}}(v) < \left(\frac{1}{2} - \delta \right)B$.
\end{assumption}

In other words, this establishes a small ``degree-gap'' within the EDCS $H$ at the start of the phase, by asserting that there does \emph{not} exist any node $v \in V$ with $\left(\frac{1}{2} - \delta\right)B \leq \deg_{H_{\init}}(v) \leq \left(\frac{1}{2}+\delta \right)B$. We say that a node $v \in V$ is {\bf very high} if $\deg_{H_{\init}}(v) > \left(\frac{1}{2} + \delta \right)B$, and {\bf low} if $\deg_{H_{\init}}(v) < \left(\frac{1}{2} - \delta \right)B$. We let $V_{\vhi}$ and $V_{\lo}$ respectively denote the sets of very high and low nodes. By~\cref{assume:gap}, the node-set $V$ is partitioned into $V_{\vhi} \subseteq V$ and $V_{\lo} = V \setminus V_{\vhi}$. Our algorithm will crucially use two properties that arise out of this classification of nodes, as described below.

\begin{prop}
\label{overview:prop:fact}The following properties hold.
\begin{enumerate}
\item \label{overview:enu:very-high} Consider any edge $(u, v) \in E(H_{\init})$ where $u$ is very-high. Then $v$ must be low. 
\item \label{overview:enu:low} Consider any edge $e  \in E_{\init}$ whose both endpoints are low. Then  $e$ must appear in $H_{\init}$.
\end{enumerate}
\end{prop}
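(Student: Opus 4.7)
The plan is to derive both parts directly from the two defining inequalities of the $(B,(1-\epsilon)B)$-EDCS property applied to $H_{\init}$, together with the degree-gap bounds supplied by Assumption~\ref{assume:gap} and the choice $\delta = 100\epsilon$ from~(\ref{overview:eq:delta}). Because $H_{\init}$ is an EDCS of $G_{\init}$ (not of $G$), both inequalities are available for edges of $E_{\init}$, which is exactly the setting of both claims.

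For Part~\ref{overview:enu:very-high}, I would invoke the upper-bound side of the EDCS definition. Since $(u,v) \in E(H_{\init})$, Definition~\ref{defn:edcs} yields $\deg_{H_{\init}}(u) + \deg_{H_{\init}}(v) \le B$. The hypothesis that $u$ is very-high gives $\deg_{H_{\init}}(u) > (1/2 + \delta)B$, so rearranging forces $\deg_{H_{\init}}(v) < (1/2 - \delta)B$. By Assumption~\ref{assume:gap}, every vertex lies in either $V_{\vhi}$ or $V_{\lo}$, and the strict inequality we just obtained rules out $v \in V_{\vhi}$, placing $v$ in $V_{\lo}$.

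For Part~\ref{overview:enu:low}, I would argue by contradiction using the lower-bound side of the EDCS definition. Suppose $e = (u,v) \in E_{\init}$ has both endpoints in $V_{\lo}$, yet $e \notin E(H_{\init})$. Then Definition~\ref{defn:edcs} gives $\deg_{H_{\init}}(u) + \deg_{H_{\init}}(v) \ge (1-\epsilon)B$. On the other hand, each of $u,v$ being low means $\deg_{H_{\init}}(u) < (1/2 - \delta)B$ and $\deg_{H_{\init}}(v) < (1/2 - \delta)B$, so the sum is strictly less than $(1 - 2\delta)B = (1 - 200\epsilon)B$. Since $200\epsilon > \epsilon$ we have $(1-200\epsilon)B < (1-\epsilon)B$, which contradicts the EDCS lower bound. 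Hence $e$ must already belong to $H_{\init}$.

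There is no real obstacle here; both parts collapse to a single line of arithmetic. The one point worth noting is that the degree-gap parameter $\delta$ must exceed the EDCS slack $\epsilon$ by enough to make the Part~\ref{overview:enu:low} contradiction tight: strictly speaking any factor larger than $1/2$ would do, and the factor $100$ in $\delta = 100\epsilon$ from~(\ref{overview:eq:delta}) is chosen with room to spare (the extra slack will presumably be consumed by later parts of the algorithm that exploit expansion-type properties of the high/low bipartition).
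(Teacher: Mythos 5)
Your proof is correct and follows essentially the same arithmetic as the paper: Part~\ref{overview:enu:very-high} via the EDCS upper bound $\deg_{H_{\init}}(u)+\deg_{H_{\init}}(v)\le B$, and Part~\ref{overview:enu:low} via the EDCS lower bound and $2\delta>\epsilon$. The only cosmetic differences are that you phrase Part~\ref{overview:enu:low} as a contradiction rather than as a direct application of the contrapositive, and in Part~\ref{overview:enu:very-high} you route through Assumption~\ref{assume:gap} to conclude $v\in V_{\lo}$, whereas the paper simply notes $\deg_{H_{\init}}(v)<(1/2-\delta)B$ is itself the definition of low, so the assumption is not needed there.
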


\begin{proof} First, we prove \cref{overview:enu:very-high}. Note that since $(u, v) \in E(H_{\init})$, we have $\deg_{H_{\init}}(u) + \deg_{H_{\init}}(v) \leq B$ (see~\cref{defn:edcs}). Furthermore, since $u$ is very-high, we have $\deg_{H_{\init}}(u) > \left( \frac{1}{2} + \delta \right)B$. Thus, it follows that $\deg_{H_{\init}}(v) \leq B - \deg_{H_{\init}}(u) < \left( \frac{1}{2} - \delta \right)B$, and so $v$ must be a low node.

Next, we prove \cref{overview:enu:low}. Let $e = (u, v)$, where both $u, v$ are low nodes. Then $\deg_{H_{\init}}(u) < \left(\frac{1}{2} - \delta\right)B$ and $\deg_{H_{\init}}(v) < \left(\frac{1}{2} - \delta  \right)B$, and hence $\deg_{H_{\init}}(u) + \deg_{H_{\init}}(v) < (1- 2\delta) B < (1-  \epsilon)B$. As $H_{\init}$ is a $(B, (1-\epsilon)B)$-EDCS of $G_{\init}$, it follows that $(u, v) \in E\left(H_{\init}\right)$ (see~\cref{defn:edcs}).
\end{proof}

We next introduce one last category of nodes, as defined below.

\begin{definition}
\label{overview:def:safe}
Consider a very high node $v \in V_{\vhi}$. At any point in time within the current phase, we say that $v$ is {\bf damaged} if  $\deg_{H_{\core}}(v) < \left( \frac{1}{2} + \delta - 2 \epsilon \right)B$, and {\bf safe} otherwise. We let $V_{\dmg} \subseteq V_{\vhi}$ and $V_{\safe} := V_{\vhi} \setminus V_{\dmg}$ respectively denote the sets of damaged and safe nodes.
\end{definition}

Recall that   $\deg_{H_{\init}}(v) > \left(\frac{1}{2} + \delta  \right)B$ for all  nodes $v \in V_{\vhi}$. Thus,  for  such a node to get damaged at least $2\epsilon B$ many of its incident edges must be deleted since the start of the phase. Since each phase lasts for $\delta n$ updates in $G$ and $\delta = 100 \epsilon$ (see~(\ref{overview:eq:delta})), we get the following important corollary. 

\begin{cor}
\label{overview:cor:damaged}
We always have $\left|V_{\dmg}\right| \leq \frac{2\delta n}{2\epsilon B} = O\left(\frac{n}{B} \right)$. Furthermore, at the start of a phase we have $V_{\dmg} = \emptyset$. During the phase, the subset $V_{\dmg}$ grows monotonically over time.
\end{cor}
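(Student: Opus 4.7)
The plan is to verify the three claims separately, all of which follow from essentially the same observation: $H_{\core}$ is a decremental subgraph within a phase, losing exactly those edges of $H_{\init}$ that get deleted from $G$.

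First, I would handle the base case. At the very start of a phase, $E \supseteq E(H_{\init})$, so $E(H_{\core}) = E(H_{\init}) \cap E = E(H_{\init})$, meaning $H_{\core} = H_{\init}$. Consequently, every very-high node $v \in V_{\vhi}$ satisfies $\deg_{H_{\core}}(v) = \deg_{H_{\init}}(v) > \left(\tfrac{1}{2}+\delta\right)B > \left(\tfrac{1}{2}+\delta-2\epsilon\right)B$, and by Definition~\ref{overview:def:safe} no very-high node is damaged. Thus $V_{\dmg} = \emptyset$ initially.

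Next, I would establish monotonicity. Since only edge deletions occur in $G$ during the phase, and $E(H_{\core}) = E(H_{\init}) \cap E$ can only lose edges (never gain), the function $\deg_{H_{\core}}(v)$ is non-increasing in time for every node $v$. Therefore, once a very-high node $v$ satisfies $\deg_{H_{\core}}(v) < \left(\tfrac{1}{2}+\delta-2\epsilon\right)B$, this inequality continues to hold for the remainder of the phase, so $V_{\dmg}$ grows monotonically.

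For the size bound, I would use a charging argument on the total $H_{\core}$-degree loss. For any $v \in V_{\dmg}$ we have $\deg_{H_{\init}}(v) > \left(\tfrac{1}{2}+\delta\right)B$ but $\deg_{H_{\core}}(v) < \left(\tfrac{1}{2}+\delta-2\epsilon\right)B$, so at least $2\epsilon B$ edges of $H_{\init}$ incident on $v$ were deleted from $G$ since the start of the phase. Summing over damaged nodes gives a total contribution of at least $2\epsilon B \cdot |V_{\dmg}|$ to the cumulative degree loss in $H_{\core}$. On the other hand, each of the $\delta n$ edge deletions in $G$ within the phase can remove at most one edge from $H_{\core}$, which in turn decreases the $H_{\core}$-degree of at most two vertices by one each. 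Hence the total degree loss is at most $2\delta n$, giving
\[
2\epsilon B \cdot |V_{\dmg}| \;\le\; 2\delta n, \qquad \text{i.e.,} \qquad |V_{\dmg}| \;\le\; \frac{2\delta n}{2\epsilon B} \;=\; \frac{100\, n}{B} \;=\; O\!\left(\frac{n}{B}\right),
\]
where we used $\delta = 100\epsilon$ from~\eqref{overview:eq:delta}. No step should present a genuine obstacle; the only subtlety to keep straight is that one must charge against edge deletions in $G$ rather than against arbitrary EDCS changes, which is valid precisely because $H_{\core}$ is defined relative to the \emph{fixed} snapshot $H_{\init}$ and hence ignores the (possibly many) structural updates to $H$ itself within the phase.
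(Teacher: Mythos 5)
Your proof is correct and follows essentially the same charging argument the paper gives in the paragraph immediately preceding the corollary: each vertex entering $V_{\dmg}$ must have lost at least $2\epsilon B$ of its $H_{\init}$-incident edges, and each of the at most $\delta n$ edge deletions in the phase decreases the $H_{\core}$-degree of at most two vertices by one each. Your observation that the charging is against deletions in $G$ (not against internal updates to the dynamic EDCS $H$) because $H_{\core}$ is frozen relative to the snapshot $H_{\init}$ is exactly the right subtlety to flag and is what the paper is implicitly relying on.
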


\subsection{Maintaining a Matching of All the Safe Nodes}
\label{sec:overview:safe}

One of our key technical insights is that the degree gap in $H_{\core}$ allows us to efficiently maintain a matching $M_{\base} \subseteq E(H_{\core})$ that matches \underline{all} safe nodes, as per~\cref{overview:lm:base} below. With appropriate setting of parameters, the update time of both of the algorithms (deterministic and randomized) in~\cref{overview:lm:base} are sublinear in $n$.

\begin{lemma}
\label{overview:lm:base}
We can maintain a matching $M_{\base} 
\subseteq E(H_{\core})$ that satisfy the following properties.
\begin{enumerate}
\item \label{overview:enum:base:1} Every safe  node is matched  under $M_{\base}$, i.e., $V_{\safe} \subseteq V(M_{\base})$.
\item \label{overview:enum:base:2} Every update in $G$ that is internal to a phase (i.e., excluding those updates where one phase ends and the next phase begins) leads to at most $O(1)$  node insertions/deletions in $V(M_{\base})$.
\end{enumerate}
The matching $M_{\base}$ can be maintained either by
\begin{itemize}
\item a deterministic algorithm with $\tilde{O}\left(B \cdot n^{1/2} \cdot \delta^{-3/2}\right)$ amortized update time; or
\item  a randomized algorithm with $\tilde{O}\left(B \cdot \delta^{-1} + \delta^{-3} \right)$ amortized update time. The algorithm is correct with high probability against an adaptive adversary.
\end{itemize}
\end{lemma}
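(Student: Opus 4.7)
The plan is to exploit a strong degree gap inside $H_{\core}$ that forces the existence of short $M_{\base}$-augmenting paths, and to maintain $M_{\base}$ by repeatedly augmenting along such paths---using a monotone Even--Shiloach tree in the deterministic case and random walks on a directed expander in the randomized case. First I would establish the degree gap: by part~\ref{overview:enu:very-high} of \cref{overview:prop:fact}, every $H_{\init}$-edge incident on a very-high vertex has its other endpoint in $V_{\lo}$, so the entire $H_{\core}$-neighborhood of any safe vertex lies inside $V_{\lo}$. By \cref{overview:def:safe}, every $v \in V_{\safe}$ satisfies $\deg_{H_{\core}}(v) \ge (\tfrac{1}{2} + \delta - 2\epsilon)B$, while every $u \in V_{\lo}$ satisfies $\deg_{H_{\core}}(u) \le \deg_{H_{\init}}(u) < (\tfrac{1}{2} - \delta)B$. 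Since $\delta = 100\epsilon$, a double-counting argument then gives $|N_{H_{\core}}(S)| \ge (1+\Omega(\delta))|S|$ for every $S \subseteq V_{\safe}$; that is, Hall's condition holds with multiplicative slack. This both guarantees existence of a $V_{\safe}$-perfect matching and, by a standard Hopcroft--Karp-style BFS-layering argument, forces the shortest $M_{\base}$-augmenting path from any free safe vertex to have length $O(\delta^{-1} \log n)$.

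Next I would set up the maintenance algorithm. At the start of each phase, initialize $M_{\base} \subseteq E(H_{\core})$ as a $V_{\safe}$-perfect matching via Hopcroft--Karp. Within the phase $V_{\safe}$ only shrinks (\cref{overview:cor:damaged}), and by \cref{thm:edcs} each $G$-update induces $O(1)$ edge changes in $H_{\core}$. Whenever a deletion leaves some $v \in V_{\safe}$ unmatched, the expansion above yields an augmenting path of length $O(\delta^{-1}\log n)$, which I apply; the flip changes matched status only at its two endpoints, so the $O(1)$-changes-per-update requirement is met. For the deterministic implementation I would orient $E(H_{\core})$ into an alternating digraph (matched edges from $V_{\lo}$ to $V_{\safe}$, others reversed) and maintain a monotone Even--Shiloach tree of depth $O(\delta^{-1}\log n)$ from a super-source over the currently free safe vertices; augmenting-path searches reduce to BFS queries inside this tree. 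Balancing the standard $O(m\cdot d)$ ES-work bound against periodic rebuilds (triggered every $\Theta(n^{1/2}/\delta^{1/2})$ updates to absorb the cost of augmentation-induced orientation flips) should yield the claimed $\tilde{O}(B\cdot n^{1/2}\cdot \delta^{-3/2})$ amortized bound. For the randomized implementation I would instead issue $\tilde{O}(1)$ random walks of length $O(\delta^{-1}\log n)$ in the same alternating digraph; the degree gap makes this digraph a (directed) expander, so each walk reaches a free low vertex with constant probability. Each augmentation then costs $\tilde{O}(\delta^{-3})$, plus $\tilde{O}(B/\delta)$ to keep the neighborhood structures up to date, giving the claimed $\tilde{O}(B\delta^{-1}+\delta^{-3})$ bound. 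Since each walk is freshly sampled and reveals only local structural information, correctness is preserved against an adaptive adversary.

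The main obstacle I anticipate is the amortization: per-deletion shortest-path searches in the ES tree could in principle cost $\Omega(n)$ if one naively maintains an independent structure for every free safe source, and augmentations locally violate the monotonicity the ES tree relies on by flipping orientations of $O(\delta^{-1}\log n)$ edges at a time. The key leverage for making the amortization work is that $|V_{\dmg}|$ grows monotonically and is bounded by $O(n/B)$ (\cref{overview:cor:damaged}), while each phase contains only $\delta n$ updates; thus the global ES/scaffolding work can be charged across many deletions. Turning this intuition into the precise $\tilde{O}(B\,n^{1/2}\,\delta^{-3/2})$ and $\tilde{O}(B\delta^{-1}+\delta^{-3})$ budgets---especially handling how augmentations interact with the monotonicity of the ES tree---is where the real engineering effort seems to lie, and I would likely proceed via a stage-based rebuild schedule together with a potential-function analysis tuned to the two target bounds.
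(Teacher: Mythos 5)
Your proposal follows essentially the same route as the paper: degree gap $\Rightarrow$ Hall's condition with slack $\Rightarrow$ short augmenting paths, maintained deterministically by a monotone Even--Shiloach tree with periodic rebuilds and randomly by random walks on a directed Eulerian expander. The high-level architecture is right, and you correctly invoke \cref{overview:prop:fact} and \cref{overview:cor:damaged}. However, two concrete points need fixing before the claimed bounds go through.

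First, initializing $M_{\base}$ at the start of each phase by Hopcroft--Karp is too slow for the randomized target. Even exploiting the $O(\delta^{-1}\log n)$-length augmenting paths, Hopcroft--Karp on $H_{\core}$ (which has $\Theta(nB)$ edges) costs $\tilde{O}(nB/\delta)$, i.e.\ $\tilde{O}(B\delta^{-2})$ amortized over the $\delta n$ updates of a phase; with the parameter choices used to realize $\tilde{O}(n^{3/4})$ one has $\delta B = n^{1/4}\gg 1$, so $\tilde{O}(B\delta^{-2})$ strictly dominates the claimed $\tilde{O}(B\delta^{-1}+\delta^{-3})$. The paper sidesteps this by building a near-perfect matching of the high-degree vertices via the sublinear-time algorithm of \cref{lem:match almost max degree} (cost $\tilde{O}(nB\delta)$, i.e.\ amortized $\tilde{O}(B)$), leaving only $O(\delta n)$ unmatched safe vertices to be fixed by $O(\delta n)$ random-walk augmentations at $\tilde{O}(\delta^{-3})$ each. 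This choice of initialization is not a cosmetic detail; it is what makes the randomized $\tilde{O}(B\delta^{-1}+\delta^{-3})$ bound hold.

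Second, you attribute the ES-tree difficulty to ``augmentations locally violate monotonicity.'' In fact augmentations are benign: inserting the reversed path before deleting the original path does not decrease any distance to the sink, so they are legal ES-insertions. The genuine problem is that when an adversarial deletion removes a matched edge $(u,v)$ and $v$ becomes $M$-free, the residual graph must gain an edge $(v,t)$, and a unit-weight insertion there would decrease $\dist(v,t)$. The paper instead inserts $(v,t)$ with weight $\dist_{\gres}(v,t)$, which is distance-preserving, but these weights then compound; a careful affected/unaffected accounting plus rebuilding every $\Theta(\sqrt{n\gamma})$ deletions yields the stated bound. Your proposed rebuild interval $\Theta(\sqrt{n/\delta})$ is the wrong scaling: taking $\gamma\sim\delta$, epochs of length $q$ cost $\tilde{O}(Bn/\delta + Bq^{2}/\delta^{2})$, and the balancing point is $q=\Theta(\sqrt{n\delta})$, not $\Theta(\sqrt{n/\delta})$; your choice gives amortized $\tilde{O}(Bn^{1/2}\delta^{-5/2})$ instead of $\tilde{O}(Bn^{1/2}\delta^{-3/2})$. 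Similarly, in the randomized part the random-walk length should be $\tilde{O}(\gamma^{-2})$ steps with $\Omega(\gamma)$ success per walk (so $\tilde{O}(\gamma^{-3})$ overall per augment), not $\tilde{O}(\delta^{-1})$-length walks with constant success probability, though your final $\tilde{O}(\delta^{-3})$ figure happens to be correct.
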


\medskip
\paragraph{Intuition Aside:} The only property of $H_{\core}$ that we need for Lemma \cref{overview:lm:base} is the degree gap between $V_{\safe}$ and $V_{\lo}$; in fact, the whole reason we use an EDCS in the first place is to ensure this degree gap. 

We defer the proof sketch of~\cref{overview:lm:base} to~\cref{sec:outline:proof}, but to provide an intuition for why the degree gap helps match vertices in $V_{\safe}$, we show here that it is  easy to deterministically maintain a \emph{fractional matching} satisfying the concerned properties of~\cref{overview:lm:base}, in $O(B/\delta)$  update time. We do not this result in the main body of our paper and include it only for intuition.
We denote the fractional matching by $x_{\base} : E(H_{\core}) \rightarrow [0,1]$, which is constructed as follows.
\begin{wrapper}
\begin{itemize}
\item $\Gamma \leftarrow \left(\frac{1}{2} + \delta - 2\epsilon\right)B$ and $x_{\base}(e) \leftarrow 0$ for all $e \in E$
\item For every safe node $v \in V_{\safe}$
\begin{itemize}
\item Let $E^{\star}_v \subseteq E(H_{\core})$ be a set of $\Gamma$ distinct edges incident to $v$. ($E^{\star}_v$ exists as $v \in V_{\safe}$).
\item $x_{\base}(e) \leftarrow 1/\Gamma$ for all $e \in E^{\star}_v$
\end{itemize}
\item Return $x_{\base} : E(H_{\core}) \rightarrow [0,1]$.
\end{itemize}
\end{wrapper}

Consider any two distinct nodes $v, v' \in V_{\safe} \subseteq V_{\vhi}$. By~\cref{overview:enu:very-high} of~\cref{overview:prop:fact}, there cannot be an edge $(v, v') \in E(H_{\core}) \subseteq E(H_{\init})$, and hence $E^\star_v \cap E^\star_{v'} = \emptyset$. Using this observation, it is easy to verify that the function $x_{\base}$ is a valid fractional matching in $H_{\core}$, and that every node $v \in V_{\safe}$  receives a weight  $x_{\base}(v) = 1$ under this fractional matching.\footnote{The weight $x_{\base}(v)$  is the sum of the values $x_{\base}(e)$, over all the edges in $H_{\core}$ that are incident on $v$.} Furthermore, since the EDCS $H_{\init}$ has maximum degree at most $B$ (this follows from~\cref{defn:edcs}) and each phase lasts for $\delta n$ updates, we can deterministically maintain the fractional matching $x_{\base}$ in $O\left(\frac{nB}{\delta n} \right) = O(B/\delta)$ amortized update time. One can also verify that each update in $G$ that is internal to a phase leads to at most $O(1)$ nodes $v \in V$ changing their weights $x_{\base}(v)$ under $x_{\base}$.

\thatchaphol{we will need to add that $E^\star_v$ for all $v$ can be maintained in $O(1)$ update time\aaron{Given that this is for intuition only I think it's fine to do nothing here.}}

\subsection{Maintaining a Maximal Matching in the Adjunct Subgraph}
\label{overview:sec:adjunct}

Let $G_{\res} := (V_{\res}, E_{\res})$ denote the subgraph of $G$ induced by the set of nodes that are unmatched under $M_{\base}$, that is,  $V_{\res} := V \setminus V(M_{\base})$ and  $E_{\res} := \{ (u, v) \in E : u, v \in V_{\res}\}$. We will refer to $G_{\res}$ as the {\bf adjunct subgraph} of $G$ w.r.t.~$M_{\base}$. By the following observation, all that remains is to maintain a maximal matching in $G_{\res}$

\begin{observation}
\label{overview:obs:mfinal}
If $M_{\res}$ is a maximal matching of $G_{\res}$, then
$M_{\fnl} := M_{\base} \cup M_{\res}$ is a maximal matching in  $G = (V, E)$
\end{observation}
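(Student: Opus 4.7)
The plan is to verify the two conditions of maximality in turn: that $M_{\fnl}$ is a matching, and that every edge of $G$ is incident to a matched vertex of $M_{\fnl}$.

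First, I would argue that $M_{\fnl} = M_{\base} \cup M_{\res}$ is indeed a matching. The matching $M_{\base}$ has vertex set $V(M_{\base})$, while $M_{\res} \subseteq E_{\res}$ uses only vertices from $V_{\res} = V \setminus V(M_{\base})$ by definition of the adjunct subgraph. Hence the two matchings are vertex-disjoint, so their union has no pair of edges sharing an endpoint. (Internally, $M_{\base}$ and $M_{\res}$ are each matchings by hypothesis.)

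Next, I would verify maximality by a direct case analysis on an arbitrary edge $(u,v) \in E$. In the first case, at least one endpoint, say $u$, lies in $V(M_{\base})$; then $u$ is already matched under $M_{\base} \subseteq M_{\fnl}$, so $(u,v)$ shares an endpoint with $M_{\fnl}$. In the remaining case, both $u$ and $v$ lie in $V_{\res} = V \setminus V(M_{\base})$; then by definition of $E_{\res}$, we have $(u,v) \in E_{\res}$, and since $M_{\res}$ is maximal in $G_{\res} = (V_{\res}, E_{\res})$, some endpoint of $(u,v)$ is matched by $M_{\res} \subseteq M_{\fnl}$. In either case, $(u,v)$ intersects $M_{\fnl}$, proving maximality.

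There is no real obstacle here: the observation reduces immediately to the definitions of $G_{\res}$ and of maximal matching, and the proof is essentially two lines of bookkeeping. I would write it as a short paragraph rather than a formally structured argument.
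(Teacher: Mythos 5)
Your proof is correct and is exactly the routine verification one would expect; the paper itself leaves this observation unproved, treating it as immediate from the definitions of $V_{\res}$ and $E_{\res}$, and your two-step check (vertex-disjointness of $M_{\base}$ and $M_{\res}$, then a case split on whether an endpoint of an arbitrary edge lies in $V(M_{\base})$) is the argument the paper implicitly has in mind.
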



We now show how to efficiently maintain a maximal matching $M_{\res}$ of $G_{\res}$ (we refer to it as the \textbf{adjunct matching}).
 
\begin{lem}
\label{overview:lem:maximal} Suppose that we  maintain the matching $M_{\base}$ as per~\cref{overview:lm:base}. Then with an additive overhead of $\tilde{O}\left(\frac{B}{\delta}+ \frac{n}{\delta B}\right)$ amortized update time, we can deterministically and explicitly maintain a \emph{maximal} matching $M_{\res} \subseteq E_{\res}$ in $G_{\res}$. 
\end{lem}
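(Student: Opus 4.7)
The plan is to rebuild $M_{\res}$ from scratch at the start of each phase and then maintain it incrementally. By \cref{overview:cor:damaged}, at the start of a phase $V_{\dmg} = \emptyset$, so $V_{\res} \subseteq V_{\lo}$; every edge of $G_{\res}$ therefore has both endpoints low, and by \cref{overview:enu:low} of \cref{overview:prop:fact} each such edge lies in $H_{\init}$. Since $H_{\init}$ is a $(B,(1-\epsilon)B)$-EDCS we have $|E(H_{\init})| = O(nB)$, so a greedy maximal matching $M_{\res}$ of $G_{\res}$ can be built in $O(nB)$ time, contributing $O(B/\delta)$ amortized per update over the $\delta n$ updates of the phase.

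For the incremental part, each internal update causes one edge deletion, $O(1)$ changes to $V(M_{\base})$ (hence to $V_{\res}$) by \cref{overview:lm:base}, and $O(1)$ vertices possibly switching between $V_{\safe}$ and $V_{\dmg}$. Together these events free at most $O(1)$ vertices in $V_{\res}$ (either because a matched $M_{\res}$-edge is deleted, or because the $M_{\res}$-partner of a vertex entering $V(M_{\base})$ is dropped), and for each newly free vertex I call a routine ReMatch that restores maximality. For a low free vertex $v \in V_{\res}$, every $G_{\res}$-neighbor of $v$ is either low (and by \cref{overview:enu:low} of \cref{overview:prop:fact} lies in $H_{\init}$, hence in $H_{\core}$) or damaged. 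Hence ReMatch($v$) scans the $O(B)$ neighbors of $v$ in $H_{\core}$ together with the $O(n/B)$ vertices of $V_{\dmg}$ using adjacency-matrix queries to $G$, at cost $O(B + n/B)$.

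The principal obstacle is the damaged case, since a damaged $v$ can have degree $\Theta(n)$ in $G$ and a naive scan would cost $\Omega(n)$ per event. I circumvent this by maintaining, for every $v \in V_{\dmg}$, an auxiliary set $L_v \subseteq V_{\res}$ of $v$'s current free $V_{\res}$-neighbors in $G$; ReMatch($v$) then just returns any element of $L_v$ in $O(1)$. The set $L_v$ is initialized when $v$ first joins $V_{\dmg}$ by scanning $N_G(v)$ once, at cost $O(\deg_G(v))$. Since $V_{\dmg}$ is monotone within a phase and $|V_{\dmg}| = O(n/B)$ by \cref{overview:cor:damaged}, the total initialization cost per phase is $\sum_{v \in V_{\dmg}} \deg_G(v) \le n \cdot O(n/B) = O(n^2/B)$, i.e.\ $O(n/(B\delta))$ amortized. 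Subsequently, whenever a vertex $u$ changes its $V_{\res}$-membership or its free/matched status, I iterate over $V_{\dmg}$ and, for each damaged neighbour of $u$ (identified by an adjacency-matrix query on $G$), update $L_v$ in $O(|V_{\dmg}|) = O(n/B)$ time; only $O(1)$ such transitions occur per update.

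Summing the contributions — $O(B/\delta)$ for the periodic rebuild, $O(B+n/B)$ per ReMatch on a low vertex, $O(n/B)$ per transition for the $L_v$ maintenance, and $O(n/(B\delta))$ amortized for the $L_v$ initializations — yields the target additive overhead $\tilde{O}(B/\delta + n/(\delta B))$. The hard part was making ReMatch cheap on damaged vertices; the $L_v$ device succeeds because $|V_{\dmg}|$ is small enough both to scan on every low-vertex ReMatch and to treat as the ``column'' against which the expensive $L_v$ initialization amortizes, using the monotonicity of $V_{\dmg}$ within a phase.
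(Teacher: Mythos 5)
Your proof is correct and follows essentially the same approach as the paper: rebuild $M_{\res}$ greedily at the start of each phase (amortized $O(B/\delta)$, via $|E_{\res}| = O(nB)$ at phase start), maintain free-neighbor lists $L_v$ (the paper's $\F_{\res}(v)$) only for damaged vertices, amortize their initialization over the phase using $|V_{\dmg}| = O(n/B)$ and monotonicity of $V_{\dmg}$, and handle ReMatch on a low vertex $v$ by bounding $\deg_{G_{\res}}(v) = O(B + n/B)$ via \cref{overview:arboricity}. The only surface difference is that you spell out the enumeration of $G_{\res}$-neighbors of a low vertex (scanning $H_{\core}$-neighbors plus iterating over $V_{\dmg}$ with matrix queries), which the paper leaves implicit behind the degree bound.
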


At this point, the reader might get alarmed because  $G_{\res}$ undergoes \emph{node-updates} every time an edge gets added to/removed from $M_{\base}$, which might indicate that it is impossible to efficiently maintain a maximal matching in $G_{\res}$. To assuage this concern, we emphasize that: (i) the {\em arboricity} of $G_{\res}$ is sublinear in $n$ (see~\cref{overview:arboricity}), and (ii) an edge update in $G$ that is internal to a phase leads to at most $O(1)$  node-updates in $G_{\res}$ (see~\cref{overview:arboricity:2}). These two properties ensure that we can maintain the matching $M_{\res}$ in sublinear in $n$ update time. 

\begin{claim}
    \label{overview:arboricity}
    The subgraph $G_{\res} = (V, E_{\res})$ satisfies the following properties.
    \begin{enumerate}
    \item \label{overview:enum:arb:1} $V_{\safe} \cap V_{\res} = \emptyset$ and $\left|V_{\dmg} \cap V_{\res} \right| = O\left(\frac{n}{B}\right)$. Furthermore, $V_{\dmg} \cap V_{\res} = \emptyset$ at the start of a phase.
    \item \label{overview:enum:arb:2} $\deg_{G_{\res}}(v) \leq B + \left| V_{\dmg} \cap V_{\res} \right|$ for all nodes $v \in V_{\lo}$.
    \end{enumerate}
\end{claim}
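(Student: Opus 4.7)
The plan is to dispatch each assertion by a short direct argument from the definitions, using the degree classification established in Section~\ref{sec:overview:assume}, together with Lemma~\ref{overview:lm:base} and Corollary~\ref{overview:cor:damaged}.

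For Part~\ref{overview:enum:arb:1}, the equality $V_{\safe} \cap V_{\res} = \emptyset$ is immediate from Property~\ref{overview:enum:base:1} of Lemma~\ref{overview:lm:base}, which gives $V_{\safe} \subseteq V(M_{\base})$, and the definition $V_{\res} := V \setminus V(M_{\base})$. The bound $|V_{\dmg} \cap V_{\res}| = O(n/B)$ is a direct consequence of $|V_{\dmg}| = O(n/B)$ from Corollary~\ref{overview:cor:damaged}, and the last sub-assertion also follows from Corollary~\ref{overview:cor:damaged}, which asserts $V_{\dmg} = \emptyset$ at the start of a phase, so its intersection with $V_{\res}$ is empty.

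For Part~\ref{overview:enum:arb:2}, fix any $v \in V_{\lo}$ and partition its neighbors in $G_{\res}$ according to the node classification. By Assumption~\ref{assume:gap} we have $V = V_{\vhi} \sqcup V_{\lo} = V_{\safe} \sqcup V_{\dmg} \sqcup V_{\lo}$, and by Part~\ref{overview:enum:arb:1} no safe node lies in $V_{\res}$. Hence every neighbor $u$ of $v$ in $G_{\res}$ satisfies $u \in (V_{\dmg} \cap V_{\res}) \cup V_{\lo}$. The number of neighbors in $V_{\dmg} \cap V_{\res}$ is trivially at most $|V_{\dmg} \cap V_{\res}|$. For a neighbor $u \in V_{\lo}$, the edge $(u,v)$ lies in $E_{\res} \subseteq E \subseteq E_{\init}$ (using that we are in the decremental setting, so no edges are inserted within a phase) and has both endpoints low; Part~\ref{overview:enu:low} of Proposition~\ref{overview:prop:fact} then forces $(u,v) \in E(H_{\init})$. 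Consequently the number of such neighbors is at most $\deg_{H_{\init}}(v) < \bigl(\tfrac{1}{2} - \delta\bigr)B < B$, and summing the two contributions gives $\deg_{G_{\res}}(v) < B + |V_{\dmg} \cap V_{\res}|$.

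The only subtlety worth double-checking is that $E_{\res}$ consists of \emph{current} edges of $G$, while Proposition~\ref{overview:prop:fact}~(\ref{overview:enu:low}) speaks about edges of $E_{\init}$; this is resolved precisely because edges are only deleted within a phase, so $E \subseteq E_{\init}$. No calculation is truly hard here; the content of the claim is essentially a bookkeeping exercise organizing the degree-gap and EDCS properties already proved.
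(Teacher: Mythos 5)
Your proof is correct and follows essentially the same route as the paper: Part~1 from $V_{\safe}\subseteq V(M_{\base})$ and \cref{overview:cor:damaged}, and Part~2 by partitioning the neighbors of $v\in V_{\lo}$ into low nodes (bounded by $\deg_{H_{\init}}(v)\leq B$ via \cref{overview:enu:low} of \cref{overview:prop:fact}) plus nodes in $V_{\dmg}\cap V_{\res}$, after ruling out $V_{\safe}$. The only cosmetic difference is that you bound the low-neighbor count by $(\tfrac12-\delta)B$ rather than the EDCS maximum degree $B$, which is slightly tighter but immaterial.
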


\begin{proof}
Since $V_{\res} := V \setminus V(M_{\base})$,~\cref{overview:enum:arb:1} follows from~\cref{overview:enum:base:1} of~\cref{overview:lm:base}, and~\cref{overview:cor:damaged}.

To prove~\cref{overview:enum:arb:2}, consider any node $v \in V_{\lo}$. Let $u \in V_{\lo}$ be any low neighbor of $v$ in $E_{\res} \subseteq E \subseteq E_{\init}$. Then, by~\cref{overview:enu:low} of~\cref{overview:prop:fact}, we   have $(u, v) \in H_{\init}$. Accordingly, the number of low neighbors of $v$ in $G_{\res}$ is at most $\deg_{H_{\init}}(v) \leq B$, where the last inequality holds because $H_{\init}$ is a $(B, (1-\epsilon)B)$-EDCS of $G_{\init}$ (see~\cref{defn:edcs}). As the node-set $V$ is partitioned into  $V_{\lo}, V_{\vhi}$ (see~\cref{assume:gap}) and  $V_{\vhi}$ is further partitioned into $V_{\safe}, V_{\dmg}$, we get: $$\deg_{G_{\res}}(v) \leq B + |V_{\safe} \cap V_{\res}| + |V_{\dmg} \cap V_{\res}| =  B + |V_{\dmg} \cap V_{\res}|,$$ 
where the last equality follows from~\cref{overview:enum:arb:1}. 
\end{proof}

\begin{claim}
    \label{overview:arboricity:2} An edge-update in $G$  internal to a phase leads to at most $O(1)$  node-updates in $G_{\res}$.
\end{claim}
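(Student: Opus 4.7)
The plan is to observe that \Cref{overview:arboricity:2} is essentially a direct consequence of \Cref{overview:enum:base:2} of \Cref{overview:lm:base}. Recall the definitions: $V_{\res} := V \setminus V(M_{\base})$ and $G_{\res}$ is the subgraph of $G$ induced on $V_{\res}$. A node-update in $G_{\res}$ is a vertex either entering or leaving $V_{\res}$; by definition this occurs exactly when a vertex leaves or enters $V(M_{\base})$, respectively.

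First I would invoke \Cref{overview:enum:base:2} of \Cref{overview:lm:base}, which states that every edge-update in $G$ that is internal to a phase induces at most $O(1)$ node insertions/deletions in $V(M_{\base})$. Since each such insertion or deletion in $V(M_{\base})$ corresponds to precisely one node-update in $V_{\res}$ (in the opposite direction), this immediately yields at most $O(1)$ node-updates to the vertex set of $G_{\res}$ per internal edge-update in $G$.

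The only subtlety to address is that the edge-update itself (the insertion or deletion of an edge $(u,v)$ in $G$) is an edge-update in $G_{\res}$ rather than a node-update, so it does not contribute to the count claimed in \Cref{overview:arboricity:2}. Hence the bound $O(1)$ on node-updates in $G_{\res}$ follows. The only mild obstacle is bookkeeping: one needs to be a bit careful that the constant hidden in $O(1)$ already accounts for both endpoints of the updated edge being simultaneously added to or removed from $V(M_{\base})$, but this is covered by \Cref{overview:enum:base:2} directly. No further argument is needed.
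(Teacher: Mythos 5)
Your proof is correct and follows the same one-line reduction as the paper: since $V_{\res} = V \setminus V(M_{\base})$, the claim is an immediate consequence of~\cref{overview:enum:base:2} of~\cref{overview:lm:base}. The extra bookkeeping you include is harmless but unnecessary.
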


\begin{proof}
Since $V_{\res} := V \setminus V(M_{\base})$, the claim follows from~\cref{overview:enum:base:2} of~\cref{overview:lm:base}.
\end{proof}

\begin{corollary}
    \label{overview:cor:arboricity} At the start of a phase, we have $|E_{\res}| = O(Bn)$.
\end{corollary}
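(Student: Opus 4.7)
The plan is to chain together the two parts of \cref{overview:arboricity} under the extra information available at the start of a phase. At that moment, \cref{overview:cor:damaged} gives $V_{\dmg} = \emptyset$, which is exactly the piece of information that lets us upgrade the degree bound in \cref{overview:enum:arb:2} into a clean $O(B)$ bound.

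Concretely, I would proceed in three short steps. First, from \cref{overview:enum:arb:1} of \cref{overview:arboricity} we have $V_{\safe} \cap V_{\res} = \emptyset$, and combined with $V_{\dmg} = \emptyset$ at the start of a phase, we conclude that $V_{\res} \cap V_{\vhi} = \emptyset$; since $V = V_{\vhi} \cup V_{\lo}$ by \cref{assume:gap}, this forces $V_{\res} \subseteq V_{\lo}$. Second, I would apply \cref{overview:enum:arb:2} of \cref{overview:arboricity} to every $v \in V_{\res} \subseteq V_{\lo}$: this gives
\[
\deg_{G_{\res}}(v) \;\leq\; B + |V_{\dmg} \cap V_{\res}| \;=\; B,
\]
using again that $V_{\dmg} = \emptyset$ at the start of a phase. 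Third, I would sum over $v \in V_{\res}$ to get
\[
|E_{\res}| \;\leq\; \tfrac{1}{2} \sum_{v \in V_{\res}} \deg_{G_{\res}}(v) \;\leq\; \tfrac{1}{2}\, |V_{\res}|\, B \;\leq\; \tfrac{nB}{2} \;=\; O(Bn).
\]

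There is no real obstacle here; the corollary is essentially an immediate consequence of the two parts of \cref{overview:arboricity} specialized to the moment when the phase begins, plus the initial emptiness of $V_{\dmg}$ guaranteed by \cref{overview:cor:damaged}. The only thing to be careful about is to apply the bounds \emph{at the start of the phase}, since later on $V_{\dmg}$ may grow and $V_{\res}$ may contain damaged (hence possibly high-degree) vertices, so the crude $B$ bound on the per-vertex degree would no longer hold.
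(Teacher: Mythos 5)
Your proof is correct and rests on the same underlying fact as the paper's: that every edge of $E_{\res}$ at the start of a phase has both endpoints low, hence lies in $H_{\init}$, which has maximum degree $B$. The only cosmetic difference is that you route this through the per-vertex degree bound of \cref{overview:enum:arb:2} of \cref{overview:arboricity} (itself a consequence of \cref{overview:enu:low} of \cref{overview:prop:fact}) and then sum degrees, while the paper invokes \cref{overview:enu:low} directly to conclude $E_{\res} \subseteq E(H_{\init})$ and bounds $|E(H_{\init})| = O(nB)$.
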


\begin{proof}

By \Cref{overview:prop:fact}(\ref{overview:enu:low}) we have that $E_\res \subseteq H_\init$. As $H_\init$ is a $(B, B(1-\epsilon))$-EDCS and the maximum vertex degree of a $(B, B(1-\epsilon))$-EDCS is $B$ we must have that $|E_\res| \leq |H_\init|  = O(nB)$.

\end{proof}

\subsubsection*{Proof of~\cref{overview:lem:maximal}}

At the start of a phase, we initialize $M_{\res}$ to be any arbitrary maximal matching in $G_{\res}$. By~\cref{overview:cor:arboricity}, this takes $O(Bn)$ time. Since the phase lasts for $\delta n$ updates in $G$, this step incurs an amortized update time of:
\begin{equation}
\label{overview:eq:update:time:0}
O\left(\frac{B n}{\delta n}\right) = O\left(\frac{B}{\delta}\right).
\end{equation}

Subsequently, to handle the updates during a phase, we  use the following auxiliary data structures: Each node $v \in V_{\dmg}$ maintains the set $\F_{\res}(v) := \left\{ u \in V_{\res} \setminus V(M_{\res}) : (u, v) \in E_{\res}\right\}$ of its free neighbors (in $G_{\res}$) under the matching $M_{\res}$.\footnote{Maximality of $M_{\res}$ implies that $\F_{\res}(v) = \emptyset$ for all $v \in V_{\dmg} \setminus  V(M_{\res})$.} Whenever a node $v$ moves from $V_{\safe}$ to $V_{\dmg}$, we spend $\tilde{O}(n)$ time to initialize the set $\F_{\res}(v)$ as a balanced search tree. Within a phase, a  node can move from $V_{\safe}$ to $V_{\dmg}$ at most once (see~\cref{overview:def:safe} and~\cref{overview:cor:damaged}). Thus, by \cref{overview:enum:arb:1} of~\cref{overview:arboricity}, these initializations take $\tilde{O}\left( \frac{n}{B} \cdot n \right) = \tilde{O}\left(\frac{n^2}{B} \right)$ total time during a phase. As each phase lasts for $\delta n$ updates in $G$, this incurs an amortized update time of $\tilde{O}\left( \frac{n^2}{B \cdot \delta n}  \right) = \tilde{O}\left( \frac{n}{\delta B}\right)$.

\cref{overview:arboricity:2} guarantees that each edge update in $G$ leads to $O(1)$ node-updates in $G_{\res}$. We now show that to maintain the maximal matching $M_{\res}$, a node-update to $G_{\res}$ can be handled in $\tilde{O}\left(B + \frac{n}{B} \right)$ worst-case time.
If a new vertex $v$ enters $G_{\res}$ (because $v$ became unmatched in $M_{\base}$), then to maintain the maximality of $M_{\res}$, the algorithm needs to find a free neighbor of $v$ (if one exists). The same needs to be done if $x$ leaves $G_{\res}$, and $x$ was previously matched to $v$ in $M_{\res}$. So to handle a node-update in $G_{\res}$ we need a subroutine that takes in a vertex $v$ and either finds a $M_{\res}$-free neighbor of $v$ in $G_\res$ or certifies that none exists.

The above subroutine can be performed in $\otil(1)$  time if $v \in V_{\dmg}$ (using the set $\F_{\res}(v)$); if $v \in V_{\res} \setminus V_{\dmg}$ then by \cref{overview:enum:arb:1} of \cref{overview:arboricity} $v \in V_{\lo}$, so by \cref{overview:enum:arb:2} of the same claim we can perform the subroutine in time  $\deg_{G_{\res}}(v) = O\left(B + \frac{n}{B} \right)$.
Further, once we decide to match a node $u \in V_{\res}$ to one of its free neighbors, we can  update all the relevant sets $\F_{\res}(v)$ in $\tilde{O}\left( \left| V_{\dmg}\right| \right) = \tilde{O}\left( \frac{n}{B} \right)$ time (see~\cref{overview:arboricity}).

To summarize, the amortized update of maintaining the matching $M_{\res}$ under adversarial updates in $G$ is at most:
\begin{equation}
\label{overview:eq:update:time:2}
\tilde{O}\left(\frac{n}{\delta B}  + B + \frac{n}{B} \right) = \tilde{O}\left(B + \frac{n}{\delta B} \right).
\end{equation}

\cref{overview:lem:maximal} follows from~(\ref{overview:eq:update:time:0}) and~(\ref{overview:eq:update:time:2}).

\subsection{Putting Everything Together: How to Derive~\cref{thm:main}}
\label{sec:together}

Since $E(H_{\core}) \subseteq E$, it follows that $M_{\base} \subseteq E(H_{\core})$ is a valid matching in $G$ (see~\cref{overview:lm:base}). Furthermore,~\cref{overview:lem:maximal} guarantees that $M_{\res}$ is a \emph{maximal} matching in $G_{\res}$
Thus, by \cref{overview:obs:mfinal}, the matching $M_{\fnl} := M_{\base} \cup M_{\res}$  is  a maximal matching in $G$. It now remains to analyze the overall amortized update time of our algorithm.

\medskip
Recall that maintaining the EDCS $H$ incurs an update time of $O\left( \frac{n}{\epsilon B}\right) = O\left( \frac{n}{\delta B}\right)$, according to~(\ref{overview:eq:delta}). Now, for the deterministic algorithm,~\cref{overview:lm:base} incurs an amortized update time of $\tilde{O}\left( B \cdot n^{1/2} \cdot \delta^{-3/2}\right)$.  Thus, by~\cref{overview:lem:maximal}, the overall amortized update time becomes
$$O\left(\frac{n}{\delta B}\right) + \tilde{O}\left( B \cdot n^{1/2} \cdot \delta^{-3/2}\right) + \tilde{O}\left(\frac{B}{\delta} + \frac{n}{\delta B}\right)  = \tilde{O}(n^{8/9}), \text{ for } \delta = \frac{1}{n^{1/9}} \text{ and } B = n^{2/9}.$$

For the randomized algorithm against an adaptive adversary,~\cref{overview:lm:base} incurs an amortized update time of $\tilde{O}(B \delta^{-1} + \delta^{-3})$. Thus, by~\cref{overview:lem:maximal}, the overall amortized update time becomes
$$O\left(\frac{n}{\delta B}\right) + \tilde{O}\left(\frac{B}{\delta}+ \frac{1}{\delta^{3}}\right) + \tilde{O}\left(\frac{B}{\delta}+ \frac{n}{\delta B}\right) = \tilde{O}\left( n^{3/4} \right), \text{ for } \delta = \frac{1}{n^{1/4}} \text{ and } B = n^{1/2}.$$

This leads us to~\cref{thm:main} in the decremental setting.

\subsection{Dealing with Medium Nodes: Getting Rid of~\cref{assume:gap}}
\label{sec:assume:gap}

We now provide a high-level outline of our algorithm in the general case, when~\cref{assume:gap} does not hold. Recall that under~\cref{assume:gap} there cannot be any node $v$ with $\left(\frac{1}{2} - \delta\right)B \leq  \deg_{H_{\init}}(v) \leq \left(\frac{1}{2} + \delta  \right)B$. Conceptually, the most significant challenge here is to deal with the set of {\bf medium nodes}, given by: $$V_{\med} := \left\{ v \in V : \left(\frac{1}{2} - \delta\right)B \leq  \deg_{H_{\init}}(v) < \left(\frac{1}{2} + \delta - \epsilon \right)B \right\}.$$

For the rest of this overview, we replace \cref{assume:gap} with the following weaker assumption

\begin{assumption}[Replaces \cref{assume:gap}]
\label{assume:medium}
All vertices are in $V_{\lo}$, $V_{\med}$, or $V_{\vhi}$
\end{assumption}

In the main body of the paper we also have to handle the nodes  that are ruled out by the above assumption, namely the ones whose degrees in $H_{\init}$ lie in the range $\left[\left(\frac{1}{2} + \delta - \epsilon\right)B, \left(\frac{1}{2}+\delta\right)B \right]$. These remaining nodes lie in an extremely narrow range, so while they require some technical massaging, they do not pose any additional conceptual difficulties.

\medskip
We now show how to adapt our algorithm to also handle the medium nodes. \cref{overview:lm:base:general}, stated below, generalizes~\cref{overview:lm:base}, and serves as a key building block of our algorithm. The only difference between these two lemmas is that in~\cref{overview:enum:base:general:1} of~\cref{overview:lm:base:general}, we allow for $O(\delta n)$ medium nodes that are free under $M_{\base}$. Note that sets $V_{\safe}$ and $V_{\dmg}$ are defined the same as before (\Cref{overview:def:safe}), and are in particular both subsets of $V_{\vhi}$.

\begin{lemma}
\label{overview:lm:base:general}
We can maintain a matching $M_{\base} 
\subseteq E(H_{\core})$ that satisfy the following properties.
\begin{enumerate}
\item \label{overview:enum:base:general:1} Every safe  node is matched  under $M_{\base}$, i.e., $V_{\safe} \subseteq V(M_{\base})$. Furthermore, at most $O(\delta n)$ medium nodes are free under $M_{\base}$, i.e., $|V_{\med} \setminus V(M_{\base})| = O(\delta n)$. 
\item \label{overview:enum:base:general:2} Every update in $G$ that is internal to a phase (i.e., excluding those updates where one phase ends and the next phase begins) leads to at most $O(1)$  node insertions/deletions in $V(M_{\base})$.
\end{enumerate}
The matching $M_{\base}$ can be maintained
\begin{itemize}
\item either by a deterministic algorithm with $\tilde{O}\left(B \cdot n^{1/2} \cdot \delta^{-3/2}\right)$ amortized update time; or
\item by a randomized algorithm with $\tilde{O}\left(B \cdot \delta^{-1} + \delta^{-3} \right)$ amortized update time. The algorithm is correct with high probability against an adaptive adversary.
\end{itemize}
\end{lemma}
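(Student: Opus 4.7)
The plan is to extend the maintenance procedure of \Cref{overview:lm:base} by adding a secondary matching component that covers most medium nodes. Specifically, I would decompose the target matching as $M_{\base} = M_{\base}^{\safe} \cup M_{\med}$, where $M_{\base}^{\safe}$ is produced by (an adaptation of) the algorithm of \Cref{overview:lm:base} that matches all safe nodes to low neighbors, and $M_{\med}$ is a mostly static matching that covers all but $O(\delta n)$ medium nodes. The whole point of isolating $M_{\med}$ as a separate object is that it can be recomputed from scratch just once per phase, so its construction cost gets amortized over $\delta n$ updates.

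To construct $M_{\med}$, I would exploit the following consequence of the EDCS edge-degree bound: a medium node and a very-high node cannot be adjacent in $H_{\init}$, since the sum of their degrees would exceed $B$. Hence the subgraph $H^{*}$ of $H_{\init}$ induced on $V_{\med} \cup V_{\lo}$ preserves the full $H_{\init}$-degree of every medium node, so every medium node still satisfies $\deg_{H^{*}}(v) \geq (\tfrac{1}{2}-\delta)B$ while the maximum degree in $H^{*}$ is at most $(\tfrac{1}{2}+\delta-\epsilon)B$. Choosing $\kappa = \Theta(\delta)$, every medium node lies in the set $V_\kappa$ of \Cref{lem:match almost max degree}, so a single invocation of that lemma on $H^{*}$ yields a matching leaving at most $2\kappa n = O(\delta n)$ medium nodes unmatched. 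The resulting deterministic cost is $\tilde{O}(|E(H^{*})|/\kappa) = \tilde{O}(nB/\delta)$, and the randomized cost is $\tilde{O}(n\Delta(H^{*})\kappa) = \tilde{O}(nB\delta)$; divided by the $\delta n$ updates per phase, both are dominated by the update-time budget coming from \Cref{overview:lm:base}. During the phase I would simply delete edges from $M_{\med}$ as they vanish from $G$; since this happens at most $\delta n$ times, at most $2\delta n$ additional medium nodes become free during the phase, preserving the $O(\delta n)$ bound and contributing only $O(1)$ node changes per update to Property~\ref{overview:enum:base:general:2}.

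The main obstacle is guaranteeing that $M_{\base}^{\safe}$ and $M_{\med}$ are vertex-disjoint, since both matchings can use low vertices. My plan is to run the algorithm of \Cref{overview:lm:base} on $H_{\core}$ with the modification that the low vertices in $V(M_{\med}) \cap V_{\lo}$ are forbidden from being matched. The technical challenge is then to show that this restriction does not break the Hall-type expansion that underlies the proof of \Cref{overview:lm:base}: because every safe node has $\deg_{H_{\core}} \geq (\tfrac{1}{2}+\delta-2\epsilon)B$ while every low node has $\deg_{H_{\init}} < (\tfrac{1}{2}-\delta)B$, one obtains a multiplicative slack of roughly $1+2\delta$ in Hall's condition on $V_{\safe}$, which should absorb the removal of the low nodes used by $M_{\med}$ so long as their count is $O(\delta n)$-controlled relative to $|V_{\safe}|$. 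Pushing this expansion guarantee through the short augmenting-path machinery behind \Cref{overview:lm:base}, namely the monotone Even--Shiloach tree in the deterministic case and the directed-expander random walks in the randomized case, is where essentially all of the remaining technical work lies. Once this is done, Property~\ref{overview:enum:base:general:1} follows from the combination of full safe coverage in $M_{\base}^{\safe}$ and the $O(\delta n)$ medium-coverage guarantee of $M_{\med}$, and Property~\ref{overview:enum:base:general:2} is inherited directly from \Cref{overview:lm:base} together with the observation that a single edge deletion removes at most one edge from $M_{\med}$.
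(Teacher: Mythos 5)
Your decomposition into two vertex-disjoint matchings $M_{\base}^{\safe}$ and $M_{\med}$ is genuinely different from the paper's. The paper computes a \emph{single} matching $M_{\most}$ covering all but $O(\delta n)$ of $V_{\hi} \cup V_{\med}$ via \Cref{lem:match almost max degree}, sets $M_{\base} \leftarrow M_{\most}$, and then runs the left-perfect-matching machinery on $H_{\hilo}$ only to maintain the safe part $M_{\base}(V_{\safe})$; when an augmenting path for a safe node terminates at a low vertex that was matched in $M_{\base}$ to a medium node, that medium node is simply evicted. Each augmentation evicts at most one medium node and there are $O(\delta n)$ augmentations per phase, so the $O(\delta n)$ bound on free medium nodes is preserved. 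The two pieces of $M_{\base}$ thus share the same pool of low vertices, with safe nodes having priority and the resulting medium-node churn charged to augmentations.

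Your plan, by contrast, forbids $M_{\base}^{\safe}$ from using the low vertices in $V(M_{\med})$, and this is where the argument has a genuine gap. The degree gap gives $|N_{H_{\hilo}}(S)| \ge (1+\Theta(\gamma))|S|$ for every $S \subseteq V_{\safe}$, so after deleting $D := V(M_{\med})\cap V_{\lo}$ from the right side, Hall's condition for $S$ survives only if $|N_{H_{\hilo}}(S) \cap D| \le \Theta(\gamma)|S|$ --- and nothing in the construction of $M_{\med}$ controls this intersection. Concretely, a very-high safe node $a$ with $X = \Theta(B)$ low neighbors $u_1,\dots,u_X$, each adjacent in $H_{\init}$ to a distinct medium node $b_i$, is consistent with the EDCS constraints (the edge $(u_i,b_i)$ has edge degree $< (1-\epsilon)B$ and is therefore forced into $H_{\init}$ by \Cref{defn:edcs}); the matching returned by \Cref{lem:match almost max degree} on $H^*$ is free to match $b_i$ to $u_i$ for every $i$, leaving $a$ with no available neighbors and rendering the restricted LPM instance infeasible. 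Your ``$O(\delta n)$-controlled relative to $|V_{\safe}|$'' hypothesis is unavailable even globally, since $|D|$ can be $\Theta(n)$: $M_{\med}$ covers all but $O(\delta n)$ of the possibly-$\Theta(n)$ medium nodes. Any fix would have to allow $M_{\base}^{\safe}$ to reclaim low vertices from $M_{\med}$ and bound the resulting medium evictions by the number of augmentations, which is precisely the paper's mechanism.
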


Before outlining the proof sketch of~\cref{overview:lm:base:general}, we explain how to adapt the algorithm for maintaining a maximal matching $M_{\res} \subseteq E_{\res}$ in the adjunct graph $G_{\res} := (V_{\res}, E_{\res})$, which is the subgraph of $G$ induced by the node-set $V_{\res} := V \setminus V(M_{\base})$ (see~\cref{overview:sec:adjunct}).~\cref{overview:lem:maximal:general}, stated below, generalizes~\cref{overview:lem:maximal} in the presence of medium nodes.

\begin{lem}
\label{overview:lem:maximal:general} Suppose that we  maintain the matching $M_{\base}$ as per~\cref{overview:lm:base:general}. Then with an additive overhead of $\tilde{O}\left(\delta n + \frac{B}{\delta}+ \frac{n}{\delta B}\right)$ amortized update time, we can deterministically and explicitly maintain a \emph{maximal} matching $M_{\res} \subseteq E_{\res}$ in $G_{\res}$. 
\end{lem}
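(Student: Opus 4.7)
The plan is to follow the structure of the proof of \cref{overview:lem:maximal}, adjusting only for the presence of medium vertices in $V_{\res}$. The first step is to upgrade the analogs of \cref{overview:arboricity} and \cref{overview:cor:arboricity}. By \cref{overview:enum:base:general:1} of \cref{overview:lm:base:general} we have $V_{\safe} \cap V_{\res} = \emptyset$, $|V_{\med} \cap V_{\res}| = O(\delta n)$, and $|V_{\dmg} \cap V_{\res}| = O(n/B)$ exactly as in \cref{overview:cor:damaged}. I would then derive three structural facts: (i) every $v \in V_{\lo} \cap V_{\res}$ has $\deg_{G_{\res}}(v) \leq B + O(n/B)$, since the EDCS sum bound (as used in \cref{overview:enu:low} of \cref{overview:prop:fact}, together with $(1/2-\delta)+(1/2+\delta-\epsilon)=1-\epsilon$) forces every low-low and low-medium edge to lie in $H_{\init}$, leaving only the $O(n/B)$ possibly non-EDCS damaged neighbors to account for separately; (ii) every $v \in V_{\med} \cap V_{\res}$ has $\deg_{G_{\res}}(v) \leq B + O(\delta n + n/B)$, where medium-medium edges need not lie in $H_{\init}$ but are trivially bounded by $|V_{\med} \cap V_{\res}|$; and (iii) $|E_{\res}| = O(nB + \delta^2 n^2)$ at the start of any phase, since $V_{\dmg} = \emptyset$ at that point and the medium-medium edges contribute at most $|V_{\med} \cap V_{\res}|^2 = O(\delta^2 n^2)$.

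With these bounds in place, the dynamic algorithm is essentially identical to the one from \cref{overview:lem:maximal}: at the start of each phase we initialize $M_{\res}$ as an arbitrary maximal matching of $G_{\res}$; we maintain, for each $v \in V_{\dmg}$, a balanced BST $\F_{\res}(v)$ storing the $M_{\res}$-free neighbors of $v$ in $G_{\res}$; and we handle each node-update to $V_{\res}$ (of which there are $O(1)$ per edge-update in $G$ by \cref{overview:enum:base:general:2}) by searching for an $M_{\res}$-free neighbor and possibly augmenting $M_{\res}$. The cost of the search depends on the type of the updated vertex $v$: $\tilde{O}(1)$ if $v \in V_{\dmg}$ (via $\F_{\res}(v)$); $\tilde{O}(B + n/B)$ if $v \in V_{\lo}$ (by walking $N_{H_{\init}}(v)$ and then the $O(n/B)$ damaged vertices in $V_{\res}$ using adjacency-matrix queries on $G$); and $\tilde{O}(B + \delta n + n/B)$ if $v \in V_{\med}$ (by additionally scanning $V_{\med} \cap V_{\res}$ via adjacency-matrix queries on $G$). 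Each change to $M_{\res}$ refreshes every $\F_{\res}(\cdot)$ in $\tilde{O}(|V_{\dmg}|) = \tilde{O}(n/B)$ total time, and each transition of a vertex into $V_{\dmg}$ re-initializes its $\F_{\res}(\cdot)$ in $\tilde{O}(n)$ time.

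Amortizing over the $\delta n$ updates per phase then yields $\tilde{O}(B/\delta + \delta n)$ for the phase-initialization of $M_{\res}$; $\tilde{O}(n/(\delta B))$ for the $\F_{\res}$ re-initializations (totaling $\tilde{O}(n^2/B)$ per phase, since at most $O(n/B)$ vertices become damaged in a phase); and $\tilde{O}(B + \delta n + n/B)$ for the $O(\delta n)$ node-updates together with the matching refreshes they trigger. Summing gives the claimed overhead of $\tilde{O}(\delta n + B/\delta + n/(\delta B))$. The main obstacle relative to \cref{overview:lem:maximal} is that medium-medium edges escape the EDCS degree bound, inflating both $|E_{\res}|$ and $\deg_{G_{\res}}(v)$ for medium $v$; the resolution is to substitute the $O(\delta n)$ cap on free medium vertices from \cref{overview:lm:base:general} for the missing EDCS control, both when bounding $|E_{\res}|$ at the start of a phase and when enumerating $V_{\med} \cap V_{\res}$ via adjacency-matrix queries on $G$ during free-neighbor searches.
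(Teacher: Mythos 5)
Your proposal is correct and follows essentially the same route as the paper: scan $V_{\res}\cap V_{\med}$ during free-neighbor searches, bound low-medium and low-low degrees via the EDCS (since $(\tfrac12-\delta)B+(\tfrac12+\delta-\epsilon)B=(1-\epsilon)B$ forces these edges into $H_{\init}$), and amortize over the $\delta n$ updates of a phase. You are in fact slightly more careful than the paper's sketch, which says "everything else remains the same" and does not spell out that medium-medium edges inflate $|E_{\res}|$ at the start of a phase to $O(nB+\delta^2 n^2)$, adding the $\tilde O(\delta n)$ initialization term you correctly identify (this is handled explicitly in the paper's full treatment in \cref{new:cor:residual:2}).
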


\begin{proof}[Proof Sketch]
The only difference with the proof of~\cref{overview:lem:maximal} is that now the set $V_{\res}$ can contain some medium nodes, although $|V_{\res} \cap V_{\med}| = O(\delta n)$ as per~\cref{overview:enum:base:general:1} of~\cref{overview:lm:base:general}. 

We explicitly maintain the set  $V_{\res} \cap V_{\med}$ as a doubly linked list. Whenever a node $v \in V_{\lo} \cup V_{\dmg}$ is searching for a free neighbor (in the proof of~\cref{overview:lem:maximal}), we ensure that it additionally scans the set $V_{\res} \cap V_{\med}$. This leads to an additive overhead of~$|V_{\med} \cap V_{\res}| = O(\delta n)$ in the update time.

Next, recall that $H_{\init}$ is a $(B, (1-\epsilon)B)$-EDCS of $G_{\init}$. Consider any edge $(u, v) \in E_{\res} \subseteq E \subseteq E_{\init}$ with  $v \in V_{\med}$ and $u \in V_{\lo}$.  Then, we have $\deg_{H_{\init}}(u) + \deg_{H_{\init}}(v) < \left(\frac{1}{2} - \delta\right)B + \left( \frac{1}{2} + \delta - \epsilon \right)B = (1-\epsilon)B$, and hence~\cref{defn:edcs} implies that $(u, v) \in E(H_{\init})$. In other words, all the edges in $G_{\res}$ that connect a low node with a medium node belong to the EDCS $H_{\init}$, which in turn has maximum degree at most $B$ (see~\cref{defn:edcs}). Thus, every medium node has at most $B$ low neighbors in $G_{\res}$. 

Now, suppose that we are searching for a free neighbor of a node $v \in V_{\med}$ in $G_{\res}$. This step involves scanning through: (i) all the low neighbors of $v$ in $G_{\res}$ (and by the previous discussion there are at most $B$ such neighbors), (ii) the set $V_{\med} \cap V_{\res}$ of size $O(\delta n)$, and (iii) and at most $O\left(\frac{n}{B}\right)$ damaged nodes $V_{\dmg} \cap V_{\res}$ (see~\cref{overview:sec:adjunct}). Overall, this scan takes $O\left(\delta n + B + \frac{n}{B}\right)$ time, and just as before, contributes an additive overhead of $O(\delta n)$ to the update time in~\cref{overview:lem:maximal}.

Everything else remains the same as in the proof of~\cref{overview:lem:maximal}.
\end{proof}

Comparing~\cref{overview:lm:base:general,overview:lem:maximal:general} against~\cref{overview:lm:base,overview:lem:maximal}, we conclude that there is an additive overhead of $O(\delta n)$ update time while dealing with the medium nodes. Finally, from the analysis from~\cref{sec:together}, it is easy to verify that this  overhead of $O(\delta n)$ does \emph{not} degrade our overall asymptotic update time (the values of $\delta, B$ remain the same as in~\cref{sec:together}).

\subsubsection*{Proof Sketch of~\cref{overview:lm:base:general}}

Consider the scenario at the start of a new phase (just before the first update within that phase). At this point, we first compute a subgraph $H'_{\init} := (V, E(H'_{\init}))$ of $H_{\init}$, as follows. We initialize $H'_{\init} \leftarrow H_{\init}$. Subsequently, for every node $v \in V_{\vhi}$, we keep deleting edges $(u, v) \in E(H'_{\init})$ incident on $v$ from $H'_{\init}$, until $\deg_{H'_{\init}}(v)$ becomes equal to $\Delta := \left(\frac{1}{2} +\delta \right)B$. By~\cref{overview:enu:very-high} of~\cref{overview:prop:fact}, every edge $(u, v)$ that gets deleted from $H'_{\init}$  has its other endpoint $u \in V_{\lo}$. Accordingly, this process does not reduce the degree of any medium node. When this for loop terminates, we have $\left( \frac{1}{2} - \delta \right)B \leq \deg_{H'_{\init}}(v) \leq \Delta$ for all $v \in V_{\med} \cup V_{\vhi}$, and $\deg_{H'_{\init}}(v) < \left(\frac{1}{2} - \delta\right)B$ for all $v \in V_{\lo}$. In other words, the degree of every node $v \in V_{\med} \cup V_{\vhi}$ in $H'_{\init}$ is at least $(1-\Theta(\delta))$ times the maximum degree $\Delta$. Now, we apply~\cref{lem:match almost max degree}, with $\kappa = \Theta(\delta)$, to compute a matching $M \subseteq E(H'_{\init}) \subseteq E(H_{\init})$ that matches all but $O(\delta n)$ nodes in $V_{\med} \cup V_{\vhi}$. Note that $\Delta = \Theta(B)$, and that a phase lasts for $\delta n$ updates in $G$. Thus, at the start of a phase, computing the subgraph $H'_{\init}$ and the invocation of~\cref{lem:match almost max degree}  incurs an amortized update time of $$\tilde{O}\left( \frac{nB + n \Delta/\delta}{\delta n} \right) = \tilde{O}\left( \frac{B}{\delta^2} \right)$$ if we use the deterministic algorithm, and an amortized update time of $$\tilde{O}\left( \frac{nB + n \Delta \delta}{\delta n} \right) = \tilde{O}\left( \frac{B}{\delta} \right)$$ if we use the randomized algorithm. It is easy to verify that both these terms get subsumed within the respective (deterministic or randomized)  guarantees of~\cref{overview:lm:base:general}. 

We have thus shown to initialize the desired matching $M_{\base} = M$ at the beginning of the phase. 
We now give a sketch of how we maintain $M_{\base}$ dynamically throughout the phase.
Define $H_{\hilo}$ to be the induced graph $H_{\core}[V_{\vhi} \cup V_{\lo}]$. Recall that all edges in $E_{H_{\core}}(V_{\vhi})$ have their other endpoint in $V_{\lo}$ (\cref{overview:enu:very-high} of \cref{overview:prop:fact}), so the set of edges incident to $V_{\vhi}$ is exactly the same in $E_{H_{\hilo}}$ and $E_{H_{\core}}$. 
We are thus able to apply \cref{overview:lm:base} exactly as before to maintain a matching $M_{\hilo}$ in $H_{\hilo}$ that matches all of $V_{\safe}$. We will maintain the invariant that $M_{\hilo} \subseteq M_{\base}$, thus guaranteeing that all nodes in $V_{\safe}$ remain matched throughout the phase. The only remaining concern is that the algorithm for maintaining $M_{\hilo}$ might end up unmatching medium vertices in $M_{\base}$: if vertex $u \in V_{\hi}$ becomes unmatched, then to maintain $M_{\hilo}$ the algorithm from \cref{overview:lm:base} finds an augmenting path from $u$ to some $v \in V_{\lo}$ that was free in $M_{\hilo}$; but although $v$ was free in $M_{\hilo}$, it might have been matched to some medium node $x$ in $M_{\base}$, in which case node $x$ now becomes unmatched. Fortunately, it is not hard to show that every adversarial update creates only $O(1)$ free vertices, so since $M_{\base}$ has $O(\delta n)$ free medium vertices at the beginning of the phase, and since each phase lasts for $O(\delta n)$ updates, we maintain the desired property that there are always $O(\delta n)$ free medium vertices.

\subsection{Matching All the Safe Nodes: Proof Outline for~\cref{overview:lm:base}}
\label{sec:outline:proof}

The key to our proof is an algorithm for maintaining a left-perfect matching in a bipartite graph with a degree gap.

\begin{lemma}[Simplified version of \cref{lem:perfect matching}]
\label{lm:overview:degree-gap}
Let $G = (L \cup R, E)$ be a bipartite graph where, for some $X > 0$ and $0 < \gamma < 1$, every vertex $v \in L$ has $2X \geq \deg(v) \geq X$ and every vertex in $R$ has degree $\leq X(1-\gamma)$. Suppose that the above degree constraints always hold as $G$ is subject to $O(\delta n)$ decremental updates. Then, there exists an algorithm that maintains a left-perfect matching in $G$ -- i.e. a matching in which every vertex in $L$ is matched -- with the following update times: 
\begin{itemize}
\item A randomized algorithm with worst-case update time $\otil\left(X + 1/\gamma^3\right)$.
\item A deterministic algorithm with \emph{total} update time $\otil\left(\frac{nX}{\gamma} + \frac{n^{1.5}X\delta}{\gamma^{1.5}}\right)$.
\end{itemize}
\end{lemma}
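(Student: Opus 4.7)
The plan starts from the structural consequence of the degree gap. For any $S \subseteq L$, double-counting edges gives $|S| \cdot X \le \sum_{v \in N(S)} \deg(v) \le |N(S)| \cdot X(1-\gamma)$, hence $|N(S)| \ge (1+\gamma)|S|$. This Hall-with-slack condition persists throughout the decremental sequence, since the degree bounds are assumed invariant, so a left-perfect matching always exists and the only task is to repair a single augmenting path per deletion of a matched edge. Iterating the expansion along an alternating BFS from a free $L$-vertex, the set of reached $L$-vertices grows by a factor of at least $(1+\gamma)$ every two layers, so within $D := O(\gamma^{-1}\log n)$ layers the search must reach a free $R$-vertex. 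Consequently every needed augmenting path has length at most $D$.

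For the deterministic bound, the plan is to run a monotone Even--Shiloach tree on the alternating-path digraph $G'$ obtained by orienting $L\to R$ along $E \setminus M$, orienting $R \to L$ along $M$, attaching a super-source $s$ to every free $L$-vertex and a super-sink $t$ from every free $R$-vertex, and truncating the BFS at depth $D+2$. After each deletion an $s$-$t$ path of length $\le D+2$ is guaranteed by the expansion argument and can be extracted and augmented in $\tilde O(D)$ time. With $m=O(nX)$ edges, the ES-tree's baseline maintenance under deletions costs $\tilde O(m D) = \tilde O(nX/\gamma)$, which gives the first term. The wrinkle is that each augmentation flips the orientation of its edges, acting effectively as insertions that the monotone ES-tree does not support directly; we handle this by rebuilding the tree after every batch of $K$ augmentations and choosing $K$ to balance the rebuild cost $\tilde O(K^{-1}\delta n\cdot nX/\gamma)$ against the per-batch damage from flipped arcs, producing the second term $\tilde O(n^{1.5}X\delta/\gamma^{1.5})$.

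For the randomized bound, the plan is to locate augmenting paths via random alternating walks. When a matched edge is deleted leaving $u\in L$ free, the algorithm walks from $u$ by repeatedly taking a uniformly random $G$-edge to $R$ and then the unique matching edge back to $L$, declaring success if the current $R$-vertex is free. The $(1+\gamma)$-expansion translates into a spectral gap $\Omega(\gamma^2)$ for the $L$-to-$L$ transition matrix of this walk, so after $\tilde O(\gamma^{-2})$ steps it is close to its stationary distribution; since $|R|\ge(1+\gamma)|L|$, at least an $\Omega(\gamma)$-fraction of $R$ is unmatched at all times, so a walk succeeds with probability $\Omega(\gamma)$. Repeating $\tilde O(\gamma^{-1})$ independent walks boosts success to high probability, giving total per-augmentation cost $\tilde O(\gamma^{-3})$; each step costs $\tilde O(1)$ by uniform sampling from the adjacency list. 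The additive $\tilde O(X)$ covers maintaining per each $L$-vertex a $O(1)$-time uniform sampler over its current $G$-edges, which is paid once at initialization and $O(1)$ per subsequent edge change. Adaptivity is handled because only the returned augmenting path is revealed to the adversary; the coins of unsuccessful walks remain hidden.

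The main obstacle in the deterministic case is making the monotone ES-tree cooperate with matching flips, which formally act as arc insertions and break the monotonicity invariant underlying the $\tilde O(mD)$ ES-tree bound; the proof must charge the cost of every flip either to the decremental update budget or to a scheduled rebuild, and the batch size $K$ must be tuned so that both buckets remain affordable. In the randomized case the technical obstacle is justifying the $\Omega(\gamma^2)$ spectral-gap claim for the non-reversible alternating chain; this is done via a Cheeger-type inequality derived from the bipartite vertex-expansion $|N(S)|/|S|\ge 1+\gamma$ applied to the lifted reversible walk on $L\cup R$.
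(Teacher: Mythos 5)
Your randomized algorithm is essentially the paper's (random alternating walks of $\tilde{O}(\gamma^{-2})$ steps, repeated $\tilde{O}(\gamma^{-1})$ times), but your justification for the mixing rate is not right as stated: the $L$-to-$L$ transition matrix of the alternating walk is \emph{not} reversible (one step is a uniform edge, the other is forced along $M$), so neither a spectral gap nor a Cheeger inequality for a ``lifted reversible walk on $L\cup R$'' applies directly. The paper sidesteps this by contracting each matched edge and all free vertices into a single sink $t$, then \emph{adding back-edges out of $t$ to make the contracted digraph Eulerian}; conductance and hitting-time bounds for Eulerian digraphs (where the stationary distribution is proportional to degree) then give the $\Omega(\gamma)$ hit probability in $O(\log n/\gamma^2)$ steps. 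You would need that Eulerian-completion step (or an equivalent device) to make the mixing claim rigorous. Also, your accounting for the additive $\tilde{O}(X)$ is off: it is not about maintaining samplers (that is $O(1)$ per update); in the paper it is the amortized cost of initializing the matching via \cref{lem:match almost max degree} at the start of a phase (and, in the full version, the cost of explicitly stripping a vertex whose degree drops below $X$).

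The deterministic sketch has a genuine gap: you have misidentified the obstruction. The augmentation flips that you propose to control by batching are in fact handled for free by the ES tree. The paper's \cref{obs:reverse} shows that if $T$ is a shortest-path tree to $t$, inserting the reverse of a tree path \emph{before} deleting the forward edges never decreases any distance to $t$, so the flip is a valid monotone ES update. The real problem is elsewhere: when a \emph{matched} edge $(u,v)$ is deleted, the $R$-endpoint $v$ becomes free and the residual graph must acquire a new arc $(v,t)$; inserting that arc with weight~$1$ would drop $\dist(v,t)$ to~$1$, breaking monotonicity. Your plan simply truncates the BFS at depth $D+2$ and rebuilds after $K$ augmentations, which neither inserts these sink arcs correctly nor keeps the tree depth under control. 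The paper's fix is to insert $(v,t)$ with \emph{weight equal to the current $\dist(v,t)$} (so the insertion is monotone), accept that subsequent distances may grow, show that only $O(\text{\#deletions})$ vertices are ``affected'' and their distances stay $O(q_{\mathrm{ep}}\log n/\gamma)$ while the rest stay $O(\log n/\gamma)$, and reset the ES tree every $q_{\mathrm{ep}}=\sqrt{n\gamma}$ \emph{deletions} (not augmentations). Balancing the $\tilde{O}(nX/\gamma)$ per-epoch ES cost against $O(\delta n/q_{\mathrm{ep}})$ epochs produces the claimed $\tilde{O}\bigl(nX/\gamma + n^{1.5}X\delta/\gamma^{1.5}\bigr)$; your batch parameter $K$ is not tied to the quantity that actually degrades the tree.
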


Before proving the above Lemma, let us see why it implies \cref{overview:lm:base}

\begin{proof}[Proof sketch of \cref{overview:lm:base}]
For this proof sketch, we assume that we always have $V_{\vhi} = V_{\safe}$; in the full version, we are able to effectively ignore damaged nodes because  \cref{overview:lm:base} only requires us to match all the nodes of $V_{\safe}$. Let $H_{\hilo}$ be the subgraph of $H_{\core}$ that contains only the edges incident to $V_{\vhi}$. Observe that under our assumption of $V_{\vhi} = V_{\safe}$, $H_{\hilo}$ satisfies the properties of \cref{lm:overview:degree-gap} with $X = B/2$ and $\gamma \sim \delta$: note in particular that the first property of an EDCS (\cref{defn:edcs}) ensures that every vertex has degree $\leq B = 2X$, and that $H_{\hilo}$ is bipartite because by \cref{overview:prop:fact}, $H_{\init}$ contains no edges between vertices in $V_{\vhi}$.
We can thus apply \cref{lm:overview:degree-gap} to maintain a matching that matches all of $V_{\safe}$, as desired. The update-time bounds of  \cref{lm:overview:degree-gap} precisely imply those of \cref{overview:lm:base}\thatchaphol{For randomized version, the running time does not match.}; for the deterministic algorithm, recall that we amortize over the $O(\delta n)$ updates of a phase.

\end{proof}


\subsubsection{Maintaining Left-Perfect Matching: Proof of \cref{lm:overview:degree-gap}}

\paragraph{Randomized Algorithm.} The randomized algorithm is extremely simple. Say that we have a matching $M$ under which some $v \in L$ is still unmatched. We then find an augmenting path from $v$ to an $M$-free node in $R$ by taking a \emph{random matching-walk} from $v$. That is, we follow a random edge $(v,x)$ to $R$; if $x$ is unmatched then we have successfully found an augmenting path, while if $x$ is matched to some $y \in L$ then we repeat the process from $y$. 

We are able to show that because of the assumed degree gap of $G$, this random matching-walk terminates within $\otil(1/\gamma^{3})$ steps with high probability. The intuitive reason is as follows:  because of the degree gap, every set $S \subset L$ as at least $S(1+\gamma)$ neighbors in $R$. The graph is thus effectively a $\gamma$-expander, and in particular we are able to show a correspondence between random matching-walks in $G$ and ordinary random walks in some Eulerian directed $\gamma$-expander $G'$; the bound of  $\otil(1/\gamma^{3})$ then follows from known facts about Eulerian expanders.

To initialize the a left perfect matching before the first decremental update occurs we apply the randomized algorithm of Lemma~\ref{lem:match almost max degree} with $\Delta = X, \kappa = \gamma$. This returns a matching covering all but $O(n\gamma)$ vertices of $L$ in time $\tilde{O}(nX\gamma)$ adding an amortized update time component of $\tilde{O}(X)$ over the deletions. We may match each of the $O(n\gamma)$ unmatched vertices using the same random walk based technique as we use to handle deletions in time $\tilde{O(n\gamma^{-2})}$ adding an amortized update time component of $\tilde{O(\gamma^{-3})}$ over the $O(n\gamma)$ deletions.

\paragraph{Deterministic Algorithm.}
The deterministic algorithm is more involved than the randomized one, so we only give a high-level sketch. We start by maintaining a standard directed residual graph $\gres$ corresponding to the matching $M$ that we maintain (edges in $E \setminus M$ point from $L$ to $R$ in $\gres$, while edges in $M$ point from $R$ to $L$). We also create a dummy sink $t$ in $\gres$, with an edge from all $M$-free vertices in $R$ to $t$. It is not hard to check that there is a correspondence between paths from $L$ to $t$ in $\gres$ and augmenting paths in $G$. Our algorithm maintains a shortest path tree $\tres$ to sink $t$ in $\gres$. Given any $M$-free free vertex $v \in L$, we can find an augmenting path from $v$ by simply following the $v-t$ path in $\tres$.

The primary challenge is to maintain $\tres$. It is not hard to prove that 
for any $M$-free $v \in L$, there always \emph{exists} an augmenting path of length $\otil(1/\gamma)$ from $v$ to an $M$-free vertex in $R$; this follows from the same expander-based arguments as above, and can also be proved using more elementary techniques. This means that the depth of $\tres$ is bounded by $\otil(1/\gamma)$, which makes it easier to maintain. 

There exists a classic data structure known as an \textit{Even and Shiloach tree} (denoted ES-tree) that can maintain a low-depth shortest path tree to a sink $t$ in a decremental graph \cite{EvenS81}. The ES-tree can also handle a special kind of edge insertion into $G$ -- namely, an edge insertion is a valid update if it does not decrease any distances to $t$. If $\gres$ were a decremental graph, then we could us an ES-tree to maintain $\tres$, and we would be done. Unfortunately, even though the input graph $G$ to 
\cref{lm:overview:degree-gap} only undergoes edge deletions, the graph $\gres$ can undergo edge insertions for two reasons.

  


Firstly, whenever we augment down a path $P$, every edge in $P$ flips whether or not it is in $M$, which means we need to flip the directions  of all these edges in $\Gres$. We can model this flipping as an inserting of all edges on the reverse path $\rev{P}$, followed by a deletion of all edges on the original path $P$. Fortunately, it is not hard to show that because $\tres$ is a shortest-path tree, inserting $\rev{P}$ does not decrease any distances (as long as we so before deleting $P$), and hence is a valid update to the ES-tree. 

It is the second kind of edge insertion to $\gres$ that poses a significant problem. Say that the adversary deletes some matching edge $(u,v)$ in $G$. Then $v$ becomes free, so the graph $\gres$  should now contain an edge $(v,t)$; but simply inserting this edge into $\gres$ would not be a valid update to the ES-tree, as it would decrease $\dist(v,t)$. To overcome this issue, we observe that before the deletion of edge $(u,v)$ we had $\dist_{\gres}(v,t) = \otil(1/\gamma)$, because we could follow the matching edge from $v$ to $u$, and as discussed above we know that there exists an augmenting path from $u$ to a free vertex of length $\otil(1/\gamma)$. So inserting an edge $(v,t)$ of weight $\otil(1/\gamma)$ does not decrease any distances to $t$ and is hence a valid operation for the ES-tree. Unfortunately, once we start adding weights to edges to of $\gres$, the distances in $\gres$ will get longer and longer, so the next time we add an edge $(v',t)$ it might need to have even larger weight for this to be a valid insertion into the ES-tree. (In particular, the presence of edge weights means that we can no longer guarantee  that before the insertion of $(v',t)$ we had $\dist(v',t) = \otil(1/\gamma)$.) The distances in $\gres$ can thus blow up, which increases the (weighted) depth of $\tres$ and hence causes  the ES-tree to become inefficient.

To overcome the above issue, we define parameter $q = \sqrt{n\gamma}$ and use structural properties of degree-gap graphs to show that as long as $\leq q$ insertions have occurred, the edge weights have not had a chance to get too large, and the \emph{average} $\dist(v,t)$ remains small, so the ES-tree remains efficient. Then, after $q$ updates, we simply rebuild the entire ES tree from scratch. Since the total number of resets is only $O(n\delta/q)$, the total update time ends up being not too large.

To initialize a left perfect matching we will use the deterministic algorithm of Lemma~\ref{lem:match almost max degree} which returns a matching covering all but $O(n\gamma)$ vertices of $L$ in time $\tilde{O}(nX/\gamma)$. We augment this imperfect matching at initialization to be a perfect matching using the same datastructure as we use to handle edge deletions. This results in a running time proportional to handling $O(n\gamma)$ deletions matching the cost of handling the decremental updates.




\subsection{Extension to the Fully Dynamic Setting}
\label{sec:fullydynamic:overview}

It is relatively straightforward to extend the algorithm from~\cref{sec:decremental:overview} to a setting where the input graph $G = (V, E)$ undergoes both edge insertions and edge deletions. As in~\cref{sec:decremental:overview}, our fully dynamic algorithm works in { phases}, where each phase lasts for $\delta n$ updates in $G$. 

We maintain two subgraphs $G_{\texttt{new}} := (V, E_{\texttt{new}})$ and $G_{\core} := (V, E_{\core})$ of $G = (V, E)$, such that the edge-set $E$ is partitioned into $E_{\texttt{new}}$ and $E_{\core}$. To be more specific, at the start of a phase, we set $E_{\texttt{new}} := \emptyset$ and $E_{\core} := E$. Subsequently, during the phase,   whenever an edge $e$ gets deleted from $G$, we set $E_{\core} \leftarrow E_{\core} \setminus \{e\}$ and $E_{\texttt{new}} \leftarrow E_{\texttt{new}} \setminus \{e\}$. In contrast, whenever an edge $e$ gets inserted into $G$, we leave the set $E_{\core}$ unchanged, and set $E_{\texttt{new}} \leftarrow E_{\texttt{new}} \cup \{e\}$. 

It is easy to verify that within any given phase $G_{\core}$ is a {\em decremental} subgraph of $G$. Furthermore, since $E_{\texttt{new}} = \emptyset$ at the start of a phase, each update adds at most one edge to $E_{\texttt{new}}$ and a phase lasts for $\delta n$ updates, it follows that $|E_{\texttt{new}}| \leq \delta n$ at all times.

We maintain a maximal matching $M_{\core} \subseteq E_{\core}$ in  $G_{\core}$, using the decremental algorithm from~\cref{sec:decremental:overview}. In a bit more detail, we handle an update in $G$ as follows. We first let the decremental algorithm from~\cref{sec:decremental:overview} handle the resulting update in $G_{\core}$ and accordingly modify the matching $M_{\core}$. Next, we scan through all the edges in $G_{\texttt{new}}$ and greedily add as many of these edges as we can on top of $M_{\core}$, subject to the condition that the resulting edge-set $M_{\fnl} \supseteq M_{\core}$ remains a valid matching in $G$. This scan takes $O(|E_{\texttt{new}}|) = O(\delta n)$ time. It is easy to verify that the matching $M_{\fnl}$ we obtain at the end of our scan is a maximal matching in $G$.

To summarize, with an additive overhead of $O(\delta n)$ update time, we can convert the decremental maximal matching algorithm from~\cref{sec:decremental:overview} into a fully dynamic algorithm. Finally, looking back at the analysis from~\cref{sec:together}, it is easy to verify that this additive overhead of $O(\delta n)$ does \emph{not} degrade our overall asymptotic update time (the values of $\delta, B$ remain the same as in~\cref{sec:together}).

\section{Decremental Perfect Matching in Graphs with Degree Gap}
\label{sec:decremental}

We say that a bipartite graph $G=(L,R,E)$ has a \emph{$\gamma$-degree-gap at $X$} if $\deg_{G}(v)\ge X$ for all $v\in L$ and $\deg_{G}(v)\le X(1-\gamma)$ for all $v\in R$. Note that if $\gamma\ge0$, then, by Hall's theorem, there exists a matching $M$ that is \emph{left-perfect}, i.e., $M$ matches every vertex in $L$. 

\begin{lem}
\label{lem:perfect matching}Let $G=(L,R,E)$ be a bipartite graph with initial $\gamma$-degree-gap at $X$. We can build a data structure $\LPM$ (stands for Left Perfect Matching) on $G$ that maintains a matching $M$ and supports the following operations: 
\begin{itemize}
\item $\mathrm{Init}(M_{0})$ where $M_{0}$ is a matching in $G$ before any update: $M\gets M_{0}$. 
\item $\mathrm{Delete}(u,v)$ where $(u,v)\in E(L,R)$: set $G\gets G-(u,v)$ and $M\gets M-(u,v)$. If $\deg_{G}(u)<X$ and $(u,v')\in M$, $M\gets M-(u,v')$.
\item $\mathrm{Augment}(u)$ where $u\in L$ is $M$-free and $\deg_{G}(u)\ge X$: augment $M$ by an augmenting path so that $u$ is $M$-matched. 
\end{itemize}
The data structure can be implemented by either: 
\begin{enumerate}
\item \label{enu:det}a deterministic algorithm with $O(n\Delta\frac{\log(n)}{\gamma})\cdot(1+\frac{u}{\sqrt{n\gamma}})$ total update time where $\Delta$ is the maximum degree of $G$ and $u$ is the number of calls to $\mathrm{Delete}(\cdot)$. In particular, the total update time is independent of the number of calls to $\mathrm{Augment}(\cdot)$. 
\item \label{enu:rand}a randomized algorithm that takes $O(|M_{0}|)$ time to $\mathrm{Init}(\cdot)$, $O(X\log n)$ worst-case time to $\mathrm{Delete}(\cdot)$, and $O(\log^{3}(n)/\gamma^{3})$ time to $\mathrm{Augment}(\cdot)$. The algorithm is correct with high probability against an adaptive adversary. The algorithm assumes that each vertex in $G$ has access to a binary search tree of edges incident to it. 
\end{enumerate}
\end{lem}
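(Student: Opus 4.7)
I will handle $\mathrm{Augment}(u)$ and the re-matching triggered inside $\mathrm{Delete}$ uniformly as the task: given an $M$-free $u \in L$ with $\deg_G(u) \geq X$, find an augmenting path from $u$ to an $M$-free vertex in $R$ and flip it in $M$. The combinatorial backbone is a Hall-with-slack property implied by the $\gamma$-degree-gap: for every $S\subseteq L$ one has $|N_G(S)| \geq (1+\Omega(\gamma))|S|$, which in turn implies that at all times there is an augmenting path of length $O(\log(n)/\gamma)$ from any $M$-free $u\in L$. Both implementations find such a path; they differ in how.

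\textbf{Randomized algorithm.} For \cref{enu:rand} I plan to implement $\mathrm{Augment}(u)$ as a \emph{random matching-walk}: sample a uniformly random neighbor $x\in R$ of $u$ (using the BST on $N_G(u)$); if $x$ is $M$-free, augment along $(u,x)$; otherwise let $y$ be the $M$-mate of $x$, set $u\gets y$, and recurse. Pairing successive (non-matching, matching) steps defines a walk on an auxiliary digraph on vertex set $L$; I will use the degree gap to show that this digraph is approximately Eulerian and a spectral expander with gap $\Omega(\gamma)$. Standard mixing-time bounds then give that after $\tilde O(1/\gamma^2)$ steps the walk visits an $M$-free $R$-vertex with constant probability; boosting by $O(\log n)$ independent restarts gives high probability and the claimed $O(\log^3(n)/\gamma^3)$ runtime per augment. $\mathrm{Init}$ just copies $M_0$ in $O(|M_0|)$, and $\mathrm{Delete}$ is local bookkeeping in the per-vertex BSTs in $O(X\log n)$ worst-case time (the $X$ factor absorbs the possible matching deletion and the resulting degree/adjacency updates). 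Against an adaptive adversary, correctness follows because each walk's random bits are consumed within that single walk and are never exposed to future updates.

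\textbf{Deterministic algorithm.} For \cref{enu:det} I will maintain the directed residual graph $\gres$ of $M$ together with a dummy sink $t$ receiving one edge from every $M$-free $v\in R$; then $u\rightsquigarrow t$ paths in $\gres$ correspond exactly to augmenting paths from $u$ in $G$. I will run a monotone Even--Shiloach tree $\tres$ to $t$ truncated at depth $d=O(\log(n)/\gamma)$, which is justified by the Hall-with-slack argument above. Two forms of insertion into $\gres$ must be handled. First, when we augment along a shortest $u$--$t$ path $P$, we insert the reversed path $\rev{P}$; since $P$ was shortest, this does not decrease any $\dist_{\gres}(\cdot,t)$ and is an ES-valid insertion. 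Second, when $\mathrm{Delete}(u,v)$ removes a matching edge and frees $v\in R$, a new edge $(v,t)$ must appear; I give it weight $d$, so that $\dist_{\gres}(\cdot,t)$ does not decrease. To control blowup of weighted depth as these heavy edges accumulate, I rebuild $M$ and $\tres$ from scratch every $q=\Theta(\sqrt{n\gamma})$ deletions, re-initializing $M$ via \cref{lem:match almost max degree} and augmenting the resulting near-perfect matching using the same data structure. Each phase costs $O(|E|\cdot d)=O(n\Delta\log(n)/\gamma)$ for the ES-tree plus the same order for the rebuild, and there are $O(1+u/q)$ phases, yielding the target $O(n\Delta\log(n)/\gamma)(1+u/\sqrt{n\gamma})$ total time.

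\textbf{Main obstacle.} The sharpest step is the deterministic handling of the new edges into $t$: I must show that, within a phase of $q=\Theta(\sqrt{n\gamma})$ deletions, the \emph{average} $\dist_{\gres}(v,t)$ remains $O(d)$, so that the amortized ES-tree work stays $\tilde O(|E|\cdot d)$ even as up to $q$ weight-$d$ shortcuts to $t$ accumulate. This requires combining the Hall-expansion bound with a counting argument on how many free right-vertices can be created before the expansion certificate breaks, and then choosing $q$ so that this average-distance bound and the per-phase rebuild cost balance. Once this amortization is in place, correctness of both $\mathrm{Augment}$ and $\mathrm{Delete}$ reduces to the standard invariants of a monotone ES-tree together with the path-reversal bijection between $M$-augmenting paths in $G$ and $u$--$t$ paths in $\gres$.
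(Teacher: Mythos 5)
Your high-level architecture matches the paper's on both fronts: the randomized algorithm uses random matching-walks analyzed via an Eulerian-expander auxiliary graph and boosted by $O(\log n)$ restarts, and the deterministic algorithm uses a residual graph $\gres$ with a monotone Even--Shiloach tree $\tres$, rebuilt every $q = \Theta(\sqrt{n\gamma})$ deletions. The randomized sketch is essentially correct at this level of detail. The deterministic sketch, however, contains two concrete bugs that both stem from the same underlying error: the weighted distances in $\gres$ do \emph{not} stay $O(d)$ with $d = O(\log(n)/\gamma)$ throughout a phase.

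First, truncating $\tres$ at depth $d$ is a correctness error, not merely an efficiency optimization. As heavy edges into $t$ accumulate, the maximum edge weight---and hence the weighted distance $\dist_{\gres}(v,t)$ of some vertices---grows to $\Theta(kd)$ after $k$ deletions of matched edges. Some of these vertices are left vertices $u$ on which $\Augment(u)$ must still be answered; a tree truncated at depth $d$ would drop them and your data structure could not find their augmenting paths. The paper does not truncate. Instead it partitions vertices into an unaffected set, for which $\dist_{\gres}(v,t) = O(d)$ always holds, and an affected set of size $O(q/\gamma)$, for which distances may grow to $O(qd)$, and bounds the total ES work by $\sum_v \deg(v)\cdot\distmax(v,t) = O(\Delta n d + \Delta q^2 d/\gamma)$, which balances at $q=\sqrt{n\gamma}$.

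Second, assigning every new edge $(v,t)$ the fixed weight $d$ is not always an ES-valid insertion: validity requires that $\dist_{\gres}(v,t) \le d$ just before the insertion, and by the same distance-growth argument this fails once $\dist_{\gres}(v,t)$ exceeds $d$, which can already happen after only a few matched-edge deletions within a single phase. The paper instead assigns the edge $(v,t)$ a weight equal to the \emph{current} $\dist_{\gres}(v,t)$, inserting $(v,t)$ before deleting the matched edge $(v,u)$, so the insertion is exactly tight and always valid.

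Your ``main obstacle'' paragraph correctly senses that a counting argument on newly created free right-vertices is needed, and the affected/unaffected decomposition is precisely the formalization you would want; you would simply need to abandon the truncation and the fixed-weight simplifications for the argument to go through.
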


Note that the $\LPM$ data structure allows one to maintain a matching that matches all nodes with $L$ with $\deg_G(u) \geq X$: whenever such a node $u$ becomes unmatched, call $\Augment(u)$. That is essentially how we will use the data structure in Section \ref{new:dyn:maximal}, but for convenience of the interface it helps us to have $\Augment$ as a separate procedure.

\subsection{Structural Properties of Degree-Gap Graphs}


In this section, we present the structural properties of graphs with a degree gap that are crucial for maintaining left-perfect matching in both our randomized and deterministic algorithms.

\begin{lem}
\label{lem:short-path-deg-gap} Let $G=(L,R,E)$ be a bipartite graph with $\gamma$-degree-gap at $X$. Let $M$ be some matching in $G$. Given any vertex $u\in L$ (matched or unmatched), there exists an alternating path $P$ from $u$ that starts with an edge $e\notin M$, ends at an $M$-free vertex, and has length $O(\log(n)/\gamma)$. 

Moreover, we can find such path $P$ in $O(\log^{3}(n)/\gamma^{3})$ time with high probability. This assumes that each vertex in $G$ has access to a binary search tree of edges incident to it.
\end{lem}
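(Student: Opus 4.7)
The plan is to separate the statement into (i) the combinatorial fact that short augmenting alternating paths always exist, and (ii) a randomized procedure that locates such a path efficiently. For (i), I would first observe that the $\gamma$-degree-gap at $X$ implies the expansion bound $|N_G(S)| \geq (1+\gamma)|S|$ for every $S \subseteq L$: the number of edges between $S$ and $N_G(S)$ is $\geq X|S|$ when counted from the $L$-side and $\leq X(1-\gamma)|N_G(S)|$ when counted from the $R$-side. Then I would run an alternating BFS from $u$ along paths that begin with a non-matching edge. Letting $\hat{L}_k \subseteq L$ and $\hat{R}_k \subseteq R$ be the vertices reached by alternating paths of length $\leq 2k$ and $\leq 2k+1$ respectively, either some vertex of $\hat{R}_k$ is $M$-free -- and we are done -- or every vertex of $\hat{R}_k$ is matched, in which case following the matching edges gives $|\hat{L}_{k+1}| \geq |\hat{R}_k| \geq |N_G(\hat{L}_k)|-1 \geq (1+\gamma)|\hat{L}_k|-1$, the $-1$ accounting for $u$'s own matching partner which may escape $\hat{R}_k$. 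This still grows like $(1+\Theta(\gamma))^k$ once $|\hat{L}_k| = \Omega(1/\gamma)$, so after $k = O(\log(n)/\gamma)$ levels the set size must exceed $n$, yielding an augmenting path of length $O(\log(n)/\gamma)$.

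For (ii), the algorithm is a random matching-walk. Starting at $u$, sample a uniformly random $R$-neighbour $r$ of the current $L$-vertex $w$ using the binary search tree on the edges incident to $w$ in $O(\log n)$ time; if $r$ is $M$-free, stop; otherwise jump to $M(r) \in L$ and continue. Along the way I would record parent pointers so that, after termination, I can reconstruct the alternating path and short-cut repeated vertices so that the returned $P$ is simple. Everything therefore reduces to showing that this walk terminates within $\tilde{O}(1/\gamma^3)$ steps with high probability, which together with the $O(\log n)$ per-step cost gives the claimed $O(\log^3(n)/\gamma^3)$ running time.

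The main obstacle will be the hitting-time analysis. My plan is to recast the walk as a Markov chain on $L$ whose underlying weighted digraph $G'$ has transition weight $1/\deg_G(w)$ from $w$ to $M(r)$ for each edge $(w,r) \in E$, with a dummy absorbing state accumulating the mass that corresponds to reaching an $M$-free $r$. The $\gamma$-degree-gap ensures that $G'$ is essentially Eulerian -- in- and out-weights balance up to the small slack absorbed by the dummy state -- and inherits edge-expansion $\Omega(\gamma)$ from the combinatorial expansion bound used in (i). Invoking known mixing-time bounds for Eulerian $\gamma$-expanders, which give mixing time $\tilde{O}(1/\gamma^3)$, together with the guarantee (from (i)) that a free-$r$ transition is reachable from $u$, implies that the walk is absorbed within $\tilde{O}(1/\gamma^3)$ steps with constant probability. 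Boosting to high probability by $O(\log n)$ independent restarts -- each triggered if the walk has not yet terminated -- and multiplying by the $O(\log n)$ cost of each sampling step yields the claimed $O(\log^3(n)/\gamma^3)$ total running time. The delicate point I expect to need care with is verifying that it is really hitting the absorbing state, rather than mere mixing on $L$, that can be controlled by the expansion of $G'$; treating the absorbing transitions as a rank-one perturbation of the Eulerian chain and arguing via a standard hitting-time-to-mixing-time comparison should suffice.
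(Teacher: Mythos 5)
Your part~(ii) is essentially the paper's proof. The paper forms a directed graph $G_M$ by contracting each matched pair into a single vertex and all $M$-free vertices into a sink $t$, then Eulerianizes by adding edges out of $t$ to obtain $G_{\Eu}$; your ``chain on $L$ with a dummy absorbing state'' is this same construction up to relabeling. Two details deserve sharpening. First, ``mixing time $\tilde{O}(1/\gamma^{3})$'' is a misstatement: the lazy walk on a $\gamma$-expander mixes in $O(\log n/\gamma^{2})$ steps, and the extra $1/\gamma$ factor comes from a fact you must establish, namely that the stationary distribution puts $\Omega(\gamma)$ mass on $t$ (in the paper's notation, $\deg_{G_{\Eu}}(t)\geq\frac{\gamma}{2}\vol_{G_{\Eu}}(V)$); repeating $O(\log n/\gamma)$ mixing periods then gives high probability of hitting, and the ``delicate point'' you flag dissolves without any rank-one-perturbation argument. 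Second, ``inherits edge-expansion $\Omega(\gamma)$'' is not immediate, because Eulerianization adds edges out of $t$ that the degree gap says nothing about; the paper handles this with a case split on whether $t$ lies on the small side of the cut, using Eulerian cut symmetry $|E(S,\overline{S})|=|E(\overline{S},S)|$ to reduce to the side where the degree gap applies directly.

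Your part~(i) takes a genuinely different and more elementary route: the paper invokes the generic logarithmic-diameter bound for Eulerian $\gamma$-expanders applied to $G_{\Eu}$, while you run an alternating BFS with the Hall-with-slack expansion. The BFS route works, but the recursion $|\hat{L}_{k+1}|\geq(1+\gamma)|\hat{L}_{k}|-1$ by itself does not certify growth when $|\hat{L}_{k}|<1/\gamma$, which is a real hole in the argument as written. To close it, observe that $N(\hat{L}_{k})\subseteq\hat{R}_{k}\cup\{M(u)\}$ (only $u$'s own matching partner can fail to enter $\hat{R}_{k}$), that when no free vertex has been found every vertex of $\hat{R}_{k}$ is matched into a distinct vertex of $\hat{L}_{k+1}$, and that since $|N(S)|$ is an integer the bound $|N(S)|\geq|S|/(1-\gamma)>|S|$ forces $|N(S)|\geq|S|+1$; together these give $|\hat{L}_{k+1}|>|\hat{L}_{k}|$ strictly at every level, so after $O(1/\gamma)$ levels the multiplicative regime is reached and the $O(\log(n)/\gamma)$ bound follows.
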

To prove the above lemma, we will analyze the graph   $G_M$ obtained from $G$ by (1) contracting each matched edge $(u_{\ell},u_{r})\in M$ where $u_{\ell}\in L$ and $u_{r}\in R$ into a single vertex $u$ and (2) contracting all free vertices into a single \emph{sink} vertex $t$. We keep parallel edges and remove self-loops from $G_{M}$. The crucial structure about $G_{M}$ is summarized below.
\begin{lem}
\label{lem:structure GM}For every non-sink vertex $u\in V(G_{M})$, 
\begin{enumerate}
\item \label{enu:exist path}There exists a path from $u$ to $t$ in $G_{M}$ of length $O(\log(n)/\gamma)$, and 
\item \label{enu:random walk}a random walk from $u$ in $G_{M}$ for $\Theta(\log(n)/\gamma^{2})$ steps passes through $t$ with probability $\Omega(\gamma)$. 
\end{enumerate}
\end{lem}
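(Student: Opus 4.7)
The plan is to derive an expansion property of $G_{M}$ from the degree gap of $G$, and then to deduce both parts of the lemma from this property.

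First, I would establish the following key expansion lemma: for any subset $S \subseteq V(G_{M}) \setminus \{t\}$, the number of $G_{M}$-edges leaving $S$, counted with multiplicity in the multigraph, is at least $X\gamma|S|$. The argument is a direct degree-counting one: letting $S_{L}$ and $S_{R}$ denote the $L$- and $R$-sides of the matched pairs in $S$ (so $|S_{L}| = |S_{R}| = |S|$), the degree gap gives $\sum_{v \in S_{L}} \deg_{G}(v) \geq X|S|$ while $\sum_{v \in S_{R}} \deg_{G}(v) \leq X(1-\gamma)|S|$. All $G$-edges from $S_{L}$ ending in $S_{R}$ contribute to the second sum, so at least $X\gamma|S|$ edges from $S_{L}$ have their other endpoint in $R \setminus S_{R}$; each such edge contributes to the $G_{M}$-boundary of $S$, going either to $t$ (if that endpoint is free) or to a matched pair outside $S$.

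For Part 1, I would run a BFS from $u$ in $G_{M}$. As long as $t$ is not in the current ball $B_{i}$, the expansion lemma lower-bounds its edge boundary by $X\gamma|B_{i}|$; since each non-sink vertex in $G_{M}$ has degree at most $2X$ (directly from the degree gap), the vertex boundary is at least $(\gamma/2)|B_{i}|$, giving $|B_{i+1}| \geq (1+\gamma/2)|B_{i}|$. Starting from $|B_{0}|=1$, this multiplicative growth must saturate $V(G_{M})$ within $O(\log(n)/\gamma)$ layers, forcing $t$ to be reached by some such layer and yielding the required short path.

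For Part 2, the same expansion yields a conductance lower bound $\phi(G_{M}) = \Omega(\gamma)$ on the multigraph $G_{M}$. I plan to relate the random walk on $G_{M}$ to a random walk on an Eulerian directed $\gamma$-expander $G'$, following the correspondence hinted at in the overview: by appropriately orienting and duplicating the non-matching edges of $G$ relative to $M$ one can construct $G'$ whose expansion parameter inherits $\Omega(\gamma)$ from the degree gap and whose in-degree equals its out-degree at every vertex. This reduction lets us invoke known hitting-time bounds for random walks on Eulerian directed expanders, giving hitting probability $\Omega(\gamma)$ for the vertex $t$ within $O(\log(n)/\gamma^{2})$ steps.

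The main obstacle will be Part 2. A direct mixing-time argument only shows that after $T = O(\log(n)/\gamma^{2})$ steps the walk is close to its stationary distribution; the stationary mass at $t$, however, can be far smaller than $\gamma$ (e.g.\ when few free vertices have low $G$-degree), so one cannot conclude the desired hitting probability from stationarity alone. The Eulerian-expander reduction is precisely what lets us bypass this issue, since the relevant hitting-probability bounds depend only on the expansion parameter $\gamma$ rather than on $\pi(t)$. Carefully setting up this reduction so that the random walk on $G_{M}$ actually corresponds to a uniform random walk on a genuinely Eulerian directed multigraph is the most delicate step, and will drive the bookkeeping in the formal proof.
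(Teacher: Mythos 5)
Your Part~1 argument is a valid alternative to the paper's: rather than bounding the diameter of an Eulerian supergraph, you grow a BFS ball in $G_M$ directly using the expansion lemma. This works, though the claim ``each non-sink vertex in $G_M$ has degree at most $2X$ (directly from the degree gap)'' is not justified --- the degree-gap definition in Section~5 places no upper bound on degrees in $L$. You should instead bound the \emph{in}-degree of each non-sink vertex by $X(1-\gamma)$ (since it comes from a matched $R$-vertex of degree $\le X(1-\gamma)$), which still gives multiplicative growth $(1+\Omega(\gamma))$ per layer and hence diameter $O(\log(n)/\gamma)$.

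For Part~2 there is a genuine gap, and it stems from a misconception. You write that ``the stationary mass at $t$ can be far smaller than $\gamma$'' and that one therefore cannot conclude the hitting bound ``from stationarity alone,'' and you instead appeal to unspecified ``hitting-time bounds for Eulerian directed expanders'' that would depend ``only on the expansion parameter.'' In fact the stationarity argument is exactly what works, once one makes the right Eulerian construction. The paper builds $G_{\Eu}$ from $G_M$ by adding, for each non-sink vertex $u$, parallel edges $(t,u)$ until $u$'s in-degree matches its out-degree. Since every non-sink $u$ has a surplus of out-degree over in-degree of at least $\gamma X - O(1)$, the balancing edges all emanate from $t$, so $\deg_{G_{\Eu}}(t)$ equals the in-degree of $t$, which is $|E_{G_M}(S,t)|$ for $S = V \setminus \{t\}$, and by your own expansion estimate this is at least $\gamma\,\vol^{\mathrm{out}}_{G_M}(S) \ge \tfrac{\gamma}{2}\vol_{G_{\Eu}}(V)$. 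Hence $\pi(t) = \deg_{G_{\Eu}}(t)/\vol_{G_{\Eu}}(V) = \Omega(\gamma)$, and after $O(\log(n)/\gamma^2)$ steps of a lazy walk (the paper adds self-loops to get aperiodicity and invoke $\lambda \le 1 - \Phi^2/2$) the walk distribution is close to stationary, giving hitting probability $\Omega(\gamma)$. Finally, since the added edges all point \emph{out} of $t$, any walk in $G_{\Eu}$ that would use one has already visited $t$, so the bound transfers to $G_M$. Your proposal does not identify this key fact about $\pi(t)$, mischaracterizes the $G_{\Eu}$ construction (it balances by adding $(t,u)$ edges, not by ``orienting and duplicating non-matching edges of $G$''), and does not address laziness/aperiodicity or the transfer from $G_{\Eu}$ back to $G_M$; these are the missing ideas you would need to complete the argument.
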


Before proving \Cref{lem:structure GM}, we show why it immediately implies \Cref{lem:short-path-deg-gap}.

\paragraph{Proof of \Cref{lem:short-path-deg-gap}.}
\begin{proof}
Each non-sink vertex $u$ in $G_{M}$ corresponds to a matched edge $(u_{\ell},u_{r})\in M$ where $u_{\ell}\in L$ and $u_{r}\in R$. Observe that any path 
\[
P_{M}=(u^{(1)},u^{(2)},u^{(3)},\dots,u^{(z)},t)
\]
from $u^{(1)}$ to $t$ in $G_{M}$ corresponds to an alternating path 
\[
P=(u_{\ell}^{(1)},u_{r}^{(2)},u_{\ell}^{(2)},u_{r}^{(3)},u_{\ell}^{(3)},\dots,u_{r}^{(z)},u_{\ell}^{(z)},v_{r})
\]
in $G$ where $v_{r}\in R$ is $M$-free and the first edge $(u_{\ell}^{(1)},u_{r}^{(2)})\notin M$ 

Therefore, \Cref{lem:structure GM}(\ref{enu:exist path}) implies that every $M$-matched vertex $u_{\ell}\in L$ has an alternating path $P$ of length $O(\log(n)/\gamma)$ from $u$ that starts with an edge $e\notin M$ and ends at an $M$-free vertex. 

For each $M$-free vertex $u_{\ell}\in L$, if $u_{\ell}$ has an $M$-free neighbor $v_{r}$, then $(u_{\ell},v_{r})$ is the desired path of length one. Otherwise, $u_{\ell}$ has a neighbor $u_{r}^{(2)}$ where $(u_{r}^{(2)},u_{\ell}^{(2)})\in M$. By the above argument, there exists an alternating path $P_{2}$ from $u_{\ell}^{(2)}$ to some $M$-free vertex $v_{r}$ of length $O(\log(n)/\gamma)$. So the concatenated path $P=(u_{\ell},u_{r}^{(2)})\circ P_{2}$ is the desired path of length $O(\log(n)/\gamma)$. This completes the proof of the first part of \Cref{lem:short-path-deg-gap}.

\Cref{lem:structure GM}(\ref{enu:random walk}) analogously implies the ``moreover'' part of \Cref{lem:short-path-deg-gap}. The concrete algorithm description is as follows: 
\begin{quote}
Given a vertex $u\in L$, choose a random free edge $(u,v)\notin M$. If $v$ is $M$-free, then terminate. Otherwise, let $u_{2}\in L$ be the unique vertex where $(v,u_{2})\in M$. Repeat the algorithm from $u_{2}$.
\end{quote}
This ``random alternating walk'' algorithm will arrive at an $M$-free vertex in $\Theta(\log(n)/\gamma^{2})$ steps by \Cref{lem:structure GM} and by the correspondence between paths in $G_{M}$ and alternating paths in $G$ as discussed above. By repeating the random alternating walks for $k=O(\log(n)/\gamma)$ times, the failure probability is at most $(1-\Omega(\gamma))^{k}<1/{\rm poly}(n)$. Since each vertex has access to a binary search tree of its incident edges, each random walk step can be implemented in $O(\log n)$ time. Hence, the total running time is $O(\log n)\times k\times\Theta(\log(n)/\gamma^{2})=\Theta(\log^{3}(n)/\gamma^{3})$.
\end{proof}
The rest of this section is for proving \Cref{lem:structure GM}. 

\paragraph{Preliminaries for \Cref{lem:structure GM}.}

Recall that a directed graph $G=(V,E)$ is \emph{Eulerian} if, for each vertex $v\in V$, the in-degree and the out-degree of $v$ in $G$ are the same. Let $\deg_{G}(v)$ denote the out-degree of $v$ (which is the same as in-degree of $v$). For any vertex set $S\subseteq V$, let $\vol_{G}(S)=\sum_{u\in S}\deg_{G}(u)$ denote the total degree in $S$. 
The conductance $\Phi(G)$ of $G$ is defined as
\begin{equation}
\Phi(G)=\min_{\emptyset\neq S\subset V}\frac{|E(S,V\setminus S)|}{\min\{\vol_{G}(S),\vol_{G}(V\setminus S)\}}.\label{eq:conductance}
\end{equation}
Note that $|E(S,V\setminus S)|=|E(V\setminus S,S)|$ in any Eulerian graph. When $\Phi(G)\ge\phi$, we say that $G$ is a $\phi$-expander.

\paragraph{Proof of \Cref{lem:structure GM}.}
Ultimately, \Cref{lem:structure GM} follows from only that $G$ has a $\gamma$-degree-gap at $X$. 

First, observe that every non-sink vertex $u$ has $\deg_{G_{M}}^{{\rm out}}(u)\ge X$ and $\deg_{G_{M}}^{{\rm in}}(u)\le X(1-\gamma)$. For convenient, we will analyze $G_{M}$ via a supergraph $G_{\Eu}$ of $G_{M}$ obtained from $G_{M}$ by adding parallel edges $(t,u)$ to each non-sink vertex $u$ until the in-degree of $u$ is the same as its out-degree. 

\begin{prop}
$G_{\Eu}$ is Eulerian.
\end{prop}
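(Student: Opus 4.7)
My plan is to verify the Eulerian condition (in-degree equals out-degree) at every vertex of $G_{\Eu}$ one type at a time. For each non-sink vertex $u$, the condition is essentially built into the construction: by definition we added exactly $\deg_{G_{M}}^{{\rm out}}(u)-\deg_{G_{M}}^{{\rm in}}(u)$ parallel copies of the edge $(t,u)$, which by design equalizes the in- and out-degrees of $u$. The only thing to justify here is that this count is non-negative. This is precisely where the $\gamma$-degree-gap hypothesis enters: the contracted vertex $u$ corresponding to the matched edge $(u_{\ell},u_{r})\in M$ inherits the out-degree lower bound from $u_{\ell}\in L$ (giving $\deg_{G_{M}}^{{\rm out}}(u)\ge X$, as already noted before the proposition) and the in-degree upper bound from $u_{r}\in R$ (giving $\deg_{G_{M}}^{{\rm in}}(u)\le X(1-\gamma)$), so the difference is at least $\gamma X\ge 0$.

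The real step is to check balance at the sink~$t$, and here I plan to use nothing more than the universal ``handshake'' identity $\sum_{v}\deg_{G_{M}}^{{\rm in}}(v)=\sum_{v}\deg_{G_{M}}^{{\rm out}}(v)$ that holds in any directed multigraph. Isolating the contribution of $t$ yields
\[
\deg_{G_{M}}^{{\rm in}}(t)-\deg_{G_{M}}^{{\rm out}}(t)\;=\;\sum_{u\neq t}\bigl(\deg_{G_{M}}^{{\rm out}}(u)-\deg_{G_{M}}^{{\rm in}}(u)\bigr).
\]
The right-hand side is, by construction, exactly the total number of edges we added out of $t$ when forming $G_{\Eu}$ from $G_{M}$. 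Consequently,
\[
\deg_{G_{\Eu}}^{{\rm out}}(t)\;=\;\deg_{G_{M}}^{{\rm out}}(t)+\sum_{u\neq t}\bigl(\deg_{G_{M}}^{{\rm out}}(u)-\deg_{G_{M}}^{{\rm in}}(u)\bigr)\;=\;\deg_{G_{M}}^{{\rm in}}(t)\;=\;\deg_{G_{\Eu}}^{{\rm in}}(t),
\]
since we only changed out-edges at $t$. Combined with the non-sink case, this gives the Eulerian property at every vertex.

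I do not anticipate any real obstacle: the statement is essentially a bookkeeping check, and the only care needed is with the self-loop arising from the matched edge $(u_{\ell},u_{r})$ getting contracted to a loop on $u$ and then removed. Since removing a self-loop decreases the in- and out-degree of $u$ by the same amount, it affects neither the degree-gap inequality $\deg_{G_{M}}^{{\rm out}}(u)\ge\deg_{G_{M}}^{{\rm in}}(u)$ nor the handshake identity. So the argument goes through cleanly, and the two short calculations above together establish that $G_{\Eu}$ is Eulerian.
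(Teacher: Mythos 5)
Your proof is correct and takes essentially the same route as the paper: the non-sink case is immediate from the construction, and balance at the sink $t$ follows from the handshake identity $\sum_v \deg^{\rm in}(v) = \sum_v \deg^{\rm out}(v)$. The only difference is that you explicitly flag the non-negativity of $\deg_{G_M}^{\rm out}(u) - \deg_{G_M}^{\rm in}(u)$ (needed for the ``add parallel edges'' step to be well-defined) via the degree gap, a small but worthwhile piece of bookkeeping that the paper leaves implicit.
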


\begin{proof}
By construction, $\deg_{G_{\Eu}}^{{\rm in}}(u)=\deg_{G_{\Eu}}^{{\rm {\rm out}}}(u)$ for all $u\neq t$. It remains to prove that $\deg_{G_{\Eu}}^{{\rm in}}(t)=\deg_{G_{\Eu}}^{{\rm {\rm out}}}(t)$. This follows because, for any directed graph $H$, we have $\sum_{u}\deg_{H}^{{\rm in}}(u)=\sum_{u}\deg_{H}^{{\rm {\rm out}}}(u)$. 
\end{proof}
For any vertex set $S\subseteq V(G_{\Eu})=V(G_{M})$, we write $\vol_{G_{\Eu}}(S)=\sum_{u\in S}\deg_{G_{\Eu}}^{{\rm out}}(u)=\sum_{u\in S}\deg_{G_{\Eu}}^{{\rm in}}(u)$. Let $\vol_{G_{M}}^{{\rm in}}(S)=\sum_{u\in S}\deg_{G_{M}}^{{\rm in}}(u)$ and $\vol_{G_{M}}^{{\rm out}}(S)=\sum_{u\in S}\deg_{G_{M}}^{{\rm out}}(u)$. 
\begin{prop}
\label{prop:conductance G Eu}The conductance of $G_{\Eu}$ is $\Phi(G_{\Eu})\ge\gamma$.
\end{prop}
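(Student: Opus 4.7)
The plan is to leverage the large number of parallel edges added from $t$ to each non-sink vertex when forming $G_{\Eu}$: these edges alone will turn out to contribute enough to every cut to establish the conductance bound.

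First, I would compute the multiplicity $k_u$ of the parallel edges $(t,u)$ added in constructing $G_{\Eu}$. A non-sink vertex $u$ corresponds to a matched pair $(u_{\ell},u_{r})\in M$ with $u_{\ell}\in L,u_{r}\in R$. The matching edge becomes a self-loop and is discarded, so $\deg_{G_{M}}^{\mathrm{out}}(u)=\deg_{G}(u_{\ell})-1$ and $\deg_{G_{M}}^{\mathrm{in}}(u)=\deg_{G}(u_{r})-1$. By construction, $k_{u}=\deg_{G_{M}}^{\mathrm{out}}(u)-\deg_{G_{M}}^{\mathrm{in}}(u)=\deg_{G}(u_{\ell})-\deg_{G}(u_{r})$. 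Using the $\gamma$-degree-gap ($\deg_{G}(u_{\ell})\ge X$, $\deg_{G}(u_{r})\le X(1-\gamma)$), one gets $k_u \geq \gamma X$.

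Second, I would upgrade this to the ratio bound $k_{u}\ge\gamma\cdot\deg_{G_{\Eu}}(u)$, which is the form actually needed for the conductance argument. Since $\deg_{G_{\Eu}}(u)=\deg_{G_{M}}^{\mathrm{out}}(u)=\deg_{G}(u_{\ell})-1$, the claim reduces to $\deg_{G}(u_{r})\le(1-\gamma)\deg_{G}(u_{\ell})+\gamma$. This follows because $\deg_{G}(u_{r})\le X(1-\gamma)\le(1-\gamma)\deg_{G}(u_{\ell})$ by the degree gap.

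Third, I would bound an arbitrary cut. Let $\emptyset\neq S\subsetneq V(G_{\Eu})$, and let $T\in\{S,\bar{S}\}$ be the side that does \emph{not} contain $t$. Every $(t,u)$-edge with $u\in T$ lies in $E(\bar{T},T)$, so
\[
|E(\bar{T},T)|\;\ge\;\sum_{u\in T}k_{u}\;\ge\;\gamma\sum_{u\in T}\deg_{G_{\Eu}}(u)\;=\;\gamma\vol_{G_{\Eu}}(T).
\]
Because $G_{\Eu}$ is Eulerian, $|E(S,\bar{S})|=|E(\bar{S},S)|=|E(\bar{T},T)|$, so the cut size is at least $\gamma\vol_{G_{\Eu}}(T)\ge\gamma\min\{\vol_{G_{\Eu}}(S),\vol_{G_{\Eu}}(\bar{S})\}$, where the final inequality holds because if $T$ happens to be the larger side, then $\vol(T)\ge\vol(\bar T)=\min$ anyway. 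Plugging into the definition~(\ref{eq:conductance}) gives $\Phi(G_{\Eu})\ge\gamma$.

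The argument is essentially immediate once one has the per-vertex inequality $k_{u}\ge\gamma\deg_{G_{\Eu}}(u)$; the only mildly delicate point is handling the off-by-one coming from deleting the matching self-loop, which is what forces us to take the slightly non-trivial step from $k_u \ge \gamma X$ to $k_u \ge \gamma\deg_{G_{\Eu}}(u)$.
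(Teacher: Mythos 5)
Your proof is correct, and it reaches the conductance bound by a route that is noticeably more concrete than the paper's, though ultimately resting on the same numerical fact. The paper bounds the \emph{outgoing} cut $|E_{G_M}(T,\bar T)|$ from the $t$-free side $T$ by writing it as $\vol^{\mathrm{out}}_{G_M}(T)-|E_{G_M}(T,T)|\ge\vol^{\mathrm{out}}_{G_M}(T)-\vol^{\mathrm{in}}_{G_M}(T)\ge\gamma\vol^{\mathrm{out}}_{G_M}(T)$, then lifts to $G_{\Eu}$ via $G_{\Eu}\supseteq G_M$ and handles the two positions of $t$ as separate cases. You instead bound the \emph{incoming} cut $|E_{G_{\Eu}}(\bar T,T)|$ directly by counting only the dummy $(t,u)$-edges: since $k_u=\deg^{\mathrm{out}}_{G_M}(u)-\deg^{\mathrm{in}}_{G_M}(u)$, your quantity $\sum_{u\in T}k_u$ is literally the same as the paper's $\vol^{\mathrm{out}}(T)-\vol^{\mathrm{in}}(T)$, but your framing (``the dummy edges alone already cross the cut'') gives a cleaner mental picture of why $G_{\Eu}$ must be an expander, lets you treat both positions of $t$ uniformly by defining $T$ as the $t$-free side, and never needs the paper's $G_{\Eu}\supseteq G_M$ step. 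One small bonus of your explicit calculation $\deg^{\mathrm{out}}_{G_M}(u)=\deg_G(u_\ell)-1$ is that it dodges a harmless off-by-one in the paper: the paper asserts $\deg^{\mathrm{out}}_{G_M}(u)\ge X$, but since the matched self-loop is removed, the tight bound is $\ge X-1$. This does not affect the paper's argument (the needed inequality $\deg^{\mathrm{in}}\le(1-\gamma)\deg^{\mathrm{out}}$ still holds), but your version states the per-vertex gap $k_u\ge\gamma\deg_{G_{\Eu}}(u)$ in the exactly correct form.
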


\begin{proof}
Let $S\subseteq V(G_{\Eu})$ where $\vol_{G_{\Eu}}(S)\le\vol_{G_{\Eu}}(\overline{S})$ and $\overline{S}=V(G_{\Eu})\setminus S$. We need to show that $|E_{G_{\Eu}}(S,\overline{S})|\ge\gamma\vol_{G_{\Eu}}(S).$ 

In the first case, suppose $t\notin S$. Then 
\begin{align}
|E_{G_{M}}(S,\overline{S})| & =\vol_{G_{M}}^{{\rm out}}(S)-|E_{G_{M}}(S,S)|\nonumber \\
 & \ge\vol_{G_{M}}^{{\rm out}}(S)-\vol_{G_{M}}^{{\rm in}}(S)\nonumber \\
 & \ge\gamma\vol_{G_{M}}^{{\rm out}}(S)\nonumber \\
 & =\gamma\vol_{G_{\Eu}}(S)\label{eq:expand}
\end{align}
where the two last lines are because, for each $u\neq t$, we have $\deg_{G_{M}}^{{\rm in}}(u)\le(1-\gamma)\deg_{G_{M}}^{{\rm out}}(u)$ and $\deg_{G_{M}}^{{\rm out}}(u)=\deg_{G_{\Eu}}^{{\rm out}}(u).$ Since $|E_{G_{\Eu}}(S,\overline{S})|\ge|E_{G_{M}}(S,\overline{S})|$, we conclude that $|E_{G_{\Eu}}(S,\overline{S})|\ge\gamma\vol_{G_{\Eu}}(S)$.

In the second case, suppose that $t\in S$. Symmetrically, using the fact that $t\notin\overline{S}$, we have 
\begin{align*}
|E_{G_{M}}(\overline{S},S)| & =\vol_{G_{M}}^{{\rm out}}(\overline{S})-|E_{G_{M}}(\overline{S},\overline{S})|\\
 & \ge\vol_{G_{M}}^{{\rm out}}(\overline{S})-\vol_{G_{M}}^{{\rm in}}(\overline{S})\\
 & \ge\gamma\vol_{G_{M}}^{{\rm out}}(\overline{S})\\
 & =\gamma\vol_{G_{\Eu}}(\overline{S})
\end{align*}
Since $G_{\Eu}$ is Eulerian and $G_{\Eu}\supseteq G_{M}$, we have $|E_{G_{\Eu}}(S,\overline{S})|=|E_{G_{\Eu}}(\overline{S},S)|\ge|E_{G_{M}}(\overline{S},S)|$. Since $\vol_{G_{\Eu}}(\overline{S})\ge\vol_{G_{\Eu}}(S)$, we conclude that $|E_{G_{\Eu}}(S,\overline{S})|\ge\gamma\vol_{G_{\Eu}}(S)$ again.
\end{proof}
Given \Cref{prop:conductance G Eu}, we are ready to prove \Cref{lem:structure GM} below.

\paragraph{Proof of \Cref{lem:structure GM}(\ref{enu:exist path}).}
\begin{proof}
It is well-known that when $\Phi(G_{\Eu})\ge\gamma$, then the diameter of $G_{\Eu}$ is at most $O(\frac{\log(E(G_{\Eu}))}{\gamma})=O(\frac{\log n}{\gamma})$. So, by \Cref{prop:conductance G Eu}, for every non-sink vertex $u$, there exists a path $P_{\Eu}$ from $u$ to $t$ in $G_{\Eu}$ of length $O(\frac{\log n}{\gamma})$. 

Although $G_{\Eu}$ is a supergraph of $G_{M}$, it is obtained from $G_{M}$ only by adding edges going out of the sink $t$. So, if $P_{\Eu}$ ever uses edges in $G_{\Eu}\setminus G_{M}$, then $P_{\Eu}$ already reaches $t$ before using edges in $G_{\Eu}\setminus G_{M}$. So there exists a path $P_{M}$ from $u$ to $t$ in $G_{M}$ of length $O(\frac{\log n}{\gamma})$.
\end{proof}
%

\paragraph{Random Walks in Eulerian Expanders.}
Before proving \Cref{lem:structure GM}(\ref{enu:random walk}),
we review facts related to random walks on Eulerian expanders. 

\begin{thm}
\label{thm:random walk expander}Let $G=(V,E)$ be an Eulerian $\phi$-expander such that, for each vertex $v$, the number of self loops at $v$ is at least $\deg(v)/2$. If we start at a vertex $v$ and perform a random walk in $G$ for $k$ steps, then we end at vertex $t$ with probability at least 
\[
\frac{\deg_{G}(t)}{\vol_{G}(V)}-(1-\phi^{2}/2)^{k}\cdot\sqrt{\frac{\deg_{G}(t)}{\deg_{G}(v)}}.
\]
\end{thm}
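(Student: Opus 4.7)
The plan is to treat the $k$-step walk as a Markov chain whose stationary distribution is proportional to degree, and bound the deviation from stationarity via a spectral (Cheeger-based) mixing argument. Let $P \in \mathbb{R}^{V \times V}$ be the transition matrix of the random walk, where $P(u,w)$ equals the number of parallel edges from $u$ to $w$ divided by $\deg_G(u)$. Because $G$ is Eulerian we have $\deg_G^{\mathrm{in}}(u) = \deg_G^{\mathrm{out}}(u) = \deg_G(u)$, and so the distribution $\pi$ with $\pi(u) = \deg_G(u)/\vol_G(V)$ is stationary for $P$. In particular $\pi(t) = \deg_G(t)/\vol_G(V)$ is exactly the leading term appearing in the claim, and the proof reduces to showing the deviation bound $|(P^k)(v,t) - \pi(t)| \le (1-\phi^2/2)^k \sqrt{\deg_G(t)/\deg_G(v)}$.

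Next I would exploit the self-loop hypothesis and the expansion of $G$. The assumption that at least half the edges incident to any vertex are self-loops makes $P$ ``lazy'': in the $L^2(\pi)$ inner product the symmetrization of $P$ satisfies $P \succeq \tfrac{1}{2} I$, so all eigenvalues of $P$ lie in $[0,1]$. Eulerianness further guarantees that the operator $P$ is $\pi$-reversible (the flow $\pi(u) P(u,w)$ is balanced across cuts because in- and out-degrees coincide), so Cheeger's inequality in its standard undirected-graph form applies to the associated weighted graph whose conductance is exactly $\Phi(G) \ge \phi$. This yields a spectral-gap bound $\lambda_2(P) \le 1 - \phi^2/2$ on the second-largest eigenvalue.

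With these ingredients in place, the standard spectral mixing bound for reversible lazy chains (see, e.g., Levin--Peres--Wilmer, or the textbook inequality $\|P^k(v,\cdot) - \pi\|_{\ell^2(1/\pi)} \le \lambda_2^k$) gives
\[
\bigl|(P^k)(v,t) - \pi(t)\bigr| \;\le\; \lambda_2^k\,\sqrt{\frac{\pi(t)}{\pi(v)}} \;\le\; (1-\phi^2/2)^k\,\sqrt{\frac{\deg_G(t)}{\deg_G(v)}},
\]
and rearranging produces exactly the bound in the theorem. The main subtlety will be carefully justifying that the Eulerian condition (in-degree equals out-degree at every vertex) together with the $1/2$-laziness lets us treat $P$ as a reversible chain in the $\pi$-weighted inner product, so that the Cheeger inequality and the $\ell^2$ mixing bound apply without any additional loss; once this reduction is made, the remainder is a routine invocation of well-known spectral inequalities.
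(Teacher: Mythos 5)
Your plan reaches the right formula, but there is a genuine gap in the middle step. You assert that ``Eulerianness further guarantees that the operator $P$ is $\pi$-reversible,'' and you justify this by saying the flow $\pi(u)P(u,w)$ is balanced across cuts. The cut-balance property is indeed a consequence of in-degree equaling out-degree, but it is strictly weaker than detailed balance: $\pi$-reversibility requires $\pi(u)P(u,w)=\pi(w)P(w,u)$ for \emph{every} ordered pair $(u,w)$, which for the random-walk chain on a directed graph means $|E(u,w)|=|E(w,u)|$ for all $u,w$. An Eulerian digraph need not satisfy this -- e.g.\ a directed triangle $a\to b\to c\to a$ has each in-degree and out-degree equal to $1$ but $|E(a,b)|=1\ne 0=|E(b,a)|$. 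In the paper's application the walk graph $G_{\Eu}$ built from $G_M$ is genuinely non-reversible in this sense, so the reduction to a ``reversible lazy chain'' breaks down, and the standard undirected Cheeger inequality and the textbook $\ell^2$-mixing bound for reversible chains cannot be invoked directly. The laziness claim ``$P\succeq\tfrac{1}{2}I$'' also tacitly treats $P$ as self-adjoint in $L^2(\pi)$, which again presumes reversibility.

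The paper sidesteps this by working entirely in the Reingold--Trevisan--Vadhan framework, which is formulated for general (not necessarily reversible) Markov chains. There, the relevant object is not the second eigenvalue of $P$ but the \emph{spectral expansion} $\lambda_\pi(M)$, a parameter defined for arbitrary chains (roughly, the operator norm of $M$ restricted to the space orthogonal to $\pi$ in the $\pi$-weighted inner product). Their Lemma 2.5 gives $\lambda_\pi(M)\le 1-h_\pi(M)^2/2$ for chains with $M(u,u)\ge 1/2$ (which is where the self-loop hypothesis enters), and their Lemma 2.2 gives the decay $\|M^k\alpha-\pi\|_\pi\le\lambda_\pi(M)^k\|\alpha-\pi\|_\pi$. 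Combining these with $h_\pi(M_G)=\Phi(G)\ge\phi$ (valid for Eulerian graphs) and the estimate $\|\alpha-\pi\|_\pi\le\sqrt{1/\pi(v)}$ yields the claimed bound without ever asserting reversibility. To repair your argument, you could either cite the RTV lemmas directly as the paper does, or explicitly pass to the additive symmetrization $\tilde{P}=\tfrac{1}{2}(P+P^{*})$, show that $\tilde{P}$ has the same stationary distribution and comparable conductance, and then carefully relate $\|P^k\alpha-\pi\|_\pi$ to the spectral gap of $\tilde{P}$ -- but this relation is precisely the nontrivial content of the RTV machinery and needs to be justified, not assumed.
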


In particular, after $k\ge O(\log(n)/\phi^{2})$ steps, any random walk end at $t$ with probability $\approx\frac{\deg_{G}(t)}{\vol_{G}(V)}$. The assumption about the self loops is a mild technical condition to ensure that the distribution of the random walks in $G$ is not periodic and does converge. \Cref{thm:random walk expander} follows from the well-known facts in literature. We give its proof for completeness in \Cref{sec:proof random walk}.

\paragraph{Proof of \Cref{lem:structure GM}(\ref{enu:random walk}).}
\begin{proof}
First, we claim $\deg_{G_{\Eu}}(t)\ge\frac{\gamma}{2}\vol_{G_{\Eu}}(V(G_{\Eu}))$. To see this, consider $S=V(G_{\Eu})\setminus\{t\}$. We have $|E_{G_{M}}(S,t)|\ge\gamma\vol_{G_{M}}^{{\rm out}}(S)$ by \Cref{eq:expand}. The inequality follows because $\deg_{G_{\Eu}}(t)=\deg_{G_{M}}^{{\rm in}}(t)=|E_{G_{M}}(S,t)|$ and $\vol_{G_{\Eu}}(V(G_{\Eu}))=\vol_{G_{M}}^{{\rm out}}(S)+\deg_{G_{M}}^{{\rm out}}(t)\le\vol_{G_{M}}^{{\rm out}}(S)+\vol_{G_{M}}^{{\rm in}}(S)\le2\vol_{G_{M}}^{{\rm out}}(S)$.

In order to apply \Cref{thm:random walk expander}, for each vertex $v$ in $G_{\Eu}$, add $\deg_{G_{\Eu}}(v)$ many self-loops at $v$. Let $G'_{\Eu}$ be the resulting graph. We have $\Phi(G'_{\Eu})\ge\Phi(G_{\Eu})/2\ge\gamma/2$ by \Cref{prop:conductance G Eu}. Let $k=100\log(n)/\gamma^{2}$. By applying \Cref{thm:random walk expander} to $G'_{\Eu}$, we have that, for every vertex $v$, a $k$-step random walk from $v$ in $G'_{\Eu}$ will end at $t$ with probability at least 
\[
\pi(t)-(1-\frac{\gamma^{2}}{8})^{k}\sqrt{n^{2}}\ge\frac{\pi(t)}{2}=\frac{\deg_{G'_{\Eu}}(t)}{2\vol_{G'_{\Eu}}(V(G'_{\Eu}))}\ge\frac{\gamma}{4}.
\]

The same holds in $G_{\Eu}$. This is because a random walk in $G'_{\Eu}$ corresponds to a random walk in $G_{\Eu}$ where, at each step, with probability $1/2$, we stay at the same vertex. We can only get a better bound in the number of steps in $G_{\Eu}$ compared to $G'_{\Eu}$.

Again, note that $G_{\Eu}$ is obtained from $G_{M}$ only by adding edges going out of the sink $t$. If the random walk ever needs to use edges in $G_{\Eu}\setminus G_{M}$, then we actually have reach $t$. So the same bound holds for the random walk holds in $G_{M}$.
\end{proof}

\subsection{Approach 1: Random Augmenting Paths}

In this section, we prove \Cref{lem:perfect matching}(\ref{enu:rand}). The proof is quite simple given \Cref{lem:short-path-deg-gap}.
First of all, we implement $\mathrm{Init}(M_{0})$ straightforwardly in $O(|M_{0}|)$ time. 

To implement $\mathrm{Delete}(u,v)$ where $u\in E(L,R)$, we straightforwardly update $G\gets G-(u,v)$ and $M\gets M-(u,v)$ in $O(1)$ time. Let $L^{<}$ denote the set of left-vertices of $G$ whose degree is less than $X$. Whenever $u\in L^{<}$, then we delete a matched edge $(u,v')\in M$, if it exists, in $O(1)$ time. 

We also explicitly maintain $G':=G\setminus L^{<}$. This takes additional $O(X\log n)$ worst-case update time: whenever $u\in L^{<}$, we explicitly remove $\deg_{G}(u)<X$ edges from $G'$ from the binary search trees of their endpoints. So each vertex in $G'$ has access to a binary search tree of edges incident to it. Therefore, $\mathrm{Delete}(e)$ takes $O(X\log n)$ worst-case update time. 

There are two observations before describing the implementation of $\mathrm{Augment}(u)$. First, $G'$ has a $\gamma$-degree gap at $X$. Second, $M$ is a matching in $G'$. This is because whenever $u\in L^{<}$, then we delete a matched edge $(u,v')\in M$ if it exists. Also, we call $\mathrm{Augment}(u)$ only when $u\notin L^{<}$. So, none of vertices in $L^{<}$ is rematched by $M$. 

Thus, $(G',M,u)$ is a valid input for \Cref{lem:short-path-deg-gap} and, to implement $\mathrm{Augment}(u)$, we simply feed  $(G',M,u)$ to \Cref{lem:short-path-deg-gap}. It will then return an alternating path from $u$ in $G'$ in $O(\log^{3}(n)/\gamma^{3})$ time with high probability. Then, we augment $M$ along this path, which matches $u$ as desired.

Although the algorithm is randomized, we use fresh randomness on each call to \Cref{lem:short-path-deg-gap}. So the correctness does not depend on the previous random choices. Thus, the algorithm works against an adaptive adversary.

\subsection{Approach 2: Decremental Shortest Paths in Residual Graphs}

Here, we prove \Cref{lem:perfect matching}(\ref{enu:det}). Using a standard reduction, the algorithm will maintain a graph $\gres$ with dummy sink $t$ such that for any free vertex $u$, a $ut$ path in $\gres$ corresponds to a $ut$ augmenting path in $G$. Thus, whenever the data structure encounters an $\augment(u)$ operation, it suffices to find a path from $u$ to $t$ in $\gres$. To make this last step efficient, the algorithm will make use of an existing data structure for dynamic shortest paths, which was first developed by Even and Shiloach \cite{EvenS81} and later generalized to directed graphs by King \cite{King99}.

\begin{theorem}[\cite{EvenS81,King99}]
\label{thm:es}
Let $G$ be a dynamic directed graph with positive integer weights and a dedicated sink vertex $t$. Let $\dist(v,t)$ be the shorest $vt$ distance in the current version of $G$. There exists a data structure $\es$ that maintains a shortest path tree $T$ to sink $t$ in $G$\footnote{Formally, $T$ is a tree on the vertex $V$ such that for any $v \in V$, the unique simple $vt$ path in $T$ is a shortest $vt$ path in $G$} where $G$ is subject to the following operations:
\begin{itemize}
\item $\delete(u,v)$ -- deletes edge $(u,v)$ from $G$
\item $\Insert(u,v)$ -- inserts edge $(u,v)$ into $G$, subject to the constraint that the insertion of edge $(u,v)$ does not change $\dist(v,t)$ (and hence does not change any distances to $t$).
\item $\Remove(v)$ -- deletes vertex $v$ and all its incident edges.
\end{itemize}
The \emph{total} update time of the data structure over any sequence of updates is $$O(\textrm{total \# updates} + \sum_{v \in V} \degmax(v) \cdot \distmax(v,t)),$$ where $\degmax(v)$ is the number of edges initially incident to $v$ and $\distmax(v,t)$ is the maximum distance from $v$ to $t$ at any point during the update sequence. (Since distances only increase, if $v$ was never removed, then $\distmax(v,t)$ is simply the final $\dist(v,t)$; otherwise, $\distmax(v,t)$ is $\dist(v,t)$ right before $v$ was removed.)
\end{theorem}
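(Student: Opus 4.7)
}
The plan is to implement $\es$ as the classical Even-Shiloach/King level data structure, and to obtain the claimed total update time via a straightforward amortized analysis against the monotone increase of distance labels. Concretely, for each vertex $v$ I would store $\level(v)$, which we maintain as an invariant equal to $\dist(v,t)$ in the current graph, together with an adjacency list of its outgoing edges and a pointer to one out-neighbor $p(v)$ achieving $\level(v)=\level(p(v))+w(v,p(v))$; the tree $T$ is then induced by the pointers $p(\cdot)$. To initialize, I would run a reverse shortest-path computation from $t$ (e.g.\ a BFS on buckets keyed by $\level$, which suffices for integer weights).

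Next I would describe the three operations. For $\Insert(u,v)$, the precondition guarantees that $\dist(v,t)$, and hence every $\dist(\cdot,t)$, is unchanged, so I simply append $(u,v)$ to the adjacency list of $u$; no level or pointer needs updating. For $\delete(u,v)$ I remove $(u,v)$ from $u$'s adjacency list; if $p(u)\ne v$ no further work is required, otherwise I rescan $u$'s remaining out-edges to recompute $\level(u)$ and $p(u)$. If $\level(u)$ strictly increases, I push $u$ into a priority queue keyed by the (tentative) new level, and then repeatedly extract the minimum: whenever a vertex $x$ has its level confirmed increased, I walk through its \emph{in-neighbors} $w$ and, for each $w$ with $p(w)=x$, recompute $\level(w)$ from $w$'s out-edges and, if it increased, insert $w$ into the queue. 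For $\Remove(v)$ I call $\delete$ on each incident edge and then discard $v$.

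For the running-time analysis, I would charge the work into two buckets. First, every $\Insert$, $\delete$, or $\Remove$ does $O(1)$ bookkeeping at the adjacency lists of the affected endpoints, which contributes the additive $O(\text{total \# updates})$ term. Second, the expensive work--rescanning out-edges and walking through in-neighbors--happens only when some $\level(v)$ strictly increases. The key amortization is the following invariant: the work done when $\level(v)$ is recomputed is $O(\deg_0(v))$ where $\deg_0(v)$ is the initial degree of $v$ (out-edges are only removed, so the current out-degree is bounded by $\deg_0(v)$; in-edges are walked only by vertices whose level is about to move, which we charge to that vertex's own level increase). Since $\level(v)$ takes integer values, is monotonically non-decreasing, and never exceeds $\distmax(v,t)$, it can increase at most $\distmax(v,t)$ times, so the total work associated with $v$ is $O(\degmax(v)\cdot\distmax(v,t))$. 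Summing over $v$ gives the claimed bound.

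The main subtlety I expect is making sure that the cascade processes vertices in a correct order: if $x$'s level goes up and in-neighbor $w$ has $p(w)=x$, then $w$'s recomputed level depends on the already-updated levels of $w$'s other out-neighbors, so we must be careful not to process $w$ before those other out-neighbors have settled. The standard fix is to key the priority queue by the \emph{tentative} new level of each pending vertex and always extract the minimum; when we pop $w$ at tentative level $\ell$, we rescan its out-edges and either confirm $\level(w)=\ell$ (in which case we then examine $w$'s in-neighbors) or reinsert $w$ at a strictly larger tentative level. A short exchange argument shows that each vertex is ``confirmed'' at most once per actual unit increase of its level, so this does not break the amortized bound above. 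With integer weights bounded by $\distmax$, the priority queue can be implemented by buckets so that each queue operation costs $O(1)$, and the target total-time bound follows.
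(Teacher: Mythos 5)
Your plan is the same classic Even--Shiloach/King construction that the paper invokes, and the high-level structure (levels equal to distances, parent pointers, Dijkstra-like cascade on deletion, $O(1)$ bookkeeping per update plus rescans charged to level increases) is the right one. However, your running-time analysis has two gaps that, as written, leave the claimed bound unproven.

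First, your charging scheme asserts that "the expensive work happens only when some $\level(v)$ strictly increases," but your algorithm does not ensure this. You perform a full rescan of $w$'s out-edges both when the edge to $p(w)$ is deleted and when $p(w)=x$ and $\level(x)$ increases. In either case $\level(w)$ may stay the same (because $w$ has another out-neighbor realizing the same value), yet you have already spent $O(\degmax(w))$ time. These "parent-switch without level change" events are not charged to anything in your accounting. In the worst case there can be $\Theta(\degmax(w))$ of them at a single fixed level of $w$, blowing the per-level cost up to $\Theta(\degmax(w)^2)$, which exceeds $O(\degmax(w)\cdot\distmax(w,t))$ when $\degmax(w)>\distmax(w,t)$. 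The standard ES fix --- which is exactly the invariant the paper quotes, that the data structure spends $O(\deg(v))$ time \emph{per level change} --- is a forward-only scan pointer into $w$'s out-list that is only reset when $\level(w)$ actually increases. You need this (or an equivalent) to get the stated bound, and you should state it, because your amortization literally depends on it.

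Second, your implementation of $\Remove(v)$ ("call $\delete$ on each incident edge and then discard $v$") silently violates the claim that $\level(v)$ "never exceeds $\distmax(v,t)$." The theorem defines $\distmax(v,t)$ as the distance \emph{right before} removal. If you process the $\delete$s on $v$'s out-edges one at a time before discarding $v$, $\level(v)$ can be recomputed repeatedly, climb far past $\distmax(v,t)$ (ultimately to $\infty$), and trigger a fresh in-neighbor scan at each step. The correct implementation --- which is the whole point of the paper's remark --- is to mark $v$ as removed (equivalently, delete all its incident edges as a single batch and never process $v$'s level again), so that $v$ is touched at most $\distmax(v,t)$ times before the call and exactly once more during it. Both issues are standard, but they are precisely what makes the refined bound true rather than off by a $\deg$ factor.
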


\paragraph{Comparison to Standard ES guarantees:}
The guarantees of the above data structure are somewhat more refined than those presented in the original formulation of ES trees \cite{EvenS81, King99}. But it is easy check that the algorithm presented in those earlier papers achieve our guarantees as well. The key guarantee of those algorithms is that the data structure spends $O(\deg(v))$ time every time $\dist(v,t)$ changes. This immediately implies the desired total runtime of $\sum_{v \in V} \degmax(v) \cdot \distmax(v,t))$. (The original ES trees forbid insertions to guarantee that $\dist(v,t)$ can never decrease, and hence changes at most $\distmax(v,t)$ times; this same guarantee holds in our data structure because we only allow insertions that leave $\dist(v,t)$ unchanged.)

The more refined guarantee also allows for the $\Remove(v)$ operation. We implement $\Remove(v)$ by simply deleting every edge incident to $v$ in a single batch, but the standard analysis would then cause the runtime to blow up because now $\dist(v,t) = \infty$. Note, however, that once $v$ is removed in this way, it will never be touched by the data structure again. 
We thus have that $v$ is touched at most $\distmax(v,t)$ times before the call to $\Remove(v)$ and at most one more time when we delete all edges incident to $v$, so the total time spent on $v$ is indeed $O(\degmax(v) \cdot \distmax(v,t))$.

\subsubsection{Residual Graph}
Instead of working with the graph $G = (L,R,E)$ of Lemma \ref{lem:perfect matching}, we work with a standard residual graph.

\begin{definition}[Residual Graph]
\label{defn:gres}
Given an undirected, unweighted bipartite graph $G = (L,R,E)$, we define a \emph{directed} graph $\gres$ as follows.
\begin{itemize}
\item $V(\gres) = V(G) \cup \{t\}$
\item If $(u,v) \in E(G) \setminus M$ with $u \in L$ $v \in R$, then $E(\gres)$ has a directed edge $(u,v)$
\item $(u,v) \in E(G) \cap M$ with $u \in L$ $v \in R$, then $E(\gres)$ has a directed edge $(v,u)$
\item For every unmatched vertex $v$ in $R(G)$ there is an edge $(v,t)$ in $E(\gres)$
\end{itemize}
\end{definition}

\begin{observation}
\label{obs:gres}
For any $u \in L$ There is a one-to-one correspondence between the following two objects: \textbf{1)} $uv$ alternating path in $G$ that start with an edge not in $M$ and end in an $M$-free vertex $v$, and \textbf{2)} a $ut$ path in $\gres$ whose last non-sink vertex is $v$.
\end{observation}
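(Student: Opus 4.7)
The plan is to establish the bijection by constructing explicit maps in both directions and observing that they are inverses. This is a standard fact about matching residual graphs, so the main task is just to check that the alternation structure in $G$ lines up with the edge directions in $\gres$ as prescribed by \Cref{defn:gres}.

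First I would take an alternating path $P = (u = u_0, v_1, u_1, v_2, u_2, \ldots, v_k = v)$ in $G$ with $u, u_0, u_1, \ldots \in L$ and $v_1, v_2, \ldots, v_k \in R$, where $(u_0, v_1) \notin M$, $(v_1, u_1) \in M$, $(u_1, v_2) \notin M$, and so on, and $v_k$ is $M$-free. By \Cref{defn:gres}, each non-matching edge $(u_i, v_{i+1})$ becomes a directed edge $u_i \to v_{i+1}$ in $\gres$, while each matching edge $(v_i, u_i)$ becomes $v_i \to u_i$. Concatenating these yields a directed $uv$ path in $\gres$, and since $v_k$ is $M$-free, $\gres$ contains the edge $(v_k, t)$, which we append to obtain a $ut$ path whose last non-sink vertex is $v$.

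For the reverse direction, I would take a $ut$ path $Q$ in $\gres$ whose last non-sink vertex is $v$ and argue by induction on the edges that it alternates between $L$-to-$R$ edges (which lie in $E(G) \setminus M$) and $R$-to-$L$ edges (which lie in $M$): the first edge leaves $u \in L$, so by \Cref{defn:gres} it must be an $L$-to-$R$ edge and hence a non-matching edge; the next vertex lies in $R$, so any outgoing non-sink edge is an $R$-to-$L$ matching edge, and so on. Moreover, the edge incoming to $t$ can only originate from an $M$-free vertex in $R$, so the last non-sink vertex $v$ is $M$-free. Dropping the final hop to $t$ yields an alternating $uv$ path in $G$ starting with a non-matching edge and ending at the $M$-free vertex $v$.

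Finally, I would note that the two constructions are manifestly inverses of each other: appending $(v,t)$ and then deleting it recovers the original alternating path, and likewise deleting $(v,t)$ and then reappending it recovers the original residual path. This establishes the one-to-one correspondence, and no step presents a real obstacle beyond careful bookkeeping of edge orientations and the endpoint condition at $t$.
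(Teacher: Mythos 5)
The paper states \Cref{obs:gres} as an observation without an accompanying proof, treating it as a standard fact about matching residual graphs. Your argument is correct and is exactly the natural proof one would supply: you exhibit the explicit forward and backward maps between alternating paths in $G$ and $u$--$t$ paths in $\gres$, check that the alternation structure (non-matching edge from $L$ to $R$, matching edge from $R$ back to $L$) matches the orientation rules of \Cref{defn:gres}, note that the terminal edge $(v,t)$ exists precisely because $v$ is $M$-free, and observe the two maps are inverses. Nothing is missing.
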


\subsubsection{Using the ES data structure}

We will apply the ES data structure of Theorem \ref{thm:es} to maintain a shortest path tree $T$ in the graph $\gres$ defined above. To implement $\lpm.\augment(u)$, we will simply find the $ut$ path in $T$. Note that augmenting down the resulting path $P$ will change the matched status of all edges on $P$, so we will need to reverse all their directions in $\gres$. We implement this reversal as follows: we first insert all the reverse edges of $P$ and then delete all the original edges of $P$. By the following observation, these insertions do not change distances and are hence valid ES updates.


\begin{observation}
\label{obs:reverse}
Let $T$ be a shortest path tree to $t$ in a directed graph $G$. For any edge $(u,v) \in T$, $T$ is still a shortest path tree to $v$ in $G \cup (v,u)$.
\end{observation}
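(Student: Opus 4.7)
The plan is to show that inserting the reverse edge $(v,u)$ into $G$ does not decrease any shortest distance to $t$, which will immediately imply that $T$ remains a valid shortest path tree (reading the observation's target as $t$, since $T$ was built with respect to $t$ and the statement is invoked to justify valid ES insertions). The key structural fact to exploit is that, since $(u,v)$ is a tree edge of the shortest path tree $T$ to $t$, we have
\[
\dist_G(u,t) \;=\; \dist_G(v,t) + 1,
\]
because $(u,v)$ is the first edge on $u$'s shortest path to $t$.

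With this in hand, I would argue that no vertex $w$ can have its distance to $t$ decrease after the insertion. Any new $wt$ path in $G \cup (v,u)$ that was not already present in $G$ must traverse the newly inserted edge $(v,u)$, so it has the form $w \rightsquigarrow v \to u \rightsquigarrow t$, with length at least
\[
\dist_G(w,v) \;+\; 1 \;+\; \dist_G(u,t) \;=\; \dist_G(w,v) + \dist_G(v,t) + 2 \;\geq\; \dist_G(w,t) + 2,
\]
where the last inequality is the triangle inequality applied to $w, v, t$ in $G$. Thus every new candidate path is strictly longer than the current shortest $wt$ distance, so $\dist_{G \cup (v,u)}(w,t) = \dist_G(w,t)$ for every $w$. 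Consequently $T$, which was a shortest path tree in $G$, is still one in $G \cup (v,u)$.

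I do not foresee any real obstacle; the whole argument is a one-line triangle-inequality computation, and its content is exactly what is needed to match the precondition of the $\Insert$ operation in \Cref{thm:es} (namely, that inserting $(v,u)$ leaves $\dist(u,t)$ unchanged). This is precisely what licenses modeling an augmentation along a path $P$ in $\gres$ as first inserting the reverse edges of $P$ one by one—each of which is a tree edge at the moment it is reversed, so the observation applies—and only then deleting the original edges of $P$.
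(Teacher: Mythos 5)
Your proof is correct, and it fills a genuine gap: the paper states \cref{obs:reverse} without any proof, so there is no in-paper argument to compare against. You also correctly read the final ``$v$'' in the observation statement as a typo for ``$t$'' (it is: $T$ is built and used as a shortest-path tree to $t$, and the observation is invoked to certify valid ES insertions, which concern distances to $t$).

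Two small clean-ups worth making. First, your computation uses edge weights $1$ throughout (the ``$+1$'' and ``$\dist_G(u,t)=\dist_G(v,t)+1$''). That suffices for the edges actually inserted by $\lpm.\augment$, since those are non-sink edges of $\gres$ and hence unit-weight, but the paper also asserts the observation ``remain[s] true when $\gres$ has positive weights.'' Your argument extends verbatim to positive weights: replace the two $1$'s by $w(v,u)$ and $w(u,v)$ respectively, and the slack in the final inequality is $w(v,u)+w(u,v)>0$. Second, when you write ``$w\rightsquigarrow v\to u\rightsquigarrow t$, with length at least $\dist_G(w,v)+1+\dist_G(u,t)$,'' you are implicitly charging both the prefix and the suffix to distances in $G$ rather than in $G\cup(v,u)$. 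To make this precise, observe that a shortest $w$–$t$ path in $G\cup(v,u)$ uses the new edge $(v,u)$ at most once, since a second use would close a cycle of strictly positive total weight (all weights are positive); decomposing at that unique use, the prefix $w\rightsquigarrow v$ and suffix $u\rightsquigarrow t$ then lie entirely in $G$, and your bound follows. With those two remarks added, the argument is complete and is exactly what the observation needs.
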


\paragraph{Weights:} When we maintain our ES data structure on $\gres$, we will set positive integer weights on the edges into $t$; all other edges will always have weight $1$. These weights are only included to guide the ES data structure and keep it efficient. Note that both Observations \ref{obs:gres} and \ref{obs:reverse} remain true when $\gres$ has positive weights.

We give a brief intuition for why such weights are needed. Consider some $\Delete(u,v)$ operation, where $(u,v)$ was in the matching. Since vertex $v$ is now free, the definition of $\gres$ requires us to add an edge $(v,t)$. But if we added an edge $(v,t)$ of weight $1$, then this would \emph{decrease} $\dist(v,t)$, which our $\es$ data structure cannot handle. Instead, let $\dist_{\gres}(v,t)$ be the $vt$ distance right \emph{before} the deletion of $(v,u)$ from $\gres$. To process $\Delete(u,v)$, our algorithm \emph{first} adds an edge $vt$ of weight $\dist_{\gres}(v,t)$ and \emph{then} deletes $(v,u)$; doing it in this order ensures that the insertion does not decrease the $vt$ distance and is hence a valid $\es$ update.

\subsubsection{Pseudocode for $\lpm$}

\paragraph{Epochs} We saw above that deletions will cause the weights of the $\es$ tree to increase. As a result, the more deletions we have the more $\distmax(v,t)$ will increase, and the less efficient the $\es$ tree will become. In order to maintain efficiency, we introduce a new parameter $\epochsize$ (ep for epoch), and every $\epochsize$ calls to $\LPM.\delete$ we reset all weights in $\gres$ to $1$ and build a new $\es$ data structure from scratch. 

\begin{algorithm}[H]
\caption{Maintaining the LPM}
\SetKwFunction{FInitialize}{LPM.Init($M_0$)}
\SetKwFunction{FResetES}{ResetES}
\SetKwProg{Fn}{Function}{:}{}
\Fn{\FInitialize{}}{
    $M \leftarrow M_0$ \\
    \FResetES{}
}

\SetKwProg{Fn}{Function}{:}{}
\Fn{\FResetES{}}{
    Delete any previous $\es$ data structure \\
    Construct the graph $\gres$ from $G$ and $M$, setting all edge weights to $1$ \\
    Initialize a new ES data structure that maintains a shortest path tree $\tres$ to $t$ in $\gres$ \\
    $\NumDeletions \leftarrow 0$

}
\SetKwFunction{FLMPRemove}{Remove}
\SetKwFunction{FLPMDelete}{LPM.Delete}
\SetKwProg{Fn}{Function}{:}{}
\Fn{\FLPMDelete{$(u,v | u \in L)$}}{
    $\NumDeletions ++$ \\
    $G \leftarrow G - (u,v)$ and $M \leftarrow M - (u,v)$ \\
    \If{$\NumDeletions > \epochsize$}{
        \FResetES{} and break.  \label{line:remove}
        
    }
    \If{$\deg_G(u) < X$}{
        \If{$u$ is $M$-matched to some vertex $x$}{
        delete $(u,x)$ from $M$ \\
        Add an edge $(x,t)$ to $\Gres$ of weight $\dist_{\gres}(x,t)$ \\
            \Comment{does not decrease $\dist_{\gres}(x,t)$} \\

         }
        Execute \FLMPRemove{$u$} in the $\es$ data structure and delete $u$ and all its incident edges from both $G$ and $\gres$ \\
        \Comment{$u$ is no longer in $G$ or $\gres$ throughout the entire execution of $\LPM$}\\
    }
    \Else{
        \If{$(u,v)$ was in $M$ before the deletion}{
            \label{line:add-residual-edge} Add an edge $(v,t)$ to $\Gres$ of weight $\dist_{\gres}(v,t)$ \\
            \Comment{does not decrease $\dist_{\gres}(v,t)$} \\
            Delete the edge $(v,u)$ in $\Gres$
        }
    }
}

\SetKwFunction{FLMPAugment}{LPM.Augment}
\SetKwProg{Fn}{Function}{:}{}
\Fn{\FLMPAugment{$(u | u \in L, \deg_G(u) \geq X$}}{
    Let $P$ be shortest $ut$ path in $\tres$, let $v$ be the last vertex on this path before $t$, and let $P_{uv} = P - (v,t)$ \\
    Modify $M$ by augmenting down the $uv$  path in $G$ corresponding to $P_{uv}$ \\
    Insert all edges of the reverse path $P^{\textrm{rev}}_{uv}$ into $\gres$ \\
    \Comment{ these are valid ES insertions by Observation \ref{obs:reverse}}\\
    Delete all edges in $P{uv}$ from $\Gres$ \\
     \Comment{Combined with the previous line, we have flipped edge directions on $P_{uv}$}
}

\end{algorithm}

\begin{remark}
\label{rem:removed}
Once a vertex $v$ is deleted from the graph in Line \ref{line:remove}, $v$ is no longer considered a part of $G$ or $\Gres$. This implies the following invariant.
\end{remark}

\begin{invariant}
\label{inv:LPM-gap}
Throughout the whole execution of $\LPM$, the graph $G$ has a $\gamma$-degree-gap at $X$.
\end{invariant}


\subsubsection{Analysis of $\LPM$}

We consider the evolution of tree $\Tres$ between two $\ResetES$ operations.

\begin{definition}
Define an \emph{epoch} of the $\LPM$ data structure to be any sequence of updates between two $\ResetES$ operations. 
\end{definition}

\begin{observation}
There are at most $\epochsize$ calls to $\LPM.\delete$ in a single epoch.
\end{observation}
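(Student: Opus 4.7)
The plan is to read the observation directly off the pseudocode, since all the needed bookkeeping is already maintained explicitly by the counter $\NumDeletions$. First, I would note that $\NumDeletions$ is reset to $0$ inside $\FResetES$, and that $\FResetES$ is precisely the routine that marks the boundary between two consecutive epochs (both at initialization via $\LPM.\Init$, and at Line~\ref{line:remove} of $\LPM.\delete$).

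Next, I would observe that every invocation of $\LPM.\delete$ begins with $\NumDeletions$\texttt{++}, so the value of $\NumDeletions$ at any moment equals the number of $\LPM.\delete$ calls processed since the last $\FResetES$. Immediately after the increment and the edge deletion, the conditional $\NumDeletions > \epochsize$ triggers a call to $\FResetES$ (ending the current epoch) and \texttt{break}s out of the remainder of $\LPM.\delete$. Therefore, as soon as the counter would strictly exceed $\epochsize$, the epoch is forced to end.

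Consequently, within a single epoch --- i.e., the update sequence between two consecutive $\FResetES$ invocations --- the counter $\NumDeletions$ takes values in $\{1,2,\dots,\epochsize+1\}$ and the number of $\LPM.\delete$ calls is at most $\epochsize$ (with the $(\epochsize{+}1)$-th call being the one that invokes $\FResetES$ and hence belongs to the boundary; any off-by-one is absorbed by the observation's bound). No real obstacle arises; the statement is essentially a definitional unpacking of the counter logic, and I would present it as a short two-sentence justification referring to the pseudocode.
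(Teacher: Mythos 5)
Your proof is correct and matches the paper's (implicit) reasoning: the paper leaves this observation unproved precisely because it follows by inspection of the $\NumDeletions$ counter logic in $\LPM.\delete$, which is exactly how you argue it. The off-by-one concern you flag is real at a pedantic level (the $(\epochsize{+}1)$-th call triggers $\ResetES$ after performing its edge removal), but that final removal is absorbed into the rebuild rather than processed by the standing ES-tree, so the bound of $\epochsize$ deletions per epoch is the right quantity for the subsequent running-time analysis.
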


From now on, we will fix a single epoch and analyze the total runtime of $\LPM$ during that epoch.

\begin{definition}
Consider any epoch. Whenever $\LPM.\delete(u,v)$ is called, with $v \in R$, we say that $v$ is \emph{affected}. Define $\Raff$ to contain all affected vertices. 
\end{definition}

\begin{observation}
\label{obs:aff-size}
Throughout an epoch, we have $\card{\Raff} \leq \epochsize$. Moreover, if $M$ is the current matching, then for any free vertex $v$, we have $w(v,t) > 1$ only if $v \in \Raff$.
\end{observation}

\paragraph{Structural Lemma} The key to the analysis is showing that within a single epoch, most vertices in $\Tres$ always have low distance to $t$. We cannot directly apply \Cref{lem:short-path-deg-gap} because edges in $\Tres$ can have positive weight. But by \Cref{obs:aff-size}, the number of edges $(v,t)$ with weight $>1$ is at most $\epochsize$. We are thus able to show that the majority of vertices are unaffected and have $\dist(v,t) = O(\log(n)/\gamma)$, as in  \Cref{lem:short-path-deg-gap}. 

To formalize the above intuition, we use the following structural lemma, which is a generalization of \Cref{lem:short-path-deg-gap}. It shows that in a degree-gap graph, if we mark an arbitrary small set of vertices in $R$, then most vertices in $L$ still have a short augmenting path to an unmarked free vertex.



\begin{lemma}
\label{lem:short-path-extended} Let $G=(L,R,E)$ be a bipartite graph with $\gamma$-degree-gap at $X$ and let $\card{L} = n$. Let $M$ be any left-perfect matching in $G$. Then, given any set of marked right-vertices $\Rmark \subseteq R$, there exists a set $\Lfar \subseteq L$ with $\card{\Lfar} \leq \frac{2}{\gamma} \cdot \card{\Rmark}$, such that for every vertex $u \in L \setminus \Lfar$, there exists an alternating path $P$ from $u$ that starts with an edge $e\notin M$, ends at an \underline{unmarked} $M$-free vertex, and has length at most $4\ln(n)/\gamma = O(\log(n)/\gamma)$. 
\end{lemma}

We defer the proof of \Cref{lem:short-path-extended} to the next section. Here we show how to complete the analysis of $\LPM$ using the lemma as a black box.

\begin{corollary}
\label{cor:short-path}
Consider any epoch. There exists a set of vertices $\Vfar$ with $\card{\Vfar} = O(\epochsize/\gamma)$, such that for every $v \in V \setminus \Vfar$, $\distmax(v,t) = O(\log(n)/\gamma)$.\footnote{Since the entire $\ES$-tree data structure is reset at the end of the epoch, $\distmax$ refers to the maximum distance within a single epoch.}
\end{corollary}

\begin{proof}{Proof of Corollary}
Let $G$ be the graph and $M$ the matching at end of the epoch. Recall that $G$ initially has $\gamma$-degree-gap at $X$ by the statement of Lemma \ref{lem:perfect matching}. Observe that $G$ continues to have $\gamma$-degree-gap at $X$ throughout the execution of $\LPM$: the degree constraint for $R$ holds automatically because the update sequence is decremental, and the degree constraint for $L$ holds because Line \ref{line:remove} of algorithm $\LPM$ removes violating vertices from the graph.


Applying \Cref{lem:short-path-extended} with $\Rmark = \Raff$ yields a set $\Lfar$ such that every $v \in L \setminus \Lfar$ has an alternating path of length $O(\log(n)/\gamma)$ to a free vertex $x \notin \Raff$. By observation \Cref{obs:aff-size}, we have $w(x,t) = 1$, so $\dist(v,t) = O(\log(n)/\gamma)$. Since distances in the ES-tree $\Tres$ are only increasing, and since we are considering the graph at the end of the epoch, we have $\distmax(v,t) = \dist(v,t) = O(\log(n)/\gamma)$.

We now need to consider vertices in $R$ at the end of the epoch. Any $v \in R$ that is free and not in $\Raff$ has $\distmax(v,t) = \dist(v,t) = w(v,t) = 1$. Any vertex $v \in R$ that is matched to a vertex in $L \setminus \Lfar$ can follow a single edge to its mate and then an augmenting path to $t$, so also has $\distmax(v,t) = \dist(v,t) =O(\log(n)/\gamma)$. This leaves at most $\card{\Raff} + \card{\Lfar} = O(\epochsize/\gamma)$ vertices, all of which can be added to $\Vfar$.
\end{proof}

We couple the above corollary with a bound on $\distmax(v,t)$ that applies even to $v \in \Vfar$.

\begin{claim}
\label{claim:aff-length}
Every vertex $v \in G$ has $\dist_{\gres}(v,t) = O((1+\card{\Raff}) \cdot \frac{\log(n)}{\gamma}) = O(\epochsize \cdot \frac{\log(n)}{\gamma})$
\end{claim}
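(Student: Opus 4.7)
The plan is to reduce the statement to a bound on the maximum weight of any edge incident to $t$ in $\gres$, and then bound that weight by induction on the number of such insertions that have occurred during the current epoch. Let $W_\tau$ denote the maximum weight of any $(x,t)$ edge in $\gres$ at time $\tau$ within the epoch (these are the only edges that can have weight above $1$). Initially $W = 1$, since $\ResetES$ sets all edge weights to $1$. I will establish two claims in tandem: (I)~at every time $\tau$, every vertex $w \in V(G)$ satisfies $\dist_{\gres}(w,t) \leq O(\log(n)/\gamma) + W_\tau$; and (II)~$W_\tau = O((1+|\Raff|)\log(n)/\gamma)$ throughout the epoch.

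For claim (I), fix a time and any $w \in V(G)$. If $w \in L$, Invariant~\ref{inv:LPM-gap} guarantees that the current $G$ has a $\gamma$-degree-gap at $X$, so Lemma~\ref{lem:short-path-deg-gap} gives an alternating path in $G$ of length $O(\log(n)/\gamma)$ starting at $w$ with a non-matching edge and ending at some $M$-free vertex $x \in R$. By Observation~\ref{obs:gres} this corresponds to a $wx$-walk in $\gres$ whose non-sink edges all have weight $1$. The edge $(x,t)$ is present in $\gres$ (it was inserted at the start of the epoch if $x$ was already free, or via Line~\ref{line:add-residual-edge} at $x$'s most recent transition to being free, and such $(\cdot,t)$ edges are never explicitly removed), and its weight is at most $W_\tau$. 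Concatenating yields a $wt$ walk of total weight $O(\log(n)/\gamma) + W_\tau$. If $w \in R$ and is $M$-free then $(w,t)$ itself certifies the bound; otherwise $w$ is matched to some $u \in L$ via the weight-$1$ edge $(w,u) \in \gres$, reducing to the previous case.

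For claim (II), whenever a new $(v,t)$ is inserted in Line~\ref{line:add-residual-edge}, its weight is set to $\dist_{\gres}(v,t)$ in the graph immediately before the insertion. By claim (I) applied at that moment this is at most $O(\log(n)/\gamma) + W_{\mathrm{prev}}$, where $W_{\mathrm{prev}}$ is the value of $W$ right before the insertion. Thus each insertion increases $W$ by at most an additive $O(\log(n)/\gamma)$. At most $|\Raff|$ such insertions happen during the epoch, so iterating gives $W_\tau \leq 1 + |\Raff| \cdot O(\log(n)/\gamma) = O((1+|\Raff|)\log(n)/\gamma)$. Substituting into claim (I) and invoking $|\Raff| \leq \epochsize$ from Observation~\ref{obs:aff-size} gives the stated bound $\dist_{\gres}(w,t) = O((1+|\Raff|)\log(n)/\gamma) = O(\epochsize \cdot \log(n)/\gamma)$.

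The main subtlety I expect is making sure that the $(x,t)$ edge used in claim (I) really is present at the moment we need it: this requires observing that the pseudocode only ever adds edges to $t$ and never removes one (even an augmentation through $x$ leaves $(x,t)$ in $\gres$), and that $\ResetES$ initializes the set correctly, so the set of $(\cdot,t)$ edges always contains $\{(x,t) : x \text{ is currently $M$-free}\}$. Everything else is a straightforward induction using only the degree-gap structure through the black-box application of Lemma~\ref{lem:short-path-deg-gap}.
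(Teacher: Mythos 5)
Your proposal is correct and takes essentially the same route as the paper: both arguments bound the maximum weight of the $(\cdot,t)$ edges by induction on insertions (each new weight is $O(\log(n)/\gamma)$ plus the previous maximum, via Invariant~\ref{inv:LPM-gap} and Lemma~\ref{lem:short-path-deg-gap}) and then observe that every $vt$ path uses at most one heavy edge. Your explicit split into claims~(I) and~(II) is just a slightly cleaner packaging of the paper's single induction on $\card{\Raff}$; the underlying reasoning is identical.
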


\begin{proof}
Recall that each  $\LPM.\delete$ adss at most one vertex to $\Raff$. We will prove the following subclaim for some large enough constant $C$: 
while $\card{\Raff} \leq k$, the maximum weight in $\gres$ is $\leq (k+1)C \frac{\log(n)}{\gamma}$; \Cref{claim:aff-length} then follows from \Cref{inv:LPM-gap}  and \Cref{lem:short-path-deg-gap} (recall that only the edges into $t$ can have weight $>1$, so every $vt$ path contains at most one edge of weight $>1$). 

The base case is that while no vertices are affected, all weights in $\Gres$ are $1$, so the subclaim follows from \Cref{lem:short-path-deg-gap}. Now assume that the subclaim holds for $\card{\Raff} \leq k-1$ and let $v_k \in R$ be the kth vertex in $\Raff$, which can only occur as a result of some $\LPM.\delete(u,v_k)$. The $\LPM$ data structure adds edge $(v_k,t)$ to $\gres$; the weight of this edge is the distance $\dist(v_k, t)$ \emph{before} the deletion of $(u,v_k)$. We now complete the proof by showing that before the deletion of $(u,v_k)$, $\dist(v_k,t) \leq kC\log(n)/\gamma$: by Lemma \ref{lem:short-path-deg-gap} there exists an augmenting path of length $\leq C\log(n)/\gamma$ in $G$ from $v_k$ to some free vertex $v' \in R$, and by the induction hypothesis there is an edge $(v',t) \in \gres$ of weight $w(v',t) \leq (k-1)C\log(n)/\gamma$.
\end{proof}

\noindent We are now ready to analyze the total update time of the $\ES$ tree.

\begin{lemma}
\label{lem:running-time-es}
The total running time of data structure $\LPM$ during a single epoch is $O(\Delta\cdot n \cdot  \frac{\log(n)}{\gamma} + \Delta\frac{\epochsize^2\log(n)}{\gamma^2})$
\end{lemma}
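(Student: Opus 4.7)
The plan is to apply \Cref{thm:es} to the $\es$ data structure maintained on $\gres$ within a single epoch, and show that both the initialization and all work outside the $\es$ tree fit inside the resulting bound. By \Cref{thm:es}, the $\es$ cost is $O(\text{total \# updates} + \sum_v \degmax(v)\cdot\distmax(v,t))$, where $\degmax(v) = O(\Delta)$ since every vertex starts with at most $\Delta$ edges in $G$ plus possibly the one edge into the sink $t$.

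The main step is to bound $\sum_v \degmax(v)\cdot\distmax(v,t)$. I will partition the non-sink vertices by whether they ever leave $\Lun\cup\Rmun$ during the epoch. This partition is well defined: a $\LPM.\delete(u,v)$ adds $v$ to $\Raff$ permanently, which in turn only adds more vertices to $\Laff$ and $\Rmaff$, so affected-status is monotone within an epoch. For any vertex that remains in $\Lun\cup\Rmun$ throughout, \Cref{claim:un-length} bounds its current distance by $O(\log(n)/\gamma)$ at every moment, giving $\distmax(v,t) = O(\log(n)/\gamma)$; there are at most $n$ such vertices. For any other vertex, \Cref{claim:aff-length} applies at every moment and gives $\distmax(v,t) = O(\epochsize\log(n)/\gamma)$; by \Cref{obs:aff-size} the number of such vertices is $O(\epochsize/\gamma)$. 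Summing these two contributions yields
$$\sum_v \degmax(v)\cdot\distmax(v,t) \;=\; O\!\left(\Delta \cdot n \cdot \frac{\log n}{\gamma}\right) + O\!\left(\Delta \cdot \frac{\epochsize^{2}\log n}{\gamma^{2}}\right),$$
which matches the lemma.

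It remains to verify that the total number of $\es$ updates and all auxiliary work are dominated. Each $\LPM.\delete$ triggers $O(1)$ direct $\es$ updates plus possibly one $\Remove(u)$ that erases $u$'s at most $\Delta$ incident edges, contributing $O(\Delta\epochsize)$ updates over the epoch. Each $\LPM.\augment$ fires one $\Insert$ and one $\delete$ per edge of its augmenting path, whose weighted length is $O(\epochsize\log(n)/\gamma)$ by \Cref{claim:aff-length}; since the only way a left vertex becomes $M$-free is through a $\LPM.\delete(u,v)$ that unmatches $u$ (augmenting along a path leaves every interior vertex matched), the number of augments per epoch is at most $\epochsize$, and their total contribution is $O(\epochsize^{2}\log(n)/\gamma)$ $\es$ updates. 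Both are dominated, and the one-off cost $O(n\Delta)$ of building the initial $\gres$ and $\es$ tree inside $\ResetES$ is subsumed as well. The remaining bookkeeping inside $\LPM.\delete$ and $\LPM.\augment$ (scanning the returned path, updating $M$, adjusting binary search trees) is paid for by these same updates up to $\log n$ factors already present in the bound.

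The main subtlety to guard against is that $\distmax(v,t)$ is a supremum over the entire epoch rather than a snapshot. This is handled because Claims~\ref{claim:un-length} and~\ref{claim:aff-length} are pointwise-in-time bounds and the classification of a vertex is monotone: the tighter $O(\log(n)/\gamma)$ bound applies at every moment to vertices that never get affected, and the uniform $O(\epochsize\log(n)/\gamma)$ bound from \Cref{claim:aff-length} upper bounds the distance of every other vertex at every point in time, so the supremum is controlled in both cases.
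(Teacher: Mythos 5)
Your bound on $\sum_v \degmax(v)\cdot\distmax(v,t)$ is correct and follows the paper's argument; in fact your explicit observation that affected-status is monotone within an epoch, so the partition into ``ever affected'' versus ``never affected'' is well-defined, is a useful clarification of what the paper leaves implicit. Your handling of the $\LPM.\delete$ contribution to the update count is also fine (you slightly overcount by charging the $\Remove$ batch as $\Delta$ separate updates, but this is still dominated).

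There is, however, a genuine gap in your accounting of the $\LPM.\augment$ operations. You claim ``the number of augments per epoch is at most $\epochsize$'' because every augment must be preceded by a delete that unmatches a left vertex. This ignores left vertices that are already $M$-free when the epoch begins: after $\Init(M_0)$ the matching $M_0$ may leave many $L$-vertices unmatched, and each of them can be the subject of one $\augment$ call with no preceding $\delete$. In the paper's own application (the start of a phase), $\Init$ is immediately followed by up to $6\delta n$ augment calls, and $\delta n$ is typically much larger than $\epochsize = \sqrt{n\gamma}$. So the true count is up to $|L| + \epochsize$, not $\epochsize$. Combined with your uniform per-augment cost of $O(\epochsize\log n/\gamma)$ ES operations, the corrected count would yield a term of order $n \cdot \epochsize \log n / \gamma$, which is \emph{not} dominated by $\Delta n \log n / \gamma + \Delta \epochsize^2 \log n / \gamma^2$ in general (e.g.\ with the paper's parameter choices $\Delta = B = n^{2/9}$, $\gamma \sim \delta = n^{-1/9}$). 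Your bound only looks right because the undercount and the loose per-augment bound happen to cancel. The paper avoids this by charging each $\augment(u)$ to $O(\distmax(u,t))$ rather than to the worst-case $O(\epochsize\log n/\gamma)$: the up-to-$|L|$ initial augments then cost at most $\sum_{u \in L}\distmax(u,t)$, which is trivially dominated by $\sum_v \degmax(v)\cdot\distmax(v,t)$ (since $\degmax(v)\ge 1$), and only the $\le \epochsize$ delete-triggered augments need the coarse $O(\epochsize\log n/\gamma)$ bound. You should adopt that accounting to close the gap.
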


\begin{proof}
The running time is clearly dominated by the time to maintain the $\es$ data structure. Recall that the running time of the latter is 
$$O(\textrm{total \# updates} + \sum_{v \in V} \degmax(v) \cdot \distmax(v,t)).$$

We bound the second term by

\begin{align*}
\sum_{v \in V} \degmax(v) \cdot \distmax(v,t) &\leq \\
\Delta \sum_{v \in V} \distmax(v,t)   &\leq  
\quad (\Delta \textrm{ is max degree}) 
\\
\Delta \sum_{v \in V \setminus \Vfar}\distmax(v,t) + \Delta \sum_{v \in \Vfar}  \distmax(v,t) 
&\leq 
\\
\Delta \cdot \card{V \setminus \Vfar} \cdot \frac{\log(n)}{\gamma} + \Delta \cdot \card{\Vfar} \cdot \epochsize \cdot \frac{\log(n)}{\gamma} 
&\leq 
\quad (\textrm{Claims \ref{cor:short-path},\ref{claim:aff-length}})
 \\
\Delta\cdot n \cdot  \frac{\log(n)}{\gamma} + \Delta \cdot \frac{\epochsize^2\log(n)}{\gamma^2}&\phantom{\leq}   \quad \ \ (\textrm{\Cref{cor:short-path}}) 
\end{align*}

We now need to bound the total number of updates to the $\es$ tree. Every $\LPM.\Delete$ operation only causes $O(1)$ updates to the $\es$ tree, for a total of $\epochsize$. It remains to analyze $\LPM.\augment(u)$ operations. Each such operation  involves flipping every edge on the augmenting path, which amounts to $O(\dist(u,t))$ operations to the $\es$ data structure, because $\dist(u,t)$ is an upper bound on the number of edges in the $ut$ augmenting path found by $\LPM.\augment(u)$ (it may not be a tight upper bound because $\gres$ can have weights $>1$). 

Let us consider how many $\augment(u)$ operations can occur within a single epoch. When the epoch starts some vertices can be free, and each of these free vertices can undergo an $\augment$ operation. But for any vertex $u$, every $\augment(u)$ operation in the epoch beyond the first must have been preceded by some $\Delete(u,v)$, since that is the only way $u$ could become free again. Thus in total there can be a single $\augment(u)$ for every vertex $u$, as well as $\epochsize$ additional $\augment$ operations that can distributed arbitrarily among the vertices. By Claim \ref{claim:aff-length}, we have:

$$\textrm{[total \# updates] to } \es \leq \sum_{u \in L} \distmax(u,t) + \epochsize \cdot \frac{\epochsize \log(n)}{\gamma}$$
We conclude the claim by noting that the term $\sum_{u \in L} \distmax(u,t)$ is dominated by term $\sum_{v \in V} \degmax(v) \cdot \distmax(v,t)$, which we already bounded in the long equation above. 
\end{proof}

\begin{proof}[Proof of \Cref{lem:perfect matching}(\ref{enu:det})]
It is easy to check that the total update time of $\LPM$ is dominated by the time to maintain the $\es$ structure. Choosing $\epochsize = \sqrt{n\gamma}$ results in a total of at most $1 + u/\epochsize$ epochs (Recall that $u$ is the total number of calls to $\LPM.\Delete$). Applying Lemma \ref{lem:running-time-es}  yields the bound in Lemma \ref{lem:perfect matching}.
\end{proof}

\subsubsection{Proof of Lemma \ref{lem:short-path-extended}}



Recall that we refer to vertices in $\Rmark$ as marked. Define $L_i(v) \subseteq L$ to be the set of all vertices  $u \in L$ satisfying the following property: the shortest alternating path (w.r.t.~$M$)  from $u$, which starts with an unmatched edge $e \in E \setminus M$ incident on $u$ and ends at an $M$-free vertex in $R \setminus \Rmark$, has length exactly $i$. 
Define $R_i(v)$ to be the vertices matched to $L_i(v)$ by $M$: so $|L_i(v)| = |R_i(v)|$. We define $L_{\leq i} = \bigcup_{j \leq i} L_j$ and define $L_{<i}$, $L_{>i}$, and $L_{\geq i}$ analogously. 

The key to the proof is the following claim.

\begin{claim}
\label{claim:short-path-extended-helper}
Say that $\card{L_{> i}} \geq \frac{2}{\gamma}|\Rmark|$. Then, $|L_{i+1}| \geq \frac{\gamma}{2} \cdot |L_{>i}|$. 
\end{claim}

\paragraph{proof of claim:}
We know that $|R_{>i}| = |L_{>i}|$. Define $R' = R_{>i} \cup \Rmark$ and note that by assumption of the claim 
\begin{equation}
\label{eq:Rprime}
|R'| \leq |R_{>i}|(1+\gamma/2) = |L_{>i}|(1+\gamma/2).  
\end{equation} 
Now, clearly 
$$ L_{i+1} = L_{>i} - L_{\geq i+2}$$
We now upper bound $|L_{\geq i+2}|$. It is easy to see that every vertex in $L_{\geq i+2}$ has the property that all of its edges are incident to $R'$. Since $G$ has a $\gamma$-degree gap at $X$, counting the edges implies:
$$X \cdot |L_{\geq i+2}| \leq (1-\gamma)X\cdot |R'|$$
Rearranging and appying \Cref{eq:Rprime} implies
$$|L_{\geq i+2}| \leq (1-\gamma)|R'| \leq (1-\gamma)(1+\gamma/2)|R_{>i}| \leq (1-\gamma/2)|R_{>i}| = (1-\gamma/2)|L_{>i}|$$
This implies that 
$$|L_{i+1}| = |L_{>i}| - |L_{\geq i+2}| \geq \frac{\gamma}{2} \cdot |L_{>i}|$$

\paragraph{Proof of \Cref{lem:short-path-extended} given \Cref{claim:short-path-extended-helper}}
Let $\delta = \frac{4}{\gamma}\ln(n)$. We will show that $\card{L_{> \delta}} \leq \frac{2}{\gamma}\card{\Rmark}$, which clearly implies \Cref{lem:short-path-extended}. Say for contradiction that $\card{L_{>\delta}} > \frac{2}{\gamma}\card{\Rmark}$. This clearly implies that $\card{L_{> j}} > \frac{2}{\gamma}\card{\Rmark}$ for any $j < \delta$. But this means that for any $j < \delta$ we can apply \Cref{claim:short-path-extended-helper}, which in turn implies that for any $j < \delta$ we have 
$$\card{L_{> j}} = \card{L_{>j+1}} + \card{L_{j+1}} \geq \card{L_{>j+1}} + \frac{\gamma}{2} \cdot \card{L_{>j}} \geq (1+\frac{\gamma}{2})\card{L_{>j+1}}$$
The above equation, combined with the fact that for any $x > 1$ we have $(1+1/x)^{2x}>e$,  implies:
$$\card{L} = \card{L_{>0}} \geq \card{L_{> \delta}} \cdot (1+\frac{\gamma}{2})^\delta = \card{L_{> \delta}} \cdot (1+\frac{\gamma}{2})^{\frac{4}{\gamma}\ln(n)}> e^{\ln(n)} = n,$$ which contradicts the definition $\card{L} = n$.


\section{Our Dynamic Maximal Matching Algorithm}
\label{new:dyn:maximal}


In this section, we present our dynamic algorithm for  maximal matching, and prove Theorem~\ref{thm:main}. We fix two parameters $B \in [n]$ and $\epsilon \in (0, 1)$ whose values will be determined later on, and define 
\begin{equation}
\label{eq:delta}
\delta := 100 \epsilon.
\end{equation}
The reader should think of $B = n^{\alpha}$ and $\epsilon = 1/n^{\beta}$, where $0< \beta < \alpha < 1$ are some  absolute constants. 

\subsection{The Framework}
\label{sec:algo:framework}

Let $G = (V, E)$ be the input  graph undergoing edge updates. We maintain a $(B, (1-\epsilon)B)$-{\bf EDCS} $H := (V, E(H))$ of $G$ at all times, as per \cref{thm:edcs}. This incurs an update time of $O\left(\frac{n}{B\epsilon}\right)$. The EDCS $H$ is maintained explicitly, and  we can make adjacency-list and adjacency-matrix queries to $H$ whenever we want. It takes $O(1)$ time to get the answer to any such query. 

Our dynamic maximal matching algorithm  works in {\bf phases}, where each phase consists of $\delta n$ consecutive updates in $G$.  We next introduce the concept of the {\bf core subgraph} $G_{\core} := \left(V, E_{\core}\right)$. At the start of any given phase, we have $E_{\core} := E$. Subsequently, during the phase,  the subgraph $G_{\core}$ evolves as follows. Whenever an edge $e$ gets deleted from $G$, we set $E_{\core} \leftarrow E_{\core} \setminus \{e\}$. In contrast, whenever an edge $e$ gets inserted into $G$, we leave the set $E_{\core}$ unchanged. This ensures that within any given phase $G_{\core}$ is a {\em decremental} graph, with $E_{\core} \subseteq E$ at all times.

Our first task will be to efficiently maintain a  matching $M_{\base} \subseteq E_{\core}$ in the core subgraph, which we will refer to as the {\bf base matching}. This base matching will satisfy certain desirable properties (see \cref{new:lem:base}). Let $G_{\res} := (V_{\res}, E_{\res})$ denote the subgraph of $G$ induced by the set of nodes that are unmatched under $M_{\base}$, that is,  $V_{\res} := V \setminus V(M_{\base})$ and  $E_{\res} := \{ (u, v) \in E : u, v \in V_{\res}\}$. We will refer to $G_{\res}$ as the {\bf adjunct subgraph} of $G$ w.r.t.~$M_{\base}$. Next, we will show how to efficiently maintain a \emph{maximal} matching $M_{\res} \subseteq E_{\res}$ in $G_{\res}$, which will be called the {\bf adjunct matching} (see \cref{new:lem:maximal}). This will immediately imply that $M_{\fnl} := M_{\base} \cup M_{\res}$ is a maximal matching in  $G = (V, E)$; and  $M_{\fnl}$ will be the output of our dynamic algorithm.   

At this point, the reader might get alarmed because  $G_{\res}$ undergoes \emph{node-updates} every time an edge gets added to/removed from $M_{\base}$, which might indicate that it is impossible to efficiently maintain a maximal matching in $G_{\res}$. To overcome this barrier, we will ensure that: (i) the {\em arboricity} of $G_{\res}$ is sublinear in $n$; (ii) the node-updates in $G_{\res}$  satisfy some nice properties, so that coupled with (i) we can maintain the matching $M_{\res}$ in sublinear in $n$ update time. 

\medskip
\noindent {\bf Organization of the rest of \cref{new:dyn:maximal}.} In \cref{sec:classify}, we present a classification of nodes according to their degrees in $H$, which will be used to define the base and the adjunct matchings. In \cref{sec:base,sec:residual}, we respectively show how to maintain the matchings $M_{\base}$ and $M_{\res}$ (see \cref{new:lem:base} and \cref{new:lem:maximal}). Finally, we prove \cref{thm:main} in \cref{sec:wrapup}.

\subsection{Classification of Nodes according to EDCS Degrees}
\label{sec:classify}

To  define the base-matching, we first need to classify the nodes in $V$ into the following categories. Let $G_{\init} = (V, E_{\init})$ and $H_{\init} = (V, E(H_{\init}))$ respectively denote the status of the input graph $G$ and the EDCS $H$ at the start of a given phase. Then throughout that phase:  
\begin{itemize}
\item a node $v \in V$ is {\bf high} if $\deg_{H_{\init}}(v) \geq \left(\frac{1}{2} + \delta - \epsilon \right)B$,
\item a \emph{high} node $v \in V$ is {\bf very-high} if $\deg_{H_{\init}}(v) > \left(\frac{1}{2} + \delta \right)B$,
\item a node $v \in V$ is {\bf medium} if $\left(\frac{1}{2} - \delta \right)B \leq \deg_{H_{\init}}(v) < \left(\frac{1}{2} + \delta - \epsilon \right)B$,
\item a \emph{medium} node $v \in V$ is {\bf almost-low} if $\deg_{H_{\init}}(v) \leq \left( \frac{1}{2} - \delta + \epsilon \right)B$, and 
\item a node $v \in V$ is {\bf low} if $\deg_{H_{\init}}(v) < \left(\frac{1}{2} - \delta \right)B$. 
\end{itemize}
Let $V_{\hi}, V_{\vhi}, V_{\med}, V_{\alo}$ and $V_{\lo}$ respectively denote the sets of high, very-high, medium, almost-low and low nodes. Note that \emph{these sets remain fixed throughout the duration of a phase}; they change only when the current phase ends and the next phase starts. 

\includegraphics[width=\textwidth]{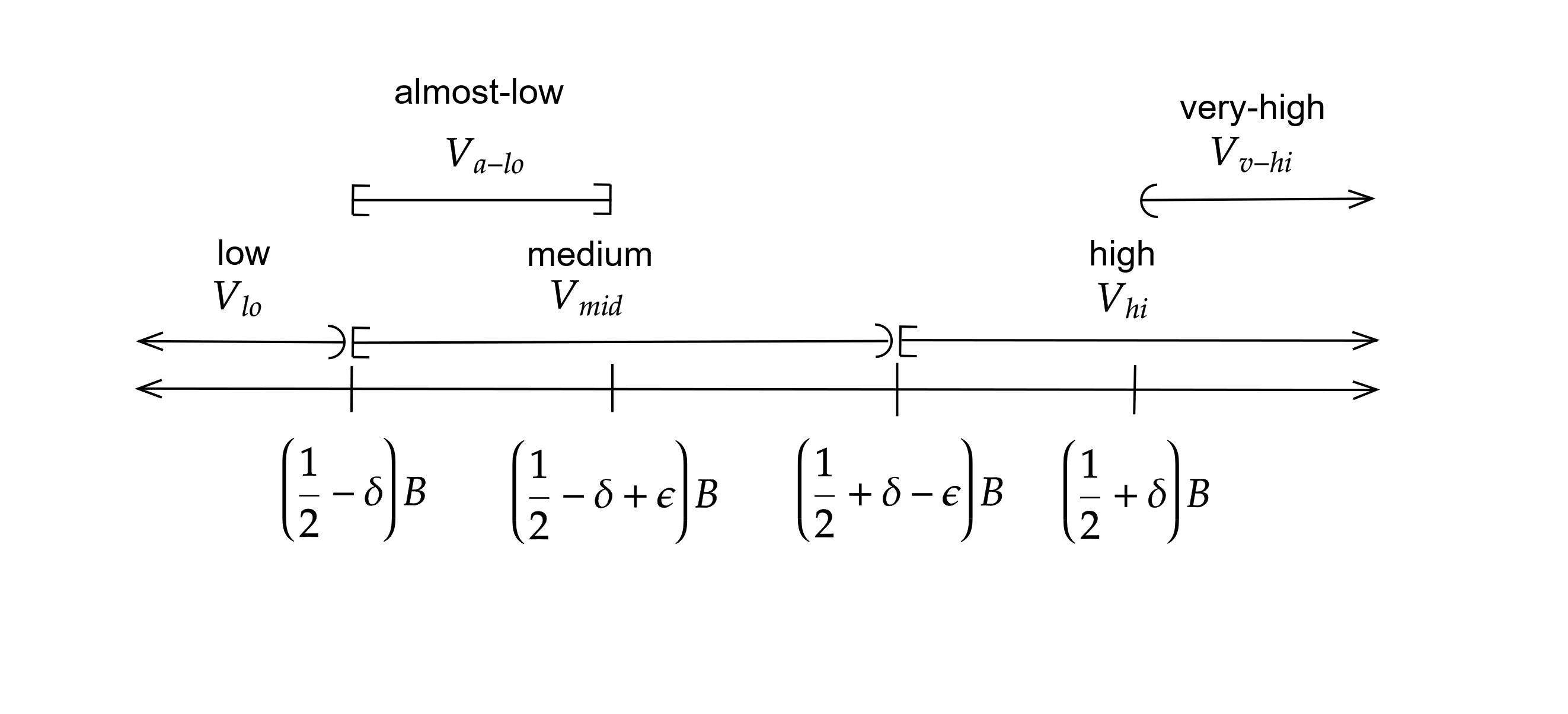}

The main reason why we use this classification is to ensure the proposition below.

\begin{prop}
\label{new:prop:fact}The following properties hold.
\begin{enumerate}
\item \label{new:enu:very-high} Consider any edge $(u, v) \in E(H_{\init})$ where $u$ is very-high. Then $v$ must be low. 
\item \label{new:enu:high}
Consider any edge $(u, v) \in E(H_{\init})$ where  $u$ is high. Then  $v$ must be either low or almost-low.
\item \label{new:enu:low} Consider any edge $e  \in E_{\init}$ whose one endpoint is low and the other endpoint is either low or medium. Then such an edge $e$ must appear in $H_{\init}$.
\end{enumerate}
\end{prop}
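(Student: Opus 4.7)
The plan is to prove all three parts by direct arithmetic, exploiting the two defining inequalities of a $(B,(1-\epsilon)B)$-EDCS (\cref{defn:edcs}) together with the numerical thresholds in the node classification. Parts~\ref{new:enu:very-high} and~\ref{new:enu:high} will use the upper bound $\deg_{H_{\init}}(u)+\deg_{H_{\init}}(v)\le B$ that holds for every edge \emph{in} $H_{\init}$, while part~\ref{new:enu:low} will use the lower bound $\deg_{H_{\init}}(u)+\deg_{H_{\init}}(v)\ge (1-\epsilon)B$ that holds for every edge of $G_{\init}$ \emph{not} in $H_{\init}$, arguing by contradiction. I don't anticipate any genuine obstacle here; the only thing to watch is that the inequalities are strict/non-strict in the correct direction to match the definitions of very-high, high, almost-low, and low, and that $\delta=100\epsilon$ is used at the right place in part~\ref{new:enu:low}.

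For part~\ref{new:enu:very-high}, I would use $\deg_{H_{\init}}(u)>(\tfrac12+\delta)B$ together with $\deg_{H_{\init}}(u)+\deg_{H_{\init}}(v)\le B$ to conclude $\deg_{H_{\init}}(v)<(\tfrac12-\delta)B$, which is exactly the low-node threshold. For part~\ref{new:enu:high}, the same manipulation with the weaker bound $\deg_{H_{\init}}(u)\ge(\tfrac12+\delta-\epsilon)B$ gives $\deg_{H_{\init}}(v)\le(\tfrac12-\delta+\epsilon)B$; this is precisely the almost-low threshold, so $v$ is either low (if its degree is strictly below $(\tfrac12-\delta)B$) or almost-low (if it lies in the closed interval $[(\tfrac12-\delta)B,(\tfrac12-\delta+\epsilon)B]$, which in particular forces $v\in V_{\med}$).

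For part~\ref{new:enu:low}, I would argue by contradiction: assume $e=(u,v)\in E_{\init}\setminus E(H_{\init})$. By the EDCS definition we then have $\deg_{H_{\init}}(u)+\deg_{H_{\init}}(v)\ge (1-\epsilon)B$. On the other hand, the hypothesis that one endpoint (say $u$) is low and the other endpoint $v$ is low or medium gives $\deg_{H_{\init}}(u)<(\tfrac12-\delta)B$ and $\deg_{H_{\init}}(v)<(\tfrac12+\delta-\epsilon)B$; summing yields $\deg_{H_{\init}}(u)+\deg_{H_{\init}}(v)<(1-\epsilon)B$, a contradiction. (If both endpoints are low, the sum is even smaller, $<(1-2\delta)B$, and since $\delta=100\epsilon$ this is well below $(1-\epsilon)B$.) Hence $e$ must belong to $H_{\init}$, as required.
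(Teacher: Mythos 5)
Your proof is correct and follows essentially the same route as the paper's: parts (1) and (2) apply the EDCS upper bound $\deg_{H_{\init}}(u)+\deg_{H_{\init}}(v)\le B$ for in-subgraph edges to the respective degree thresholds, and part (3) applies the lower bound $\ge(1-\epsilon)B$ for out-of-subgraph edges (the paper states this as a direct implication rather than a contradiction, but it is the same argument). One minor remark: your opening worry about $\delta=100\epsilon$ being needed in part (3) is unfounded — the $\delta$ terms cancel when you sum $<(\tfrac12-\delta)B$ and $<(\tfrac12+\delta-\epsilon)B$, so the bound $<(1-\epsilon)B$ holds for any $\delta>0$, and your parenthetical about the both-low case is superfluous.
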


\begin{proof} {\bf For part~(\ref{new:enu:very-high})}, note that since $(u, v) \in E(H_{\init})$, we have $\deg_{H_{\init}}(u) + \deg_{H_{\init}}(v) \leq B$ (see \cref{defn:edcs}). Furthermore, since $u$ is very-high, we have $\deg_{H_{\init}}(u) > \left( \frac{1}{2} + \delta \right)B$. Thus, it follows that $\deg_{H_{\init}}(v) \leq B - \deg_{H_{\init}}(u) < \left( \frac{1}{2} - \delta \right)B$, and so $v$ must be a low node.

\medskip
\noindent
{\bf For part~(\ref{new:enu:high})}, note that since $(u, v) \in E(H_{\init})$, we have $\deg_{H_{\init}}(u) + \deg_{H_{\init}}(v) \leq B$ (see \cref{defn:edcs}). Next, since $u$ is high, we have $\deg_{H_{\init}}(u) \geq \left( \frac{1}{2} + \delta - \epsilon \right)B$. Thus, it follows that $\deg_{H_{\init}}(v) \leq B - \deg_{H_{\init}}(u) \leq \left( \frac{1}{2} - \delta + \epsilon \right)B$. Accordingly, the node $v$ is either low or almost-low.

\medskip
\noindent
{\bf For part~(\ref{new:enu:low})}, let $e = (u, v)$, where $u$ is low and $v$ is either low or medium. Then $\deg_{H_{\init}}(u) < \left(\frac{1}{2} - \delta\right)B$ and $\deg_{H_{\init}}(v) < \left(\frac{1}{2} + \delta - \epsilon \right)B$, and hence $\deg_{H_{\init}}(u) + \deg_{H_{\init}}(v) < (1-\epsilon)B$. As $H_{\init}$ is a $(B, (1-\epsilon)B)$-EDCS of $G_{\init}$, it follows that $(u, v) \in E\left(H_{\init}\right)$ (see \cref{defn:edcs}).
\end{proof}

Before concluding this section, we introduce one last category of nodes, as follows. 

\begin{definition}
\label{def:safe}
At any point in time within a phase, let $H_{\core}$ denote the subgraph of $H_{\init}$ restricted to the  edges in $G_{\core}$. Specifically, we define $H_{\core} := \left(V, E(H_{\core})\right)$, where $E(H_{\core}) := E\left(H_{\init}\right) \cap E_{\core}$. Now, consider a high node $v \in V_{\hi}$. At any point in time within the current phase, we say that $v$ is {\bf damaged} if  $\deg_{H_{\core}}(v) < \left( \frac{1}{2} + \delta - 2 \epsilon \right)B$, and {\bf safe} otherwise. We let $V_{\dmg} \subseteq V_{\hi}$ and $V_{\safe} := V_{\hi} \setminus V_{\dmg}$ respectively denote the sets of damaged and safe nodes.
\end{definition}

Recall that   $\deg_{H_{\init}}(v) \geq \left(\frac{1}{2} + \delta - \epsilon \right)B$ for all  nodes $v \in V_{\hi}$. Thus,  for  such a node to get damaged at least $\epsilon B$ many of its incident edges must be deleted since the start of the phase. Since each phase lasts for $\delta n$ updates in $G$ and $\delta = 100 \epsilon$ (see~(\ref{eq:delta})), we get the following important corollary. 

\begin{cor}
\label{cor:damaged}
We always have $\left|V_{\dmg}\right| \leq \frac{2\delta n}{\epsilon B} = O\left(\frac{n}{B} \right)$. Furthermore, at the start of a phase we have $V_{\dmg} = \emptyset$. Subsequently, during the phase, the subset $V_{\dmg}$ grows monotonically over time.
\end{cor}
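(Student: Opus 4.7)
The proof plan is essentially a counting argument, building on the intuition already given in the paragraph preceding the corollary.

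First, I would argue the monotonicity property. Within a single phase, $E(H_{\init})$ is fixed at its value at the beginning of the phase, while $E_{\core}$ only loses edges (since edge insertions to $G$ leave $E_{\core}$ unchanged by the framework in~\cref{sec:algo:framework}, and edge deletions remove from $E_{\core}$). Since $E(H_{\core}) := E(H_{\init}) \cap E_{\core}$, it follows that $E(H_{\core})$ is also decremental within a phase, and hence $\deg_{H_{\core}}(v)$ is non-increasing for every $v \in V$. In particular, once $\deg_{H_{\core}}(v)$ drops below the threshold $\left(\frac{1}{2} + \delta - 2\epsilon\right)B$, it stays below, so once $v$ enters $V_{\dmg}$ it remains there for the rest of the phase.

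Second, I would handle the initialization. At the start of a phase, $E_{\core} = E$ and hence $H_{\core} = H_{\init}$, so for every high node $v \in V_{\hi}$ we have
\[
\deg_{H_{\core}}(v) \;=\; \deg_{H_{\init}}(v) \;\ge\; \left(\tfrac{1}{2} + \delta - \epsilon\right)B \;>\; \left(\tfrac{1}{2} + \delta - 2\epsilon\right)B,
\]
which shows that $v \notin V_{\dmg}$. Thus $V_{\dmg} = \emptyset$ at the start of a phase.

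Finally, for the size bound, I would charge damaged nodes to lost EDCS edges. Each adversarial update within the phase deletes at most one edge from $E_{\core}$ (insertions do not affect $E_{\core}$), and therefore removes at most one edge from $E(H_{\core})$. Over the $\delta n$ updates of a phase, the total degree loss in $H_{\core}$ summed over all vertices is at most $2\delta n$. On the other hand, for a high node $v$ to belong to $V_{\dmg}$ it must lose at least $\epsilon B$ incident edges in $H_{\core}$ relative to $H_{\init}$, since $\deg_{H_{\init}}(v) \ge \left(\frac{1}{2}+\delta-\epsilon\right)B$ while $\deg_{H_{\core}}(v) < \left(\frac{1}{2}+\delta-2\epsilon\right)B$. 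Combining these two facts gives
\[
|V_{\dmg}| \;\le\; \frac{2\delta n}{\epsilon B} \;=\; O\!\left(\frac{n}{B}\right),
\]
where the last equality uses $\delta = 100\epsilon$ from~\eqref{eq:delta}.

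None of the steps present a real obstacle; the only point worth double-checking is that nothing subtle happens for insertions, i.e., that the decremental structure of $H_{\core}$ within a phase is genuinely preserved by the framework. This follows immediately from the definition of $E_{\core}$ in~\cref{sec:algo:framework}, so the whole argument reduces to the clean counting above.
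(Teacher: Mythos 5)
Your proof is correct and follows the same counting argument the paper gives in the paragraph preceding the corollary: a high node must lose $\ge \epsilon B$ incident edges of $H_{\core}$ to become damaged, the $\delta n$ deletions in a phase remove at most $2\delta n$ total degree, and $\delta = 100\epsilon$ gives $O(n/B)$. You simply spell out the monotonicity of $E(H_{\core})$ and the initialization $H_{\core} = H_{\init}$ which the paper leaves implicit.
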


\subsection{Maintaining the Base Matching}
\label{sec:base}

We will show how to maintain the matching $M_{\base} \subseteq E(H_{\core})$, satisfying two  conditions. \cref{new:cond:base:match} specifies which nodes are matched under $M_{\base}$, whereas \cref{new:cond:base:recourse:1} ensures the ``stability'' of the base matching as the input graph $G$ undergoes updates within a given phase.

\begin{condition}
\label{new:cond:base:match}
Every safe high node is matched under $M_{\base}$, and every damaged high node is free under $M_{\base}$. Furthermore, at most $22\delta n$ medium nodes are free under $M_{\base}$.
\end{condition}

\begin{condition}
\label{new:cond:base:recourse:1}
Consider any update in $G$ within a given phase. Due to this update, at most $O(1)$ nodes get inserted into/deleted from the set $V(M_{\base})$. 
\end{condition}

  Our main result in this section is summarized below.

\begin{lem}
\label{new:lem:base}
We can maintain a  matching $M_{\base} \subseteq E(H_{\core})$  satisfying \cref{new:cond:base:match,new:cond:base:recourse:1} by:
\begin{enumerate}
\item \label{new:alg:det} either a deterministic algorithm with $\tilde{O}\left(B \cdot n^{1/2} \cdot \delta^{-3/2} \right)$ amortized update time; or
\item \label{new:alg:rand} a randomized algorithm with  $\tilde{O}\left(B \cdot \delta^{-1} + \delta^{-3}\right)$ amortized update time. The algorithm is correct with high probability against an adaptive adversary. 
\end{enumerate}
\end{lem}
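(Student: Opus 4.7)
The plan adapts the approach in \cref{sec:decremental:overview}: at the start of each phase, initialize $M_{\base}$ using \cref{lem:match almost max degree}, and throughout the phase maintain it via the $\LPM$ data structure of \cref{lem:perfect matching} applied to the bipartite high/low subgraph of $H_{\core}$.

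\textbf{Initialization.} I would first carve out of $H_{\init}$ a pruned subgraph $H'_{\init}$ in which every very-high node has degree exactly $\Delta := \lceil(\tfrac12+\delta)B\rceil$, by discarding extra edges incident to very-high nodes; by \cref{new:prop:fact}(\ref{new:enu:very-high}) these discarded edges all land in $V_{\lo}$, so medium-node degrees are untouched. In $H'_{\init}$ every node of $V_{\hi}\cup V_{\med}$ therefore has degree at least $(1-O(\delta))\Delta$, and feeding $H'_{\init}$ into \cref{lem:match almost max degree} with $\kappa=\Theta(\delta)$ produces a matching $M_0$ covering all but $O(\delta n)$ vertices of $V_{\hi}\cup V_{\med}$. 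Since $V_{\dmg}=\emptyset$ at the start of a phase, $M_0$ in particular matches every safe node.

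\textbf{Dynamic maintenance.} I would instantiate $\LPM$ on the bipartite subgraph $H_{\hilo}\subseteq H_{\core}$ with left side $V_{\hi}$ and right side $V_{\lo}\cup V_{\alo}$, threshold $X:=\lceil(\tfrac12+\delta-2\epsilon)B\rceil$, and gap $\gamma=\Theta(\delta)$; bipartiteness comes from \cref{new:prop:fact}(\ref{new:enu:high}), and a short calculation from the definitions of almost-low and safe establishes the $\gamma$-degree-gap. Each deletion of an $H_{\hilo}$-edge is forwarded to $\LPM.\Delete$, which in particular unmatches any high node whose degree falls below $X$ -- precisely when that node becomes damaged, giving the ``damaged $\Rightarrow$ free'' half of \cref{new:cond:base:match}. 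Whenever a safe node $u\in V_{\safe}$ is currently $\LPM$-free I call $\LPM.\Augment(u)$, which is valid because $\deg_{H_{\hilo}}(u)\ge X$ on safe nodes. I would then take $M_{\base}$ to be the current $\LPM$-matching together with any surviving medium-incident edges of $M_0$.

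\textbf{Main obstacle and running time.} The delicate point is that $\LPM$ is blind to medium vertices: an $\LPM.\Augment(u)$ that terminates at an $\LPM$-free low vertex $v$ may unmatch some medium $w$ that was paired to $v$ in $M_{\base}$ via $M_0$. Controlling this relies on the low-recourse guarantee of \cref{lem:perfect matching} -- each adversarial update triggers only $O(1)$ $\LPM$ operations and so frees at most $O(1)$ mediums, adding at most $O(\delta n)$ newly free mediums over a phase of $\delta n$ updates, which together with the $O(\delta n)$ free mediums from initialization comfortably fits the $22\delta n$ budget of \cref{new:cond:base:match}. The same $O(1)$-recourse fact delivers \cref{new:cond:base:recourse:1}. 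For the running time, the initialization cost of \cref{lem:match almost max degree} amortizes over the $\delta n$ updates of a phase to $\tilde O(B/\delta^2)$ deterministically and $\tilde O(B)$ randomized, while the dynamic cost of $\LPM$ from \cref{lem:perfect matching} is $\tilde O(Bn^{1/2}\delta^{-3/2})$ amortized in the deterministic case (using epoch length $\sqrt{n\gamma}$) and $\tilde O(B\delta^{-1}+\delta^{-3})$ per update in the randomized case, matching the claimed bounds.
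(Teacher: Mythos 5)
Your proposal follows essentially the same route as the paper: prune $H_{\init}$ to $H'_{\init}$ so that very-high nodes have degree $(\tfrac12+\delta)B$, invoke \cref{lem:match almost max degree} on $H'_{\init}$ to get an initial matching $M_0$ (this is the paper's $M_{\most}$), instantiate $\LPM$ on the bipartite graph $H_{\hilo}$ with the same choices of $X$ and $\gamma$, forward $H_{\hilo}$-edge deletions to $\LPM.\Delete$, repair free safe nodes with $\LPM.\Augment$, and keep $M_{\base}$ consistent with the augmenting paths while tracking how many medium nodes are freed; the epoch length $\sqrt{n\gamma}$ and the final amortized bounds also match the paper's analysis.

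One sentence in your initialization paragraph is off: since \cref{lem:match almost max degree} only matches all but $O(\delta n)$ vertices of $V_{\hi}\cup V_{\med}$, and $V_{\safe}=V_{\hi}$ at the start of a phase, $M_0$ does \emph{not} necessarily match every safe node. The paper explicitly performs up to $6\delta n$ calls to $\mathrm{MatchViaAugment}$ at initialization to clear the leftover free safe nodes before the phase's updates begin. Your proposal does contain the needed mechanism (``whenever a safe node is $\LPM$-free I call $\LPM.\Augment$''), and the cost of these $O(\delta n)$ initial augments is already absorbed in your stated running time bounds, so the overall argument goes through; just make sure the initial augmentation pass is stated explicitly rather than implicitly covered by the dynamic rule.
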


\medskip
\noindent 
 Our algorithm for maintaining the matching $M_{\base}$ will make use of the following building blocks.
\begin{itemize}
\item A matching $M_{\most} \subseteq E_{\init}$ that will be computed at the start of each phase.
\item A  matching $M_{\hilo}$ in a subgraph $H_{\hilo}$, that will be maintained throughout the phase. 
\end{itemize}
The rest of \cref{sec:base} is organized as follows. In \cref{sub:sec:most}, We first show how to compute the  matching $M_{\most} \subseteq E_{\init}$ at the start of a phase. Next, we define the subgraph $H_{\hilo}$ in \cref{sec:subgraph:hilo}, and summarize some of its key properties. In \cref{sec:init:base} and \cref{sec:update:base}, we respectively show how to initialize the matchings $M_{\hilo}$ and $M_{\base}$ at the start of a phase, and how to maintain $M_{\hilo}$ and $M_{\base}$ throughout the updates within a phase. Finally, we put everything together in \cref{sec:proof:lem:base:new} and prove \cref{new:lem:base}.

 \subsubsection{Computing the matching $M_{\most}$ at the start of a phase}
 \label{sub:sec:most}

Our main result in this section is summarized in the claim below (see Algorithm~\ref{algo:mmost} for the pseudocode).

\begin{claim}
\label{new:cor:mmost} 
\label{new:cor:most:2}
{At the start of a phase}, we can compute a matching $M_{\most} \subseteq E(H_{\init})$, such that at most $6\delta n$ nodes from the set $V_{\hi} \cup V_{\med}$ are free under $M_{\most}$. The matching $M_{\most}$ can be computed 
\begin{itemize}
\item  either by a deterministic algorithm that runs in $\tilde{O}\left(\frac{nB}{\delta}\right)$ time, 
\item  or by a randomized algorithm that runs in $\tilde{O}\left(nB\right)$ time with high probability.
\end{itemize}
\end{claim}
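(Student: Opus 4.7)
}
The plan is to reduce the problem to Lemma~\ref{lem:match almost max degree} by first carving out a subgraph $H'_\init \subseteq H_\init$ in which every vertex of $V_\hi \cup V_\med$ has degree very close to the maximum degree, and then invoking the almost-max-degree matching subroutine on $H'_\init$.

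First, I would construct $H'_\init$ as follows. Set $\Delta := \left(\tfrac{1}{2}+\delta\right)B$ and initialize $H'_\init \leftarrow H_\init$. For each $v \in V_\vhi$, iteratively remove edges of $H'_\init$ incident on $v$ until $\deg_{H'_\init}(v) = \Delta$. By~\cref{new:enu:very-high} of~\cref{new:prop:fact}, every edge incident on a very-high vertex in $H_\init$ has its other endpoint in $V_\lo$, so this trimming process does \emph{not} decrease the degree of any vertex in $V_\med \cup (V_\hi \setminus V_\vhi)$. Consequently: (i) for $v \in V_\vhi$ we have $\deg_{H'_\init}(v) = \Delta$; (ii) for $v \in V_\hi \setminus V_\vhi$ we have $\deg_{H'_\init}(v) = \deg_{H_\init}(v) \in \left[\left(\tfrac{1}{2}+\delta-\epsilon\right)B,\,\Delta\right]$; and (iii) for $v \in V_\med$ we have $\deg_{H'_\init}(v) = \deg_{H_\init}(v) \geq \left(\tfrac{1}{2}-\delta\right)B$. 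Every other vertex has degree $< \left(\tfrac{1}{2}-\delta\right)B < \Delta$. Hence $\Delta$ is the maximum degree of $H'_\init$, and every vertex in $V_\hi \cup V_\med$ has degree in $H'_\init$ at least $\left(\tfrac{1}{2}-\delta\right)B$.

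Next, I would pick $\kappa := 3\delta$. A direct calculation shows $\left(\tfrac{1}{2}-\delta\right)B \geq (1-3\delta)\Delta$ for all sufficiently small $\delta$ (using~\eqref{eq:delta}), so $V_\hi \cup V_\med \subseteq V_\kappa$, where $V_\kappa$ is the set of vertices with degree at least $(1-\kappa)\Delta$ in $H'_\init$. Applying~\cref{lem:match almost max degree} to $H'_\init$ with this $\kappa$ then yields a matching $M_\most \subseteq E(H'_\init) \subseteq E(H_\init)$ that leaves at most $2\kappa n = 6\delta n$ vertices of $V_\kappa$, and hence of $V_\hi \cup V_\med$, unmatched.

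For the running time, constructing $H'_\init$ can be done in $O(|E(H_\init)|) = O(nB)$ time (the EDCS has maximum degree $B$). The dominant cost is the call to~\cref{lem:match almost max degree}. The deterministic bound gives $O\!\left(|E(H'_\init)|\cdot\tfrac{\log n}{\kappa}\right) = \tilde O\!\left(\tfrac{nB}{\delta}\right)$. For the randomized bound, we have $\Delta = \Theta(B)$ and $\kappa = \Theta(\delta)$; the side condition $\Delta \geq 4/\kappa$ reduces to $B\delta = \Omega(1)$, which is easily satisfied by the parameter choices in~\cref{sec:wrapup}. This gives a running time of $O(n\Delta\kappa\log^2 n) = \tilde O(nB\delta) \leq \tilde O(nB)$ with high probability, as required. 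I do not anticipate a significant obstacle: the only thing that truly needs checking is that the degree of the weakest medium vertex is within a $(1-O(\delta))$ factor of the trimmed maximum degree $\Delta$, and this follows directly from the definitions of $V_\med$ and $\Delta$ together with the EDCS degree bound.
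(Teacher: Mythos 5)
Your proposal is correct and follows essentially the same route as the paper's own proof: trim very-high vertices down to degree $\Delta=(\tfrac12+\delta)B$ using Proposition~\ref{new:prop:fact}(\ref{new:enu:very-high}) to argue that medium and high degrees are untouched, set $\kappa=3\delta$, verify $V_\hi\cup V_\med\subseteq V_\kappa$, and invoke Lemma~\ref{lem:match almost max degree}. The running-time accounting also matches the paper's.
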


\begin{proof}
{At the start of a phase}, let $H'_{\init}$ be the subgraph of $H_{\init}$ obtained as follows. Initialize $H'_{\init} \leftarrow H_{\init}$. Next, for each very-high node $v \in V_{\vhi}$, arbitrarily keep deleting edges (in $H'_{\init}$)  incident on $v$ until $\deg_{H'_{\init}}(v)$ becomes equal to $\left( \frac{1}{2} + \delta\right)B$. By part~(\ref{new:enu:very-high}) of \cref{new:prop:fact}, this procedure does {\em not} delete any edge incident on a medium or high node. Thus, when this procedure terminates, we have $\left(\frac{1}{2} - \delta \right)B \leq \deg_{H'_{\init}}(v) \leq \left(\frac{1}{2} + \delta \right)B$ for every node $v \in V_{\hi} \cup V_{\med}$, and $\deg_{H'_{\init}}(v) < \left(\frac{1}{2} - \delta \right)B$ for every node $v \in V_{\lo} = V \setminus (V_{\hi} \cup V_{\med})$. Next, we set:
\begin{equation}
\label{eq:set}
\kappa := 3\delta \text{ and } \Delta := \left( \frac{1}{2} + \delta \right)B.
\end{equation}

Since $B = n^{\alpha}$ and $\epsilon = 1/n^{\beta}$, for some absolute constants $0 < \beta < \alpha < 1$ (see the comment  after~(\ref{eq:delta})), and  $\delta = 100 \epsilon$ (see~(\ref{eq:delta})), we infer that $\Delta \geq \frac{4}{\kappa}$. Also, from the preceding discussion, it follows that $\Delta$ is an upper bound on the maximum degree in $H'_{\init}$. We next apply the concerned subroutine (deterministic or randomized) from \cref{lem:match almost max degree}, by setting $G := H'_{\init}$,  and the parameters $\kappa, \Delta$ as in~(\ref{eq:set}). Let $M$ denote the matching returned by the subroutine from Lemma~\ref{lem:match almost max degree}. At this point, we set $M_{\most} \leftarrow M$.

Because of the way we have defined $\kappa, \Delta$ in~(\ref{eq:set}), it follows that  $\left(\frac{1}{2}-\delta\right)B \geq (1-\kappa)\Delta$, and  hence $V_{\kappa} \supseteq  V_{\hi} \cup V_{\med}$ (see \cref{lem:match almost max degree} to recall the definition of $V_{\kappa}$). From \cref{lem:match almost max degree}, we now get:
\begin{itemize}
\item When the phase starts, at most $2\kappa n = 6 \delta n$ nodes from  $V_{\hi} \cup V_{\med}$ are free under $M_{\most}$.
\item Since $H_{\init}$ is a $(B, (1-\epsilon)B)$-EDCS of $G_{\init}$, it has maximum degree at most $B$ (see \cref{defn:edcs}). So, it takes $O(n B)$ time to compute the subgraph $H'_{\init}$ of $H_{\init}$.  After that, we can compute  $M_{\most}$ either in $\tilde{O}(nB/\delta)$ time deterministically, or in $\tilde{O}(nB \delta)$ time with high probability. This leads to an overall deterministic time complexity of $O(nB) + \tilde{O}(nB/\delta) = \tilde{O}(nB/\delta)$, and an overall randomized time complexity of $O(nB) + \tilde{O}(nB\delta) = \tilde{O}(nB)$.
\end{itemize}
This concludes the proof of the claim.
\end{proof}

\begin{algorithm}[H]
\caption{Computing $M_{\most}$. We run  $\texttt{Initialize-}M_{\most}()$ at the start of a phase.}
\label{algo:mmost}

\SetKwFunction{FInitialize}{Initialize-$M_{\most}$}
\SetKwProg{Fn}{Function}{:}{}
\Fn{\FInitialize{}}{
    $H_\init' \leftarrow H_\init$ \\
    \For{$v \in V_{\vhi}$}{
        \While{$\deg_{H_{\init}'}(v) > (\frac{1}{2} + \delta)B$}{
            Let $e$ be an arbitrary edge incident on $v$ in $H'_{\init}$ \\
            $H_{\init}' \leftarrow H_{\init}' \setminus \{e\}$ \\
        }
    }
        $M_{\most} \leftarrow$ apply the algorithm of Lemma~\ref{lem:match almost max degree} on $H_{\init}'$, with $\kappa := 3\delta$ and $\Delta := \left( \frac{1}{2}+\delta\right)B$ \\
}

\end{algorithm}


\subsubsection{The subgraph $H_{\hilo}$}
\label{sec:subgraph:hilo}

We  define the subgraph $H_{\hilo} := (V_{\hilo}, E(H_{\hilo}))$ as follows: 
\begin{equation}
\label{eq:hilo}
E(H_{\hilo}) := \left\{ (u, v) \in E(H_{\core}) : \{u, v\} \cap  V_{\hi} \neq \emptyset \right\} \text{ and } V_{\hilo} := V_{\hi} \cup (V_{\lo} \cup V_{\alo}).
\end{equation}
In words, $E(H_{\hilo})$ is the set of all the edges in $H_{\core}$ incident on the high nodes, and $V_{\hilo}$ is the set of all nodes that are either  hi, or low, or almost-low. Since $E(H_{\core}) \subseteq E(H_{\init})$, part-(\ref{new:enu:high}) of \cref{new:prop:fact} implies that every edge in $E(H_{\hilo})$ has one endpoint in $V_{\hi}$ and the other endpoint in $V_{\lo} \cup V_{\alo}$. Thus, $H_{\hilo}$ is a bipartite subgraph of $H_{\core}$. We will designate the nodes in $V_{\hi}$ as being on the ``left'', and the nodes in $V_{\lo} \cup V_{\alo}$ as being on the ``right''. 

We also define two parameters $X, \gamma$ as in~(\ref{eq:hilo:parameters}), which allow us to derive some useful properties of the subgraph $H_{\hilo}$. This sets up the stage so that we can  apply \cref{lem:perfect matching} on $H_{\hilo}$.

\begin{equation}
\label{eq:hilo:parameters}
X := \left( \frac{1}{2} + \delta - 2\epsilon\right)B \text{ and } \gamma := \frac{\delta}{10^8}.
\end{equation}

\begin{claim}
\label{new:cl:hilo:deg-gap}
Fix $\gamma$ and $X$ as defined in~(\ref{eq:hilo:parameters}). Then $H_{\hilo}$ is a bipartite subgraph of $H_{\core}$ whose edges cross between $V_{\hi}$ (on the left) and $V_{\lo} \cup V_{\alo}$ (on the right). 
\begin{enumerate}
\item \label{enum:hilo:100} At the start of any given phase, $H_{\hilo}$ has an initial $\gamma$-degree gap at $X$. 
\item  Subsequently, during the phase:
\begin{enumerate}
\item \label{enum:hilo:101} We always have $\deg_{H_{\hilo}}(v) \geq X$ for all nodes $v \in V_{\safe} \subseteq V_{\hi}$, and $\deg_{H_{\hilo}}(v) < X$ for all nodes $v \in V_{\dmg} = V_{\hi} \setminus  V_{\safe}$.
\item \label{enum:hilo:102} The set $V_{\safe}$ shrinks monotonically over time ($V_{\safe} = V_{\hi}$ at the start of the phase).
\item \label{enum:hilo:103} An edge insertion in $G$ does not lead to any change in $H_{\hilo}$. In contrast, if an edge $e$ gets deleted from $G$, then the only change it  triggers in $H_{\hilo}$ is this: If $e$ was part of $H_{\hilo}$, then $e$ also gets deleted from $H_{\hilo}$. 
\end{enumerate}
\end{enumerate} 
\end{claim}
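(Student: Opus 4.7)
The plan is to verify each item by directly unfolding definitions and invoking \cref{new:prop:fact,def:safe,cor:damaged}; the only real content is a routine arithmetic check that the parameters $X$ and $\gamma$ chosen in~(\ref{eq:hilo:parameters}) really force a $\gamma$-degree gap at $X$. First I would establish the bipartite structure: by~(\ref{eq:hilo}), every edge of $H_{\hilo}$ lies in $H_{\core} \subseteq H_{\init}$ and has at least one endpoint in $V_{\hi}$, so applying \cref{new:prop:fact}(\ref{new:enu:high}) forces the other endpoint into $V_{\lo}\cup V_{\alo}$. The same application shows that for any $v \in V_{\hi}$, \emph{every} edge of $H_{\core}$ incident to $v$ belongs to $H_{\hilo}$, yielding the identity $\deg_{H_{\hilo}}(v) = \deg_{H_{\core}}(v)$; this identity is the workhorse for parts (\ref{enum:hilo:100}) and (\ref{enum:hilo:101}).

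For the initial degree gap (part (\ref{enum:hilo:100})), I would use that $H_{\core} = H_{\init}$ at the start of the phase. For $v\in V_{\hi}$ the identity, combined with the definition of ``high'', gives $\deg_{H_{\hilo}}(v) \geq \left(\frac{1}{2}+\delta-\eps\right)B > X$. For $v \in V_{\lo}\cup V_{\alo}$ we have $\deg_{H_{\hilo}}(v) \leq \deg_{H_{\init}}(v) \leq \left(\frac{1}{2}-\delta+\eps\right)B$, so the required inequality $\deg_{H_{\hilo}}(v) \leq (1-\gamma)X$ reduces, after plugging in $X$, to $2\delta - 3\eps - \gamma/2 - \gamma(\delta-2\eps) \geq 0$. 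This is straightforward to verify from $\delta = 100\eps$ and $\gamma = \delta/10^8$: the dominant term is $197\eps$, and every term involving $\gamma$ is negligible by comparison.

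For the running invariants during a phase, part (\ref{enum:hilo:101}) is simply combining the identity $\deg_{H_{\hilo}}(v) = \deg_{H_{\core}}(v)$ for $v \in V_{\hi}$ with \cref{def:safe}, which declares $v \in V_{\hi}$ safe precisely when $\deg_{H_{\core}}(v) \geq X$. Part (\ref{enum:hilo:102}) is immediate from \cref{cor:damaged}: $V_{\dmg}$ starts empty and grows monotonically, hence its complement $V_{\safe}$ inside $V_{\hi}$ shrinks monotonically and equals $V_{\hi}$ at the start. Part (\ref{enum:hilo:103}) uses the framework of \cref{sec:algo:framework}: $H_{\init}$ is frozen throughout a phase and $E_{\core}$ is modified only by deletions, so $H_{\core} = H_{\init}\cap E_{\core}$ can only shrink, and an edge $e$ sits in $H_{\hilo}$ iff it was already in $H_{\core}$ and touched $V_{\hi}$. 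I do not anticipate a real obstacle beyond bookkeeping; the one point that warrants care is confirming that the narrow almost-low band, pushed up to degree as large as $\left(\frac{1}{2}-\delta+\eps\right)B$, still leaves room for a multiplicative $(1-\gamma)$ slack relative to $X$, and this is precisely what the choices $\delta = 100\eps$ and $\gamma = \delta/10^8$ guarantee.
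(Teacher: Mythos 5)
Your proposal is correct and takes essentially the same approach as the paper: establish bipartiteness via \cref{new:prop:fact}(\ref{new:enu:high}), use the identity $\deg_{H_{\hilo}}(v)=\deg_{H_{\core}}(v)$ for $v\in V_{\hi}$ as the workhorse for parts (\ref{enum:hilo:100}) and (\ref{enum:hilo:101}), verify the arithmetic inequality $\left(\tfrac{1}{2}-\delta+\epsilon\right)B\le(1-\gamma)X$ for the degree gap, and read off parts (\ref{enum:hilo:102}) and (\ref{enum:hilo:103}) directly from \cref{cor:damaged}, \cref{def:safe}, and the definitions of $H_{\core}$ and $H_{\hilo}$.
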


\begin{proof}
The bipartiteness of $H_{\hilo}$ follows from the discussion in the paragraph just before~(\ref{eq:hilo:parameters}).

\medskip
\noindent 
{\bf For part-(\ref{enum:hilo:101})}, note that $V_{\safe} \subseteq V_{\hi}$ (see \cref{def:safe}), and for every node $v \in V_{\safe}$ we have $$\deg_{H_{\hilo}}(v) = \deg_{H_{\core}}(v) \geq \left( \frac{1}{2}+\delta - 2 \epsilon \right)B = X.$$
In the above derivation, the first step follows from~(\ref{eq:hilo}), the second step follows from \cref{def:safe}, and the last step follows from~(\ref{eq:hilo:parameters}). Similarly, \cref{def:safe} implies that  $\deg_{H_{\hilo}}(v) = \deg_{H_{\core}}(v) < \left( \frac{1}{2}+\delta - 2 \epsilon \right)B = X$ for all nodes $v \in V_{\dmg} = V_{\hi} \setminus V_{\safe}$.

\medskip
\noindent 
{\bf For part-(\ref{enum:hilo:100})}, consider the situation  at the start of a phase. At this point, we have  $V_{\safe} = V_{\hi}$ (see \cref{def:safe} and \cref{cor:damaged}). Thus, using the same argument as in the preceding paragraph, we conclude that $\deg_{H_{\hilo}}(v) \geq X$ for all nodes $v \in V_{\hi}$. Next, recall the way we  set $X$ and $\gamma$ in~(\ref{eq:hilo:parameters}). Since $\delta = 100  \epsilon$ (see~(\ref{eq:delta})), we get:
\begin{equation}
\label{eq:hilo:parameters:2}
\left( \frac{1}{2} - \delta + \epsilon \right)B \leq (1-\gamma)X.
\end{equation}
Since $E(H_{\hilo}) \subseteq E(H_{\core}) = E(H_{\init})$ at the start of a phase, every node $v \in V_{\lo} \cup V_{\alo}$ satisfies: 
$$\deg_{H_{\hilo}}(v) \leq \deg_{H_{\core}}(v) = \deg_{H_{\init}}(v) \leq  \left( \frac{1}{2} - \delta + \epsilon \right)B \leq (1-\gamma)X.$$
In the above derivation, the penultimate step follows from the definitions of $V_{\lo}$ and $V_{\alo}$, and the last step follows from~(\ref{eq:hilo:parameters:2}). To summarize, at the start of a phase, every node on the left in $H_{\hilo}$ has degree at least $X$, and every node on the right in $H_{\hilo}$ has degree at most $(1-\gamma)X$. Accordingly, $H_{\hilo}$ has an initial $\gamma$-degree gap at $X$.

\medskip
\noindent {\bf Part-(\ref{enum:hilo:102})} follows from \cref{cor:damaged} and \cref{def:safe}.

\medskip
\noindent {\bf Part-(\ref{enum:hilo:103})} follows from the definitions of $H_{\hilo}$ (see~(\ref{eq:hilo})) and $H_{\core}$ (see \cref{def:safe}).
\end{proof}

\subsubsection{Initializing the matchings $M_{\hilo}$ and $M_{\base}$ at the start of a phase}
\label{sec:init:base}

{Throughout this section, we focus on the snapshort of time at the start of a concerned phase}, just before the first update (in that phase) arrives. Thus, for example, {throughout this section we have $V_{\safe} = V_{\hi}$ and $E(H_{\init}) = E(H_{\core})$} (see \cref{def:safe} and \cref{cor:damaged}). Furthermore, recall that $M(S)$ is the set of edges in a matching $M$ that are incident on  $S$ (see \cref{sec:notations}). 

\medskip 
Our main result in this section is summarized below.

\begin{claim}
\label{cl:init:base}
At the start of a phase, we can compute two matchings $M_{\hilo}$ and $M_{\base}$ such that:
\begin{enumerate}
\item \label{enum:init:base:1} $M_{\hilo} = M_{\base}\left(V_{\safe}\right) \subseteq  E(H_{\hilo})$ and $M_{\base} \subseteq E(H_{\core})$.
\item \label{enum:init:base:2} Every node in $V_{\safe}$ is matched in $M_{\base}$, and at most $18\delta n$ nodes in $V_{\med}$ are free in $M_{\base}$.
\item \label{enum:init:base:3} The total time spent to compute $M_{\hilo}$ and $M_{\base}$ is proportional to: $O(nB)$, plus the time it takes to compute $M_{\most}$ (see \cref{new:cor:mmost}), plus the time it takes to make one call to $\mathrm{Init}(.)$ and $6\delta n$ calls to $\mathrm{Augment}(.)$ in $H_{\hilo}$ (see \cref{lem:perfect matching} and part-(\ref{enum:hilo:100}) of \cref{new:cl:hilo:deg-gap}).
\end{enumerate}
\end{claim}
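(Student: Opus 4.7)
The plan is to build $M_{\base}$ on top of the matching $M_{\most}$ produced by \cref{new:cor:mmost}, using the $\LPM$ data structure (\cref{lem:perfect matching}) to match the few remaining unmatched high nodes. First I would invoke \cref{new:cor:mmost} to get $M_{\most} \subseteq E(H_{\init})$, leaving at most $6\delta n$ nodes of $V_{\hi} \cup V_{\med}$ free. Then I isolate the portion of $M_{\most}$ that will seed $\LPM$: let $M^{\star}_{\most} := \{(u,v) \in M_{\most} : \{u,v\} \cap V_{\hi} \neq \emptyset\}$. By part~(\ref{new:enu:high}) of \cref{new:prop:fact}, every edge of $M^{\star}_{\most} \subseteq E(H_{\init})$ crosses between $V_{\hi}$ and $V_{\lo} \cup V_{\alo}$; together with $E(H_{\init}) = E(H_{\core})$ at the start of a phase, this yields $M^{\star}_{\most} \subseteq E(H_{\hilo})$, so $M^{\star}_{\most}$ is a valid matching in the bipartite graph $H_{\hilo}$.

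Second, I would build $H_{\hilo}$ explicitly in $O(nB)$ time (since $|E(H_{\hilo})| \le |E(H_{\init})| = O(nB)$), then initialize $\LPM$ on $H_{\hilo}$ via a single call $\Init(M^{\star}_{\most})$; this is valid because \cref{new:cl:hilo:deg-gap}(\ref{enum:hilo:100}) gives $H_{\hilo}$ an initial $\gamma$-degree gap at $X$. Next, for every node $v \in V_{\hi}$ free in $M^{\star}_{\most}$ I invoke $\Augment(v)$; this call is valid since $\deg_{H_{\hilo}}(v) \ge X$ by \cref{new:cl:hilo:deg-gap}(\ref{enum:hilo:101}), using $V_{\safe} = V_{\hi}$ at the start of the phase. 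Writing $a$ (resp.\ $b$) for the number of high (resp.\ medium) nodes free in $M_{\most}$, \cref{new:cor:mmost} gives $a + b \le 6\delta n$, so the number of $\Augment$ calls is $a \le 6\delta n$, and the resulting $M_{\hilo} \subseteq E(H_{\hilo})$ matches every vertex of $V_{\safe}$.

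Third, I set $M_{\base} := M_{\hilo} \cup \{e \in M_{\most} \setminus M^{\star}_{\most} : e \text{ shares no endpoint with } M_{\hilo}\}$. Every edge of $M_{\most} \setminus M^{\star}_{\most}$ has both endpoints outside $V_{\hi}$, so $M_{\base}$ is a valid matching contained in $E(H_{\core})$ and $M_{\base}(V_{\safe}) = M_{\base}(V_{\hi}) = M_{\hilo}$. For the free-medium count, observe that each $\Augment$ call can cause at most one extra non-high vertex to become free in $M_{\base}$ (namely the $M_{\most}$-partner of the path's free-side endpoint, whose conflicting $M_{\most}$-edge is dropped when forming $M_{\base}$); hence the number of medium nodes free in $M_{\base}$ is at most $b + a \le 6\delta n \le 18\delta n$. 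The claimed runtime falls out immediately: the cost of \cref{new:cor:mmost} for $M_{\most}$, plus $O(nB)$ for building $H_{\hilo}$ and assembling $M_{\base}$ from $M_{\hilo}$ and $M_{\most}$, plus one $\Init(\cdot)$ call and at most $6\delta n$ $\Augment(\cdot)$ calls on $H_{\hilo}$. The main bookkeeping subtlety is checking that augmentations inside $\LPM$ interact cleanly with the portion of $M_{\most}$ outside $M^{\star}_{\most}$, which holds because each augmenting path in $H_{\hilo}$ only touches $V_{\hi} \cup V_{\lo} \cup V_{\alo}$ and only its two endpoints change matched/unmatched status.
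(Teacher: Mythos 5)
Your proof is correct and follows essentially the same approach as the paper's: compute $M_{\most}$, seed the $\LPM$ data structure on $H_{\hilo}$ with the high-incident portion of $M_{\most}$, then repeatedly call $\Augment$ for the $\le 6\delta n$ still-free high nodes. The only stylistic difference is that the paper maintains $M_{\base}$ incrementally — it sets $M_{\base} \leftarrow M_{\most}$ up front and updates it inside a subroutine $\mathrm{MatchViaAugment}$ after each $\Augment$ call (explicitly preserving an invariant $M_{\hilo} = M_{\base}(V_{\safe})$) — whereas you assemble $M_{\base}$ once at the end by taking $M_{\hilo}$ together with the non-conflicting part of $M_{\most} \setminus M^{\star}_{\most}$. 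The two constructions produce the same object; the paper's incremental bookkeeping is chosen because the same $\mathrm{MatchViaAugment}$ subroutine is reused verbatim when handling updates within a phase (\cref{cl:maintain:base}). Your bound of $a + b \le 6\delta n$ free medium nodes is actually tighter than the paper's stated $18\delta n$; the paper simply uses the coarser ``$\le 2$ changes to $V(M_{\base})$ per call'' bound from \cref{cor:induction}, which costs nothing since only the $O(\delta n)$ order matters downstream.
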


Our algorithm will satisfy the following invariant, which  implies part-(\ref{enum:init:base:1}) of \cref{cl:init:base}.

\begin{invariant}
\label{inv:induction}
$M_{\hilo} = M_{\base}\left( V_{\safe}\right) \subseteq E(H_{\hilo})$, and $M_{\base} \subseteq E(H_{\core})$ is a  valid matching. 
\end{invariant}

\medskip
We devote the rest of this section to the proof of \cref{cl:init:base}. 
Since  $M_{\most}\left(V_{\safe}\right) \subseteq M_{\most} \subseteq E(H_{\init}) = E(H_{\core})$ (see \cref{new:cor:mmost}), from~(\ref{eq:hilo}) it follows that:
\begin{equation}
\label{eq:hilo:init}
M_{\most}\left(V_{\safe}\right) \subseteq E(H_{\hilo}).
\end{equation}

Fix the parameters $\gamma, X$ as in~(\ref{eq:hilo:parameters}), and recall \cref{new:cl:hilo:deg-gap}. {We will run the LPM data structure on $H_{\hilo}$} (see \cref{lem:perfect matching}), {and we will let $M_{\hilo}$ be the matching $M$ maintained by LPM}. 

\begin{wrapper}
We first set $M_{\base} \leftarrow M_{\most}$, and then  call  $\mathrm{Init}(M_{\base}(V_{\safe}))$ in LPM.\footnote{This is a valid operation, since $M_{\most}(V_{\safe}) \subseteq E(H_{\hilo})$ according to~(\ref{eq:hilo:init}).}
\end{wrapper}

\begin{corollary}
\label{cor:calltoinit:new}
At the end of the above call to $\mathrm{Init}(.)$, at most $6\delta n$ nodes from $V_{\safe} \cup V_{\med}$ are free under $M_{\base}$, and \cref{inv:induction} is satisfied.  
\end{corollary}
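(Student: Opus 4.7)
The plan is to verify the two claims of the corollary essentially by unfolding the definitions, using the state of affairs \emph{at the start} of the phase (which is the only moment at which this corollary needs to hold). First I would record the relevant snapshot facts: at the start of a phase we have $V_{\safe} = V_{\hi}$ (by \cref{def:safe} and \cref{cor:damaged}), and $E(H_{\core}) = E(H_{\init})$. These two equalities are the key features that make the verification routine.

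To establish \cref{inv:induction}, I would argue as follows. Immediately after the assignment $M_{\base} \leftarrow M_{\most}$, the matching $M_{\base}$ is a valid matching in $G$ since $M_{\most}$ is, and by \cref{new:cor:mmost} we have $M_{\base} = M_{\most} \subseteq E(H_{\init}) = E(H_{\core})$. Next, the call $\mathrm{Init}(M_{\base}(V_{\safe}))$ on the $\lpm$ data structure simply sets $M_{\hilo} \leftarrow M_{\base}(V_{\safe})$, by the specification of $\mathrm{Init}(\cdot)$ in \cref{lem:perfect matching}. Since $V_{\safe} = V_{\hi}$ at this moment, we have $M_{\base}(V_{\safe}) = M_{\most}(V_{\hi})$, which by equation~(\ref{eq:hilo:init}) is a subset of $E(H_{\hilo})$. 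Combined, this yields $M_{\hilo} = M_{\base}(V_{\safe}) \subseteq E(H_{\hilo})$ and $M_{\base} \subseteq E(H_{\core})$, which is exactly \cref{inv:induction}.

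For the bound on the number of free nodes, \cref{new:cor:mmost} guarantees that at most $6\delta n$ nodes from $V_{\hi} \cup V_{\med}$ are free under $M_{\most}$. Since $M_{\base} = M_{\most}$ at this point and $V_{\safe} = V_{\hi}$ at the start of the phase, we get $V_{\safe} \cup V_{\med} = V_{\hi} \cup V_{\med}$, so the number of free nodes in $V_{\safe} \cup V_{\med}$ under $M_{\base}$ is at most $6\delta n$, as desired.

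I do not anticipate any real obstacle here: the corollary is essentially a bookkeeping statement that records the state of the data structures right after initialization, and each of the two conclusions falls out after a one-line substitution. The only minor subtlety worth flagging explicitly is that $\mathrm{Init}(M_{\base}(V_{\safe}))$ is a \emph{legal} input to $\lpm$ on $H_{\hilo}$, which is what equation~(\ref{eq:hilo:init}) ensures; this was already noted in the footnote accompanying the algorithmic step.
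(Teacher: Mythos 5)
Your proof is correct and takes essentially the same approach as the paper: the paper's own proof is a one-line citation to Claim~\ref{new:cor:mmost}, Claim~\ref{new:cl:hilo:deg-gap}, and Lemma~\ref{lem:perfect matching}, and you simply unpack those same references, using $V_{\safe}=V_{\hi}$ and $E(H_{\core})=E(H_{\init})$ at the start of the phase together with equation~(\ref{eq:hilo:init}) to verify Invariant~\ref{inv:induction} and the $6\delta n$ bound on free nodes in $V_{\safe}\cup V_{\med}$.
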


\begin{proof}
Follows from \cref{new:cor:mmost,new:cl:hilo:deg-gap} and \cref{lem:perfect matching}.
\end{proof}

\begin{wrapper}
Let $F^{\star} = V_{\safe} \setminus V(M_{\hilo})$ denote the set of safe high nodes that are free under $M_{\hilo}$ at this point. By \cref{cor:calltoinit:new} and \cref{new:cl:hilo:deg-gap}, we have $|F^{\star}| \leq 6 \delta n$,  and $\deg_{H_{\hilo}}(v) \geq X$ for all $v \in F^{\star}$. Next, {\bf for each node} $v \in F^{\star}$, we call the subroutine $\mathrm{MatchViaAugment}(v)$, as described below. This concludes our initialization of the matching $M_{\base}$ at the start of a phase.
\end{wrapper}

\begin{wrapper}
\paragraph{The subroutine  $\mathrm{MatchViaAugment}(v)$.}  \ 

\medskip 
We call this subroutine only if $v$ is currently unmatched in $M_{\hilo}$ and $v \in V_{\safe}$, so that $v$ lies on the ``left'' with  $\deg_{\hilo}(v) \geq X$ (see \cref{new:cl:hilo:deg-gap}).   It performs the following operations.

\medskip
\begin{itemize}
\label{alg:matchviaaugment}
\item Make a call to $\mathrm{Augment}(v)$ in LPM (see \cref{lem:perfect matching}). This step matches $v$ under $M_{\hilo}$ by applying an augmenting path,  adds exactly one node (say, $u_v$) besides $v$ to the set $V(M_{\hilo})$, and does not remove any node from the set $V(M_{\hilo})$. Since $v$ lies to the ``left'' of the bipartite graph $H_{\hilo}$, it follows that $u_v$ lies to the ``right'' of $H_{\hilo}$ and hence $u_v \in V_{\lo} \cup V_{\alo}$ (see \cref{new:cl:hilo:deg-gap}). Let $M^{\texttt{old}}_v$ (resp.~$M^{\texttt{new}}_v$)  denote the set of edges that get removed from (resp.~added to) the matching $M_{\hilo}$ as we apply this augmenting path.
\item Set $M_{\base} \leftarrow M_{\base} \cup M^{\texttt{new}}_v \setminus M^{\texttt{old}}_v$.
\item If $u_v$ was previously matched under $M_{\base}$ (say, via an edge $(u_v, u'_v) \in M_{\base}$), then  set $M_{\base} \leftarrow M_{\base} \setminus \{ (u_v, u'_v)\}$. 
\end{itemize}
\end{wrapper}

\begin{claim}
\label{cor:induction} A call to the subroutine $\mathrm{MatchViaAugment}(v)$ satisfies the properties stated below.
\begin{enumerate}
\item \label{enum:induction:1}  It does not violate \cref{inv:induction}.
\item \label{enum:induction:2} It leads to  the node $v$ getting added to the set $V(M_{\base})$, and never leads to a node already in $V(M_{\base}) \cap V_{\safe}$ getting deleted from the set $V(M_{\base})$.
\item \label{enum:induction:3} Overall, it leads to at most two changes (node insertions or deletions) to the set $V(M_{\base})$.
\item \label{enum:induction:4} Its time complexity is proportional to the time it takes to implement the call to $\mathrm{Augment}(v)$.
\end{enumerate}
\end{claim}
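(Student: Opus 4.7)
The plan is to trace the augmenting-path update carefully, using the pre-call invariant to pin down where each vertex of the path lies. By \cref{inv:induction} before the call, $M_{\hilo} = M_{\base}(V_{\safe})$, so every matched edge of $M_{\hilo}$ has its high-side endpoint in $V_{\safe}$. Consequently, the augmenting path $P = (v, r_1, l_1, \ldots, l_k, r_{k+1})$ produced by $\mathrm{Augment}(v)$ in $H_{\hilo}$ satisfies $l_i \in V_{\safe}$ for every interior left-vertex (since $(r_i, l_i) \in M_{\hilo}$ pre-call), and $u_v := r_{k+1} \in V_{\lo} \cup V_{\alo}$ by the bipartite structure of $H_{\hilo}$ from \cref{new:cl:hilo:deg-gap}.

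For part~(\ref{enum:induction:1}), the updated $M_{\hilo}$ is the XOR of the old matching with $P$ and remains inside $E(H_{\hilo}) \subseteq E(H_{\core})$ by construction of LPM. The update to $M_{\base}$ removes $M^{\texttt{old}}_v$, adds $M^{\texttt{new}}_v$, and (if applicable) removes a single edge $(u_v, u'_v)$. The key observation I would isolate is that $u'_v \notin V_{\safe}$: otherwise $(u_v, u'_v) \in M_{\base}(V_{\safe}) = M_{\hilo}$ pre-call, contradicting that $u_v$ was $M_{\hilo}$-free. Given this, restricting the $M_{\base}$-update to $V_{\safe}$ coincides exactly with the XOR performed on $M_{\hilo}$, re-establishing $M_{\base}(V_{\safe}) = M_{\hilo}$; validity of $M_{\base}$ as a matching follows because the only possible endpoint conflict, at $u_v$, is precisely what the removal of $(u_v, u'_v)$ resolves.

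For parts~(\ref{enum:induction:2}) and~(\ref{enum:induction:3}), I would enumerate the possible membership changes in $V(M_{\base})$. Every interior $l_i$ and $r_i$ with $i \leq k$ is matched both before and after the update, so its membership is unchanged. Vertex $v$ flips from free to matched, which is a net addition. At $u_v$ one of two things happens: either $u_v$ was free and is now added, or $u_v$ was already matched to $u'_v$ and then $u'_v$ is removed while $u_v$'s own membership is unchanged. In every case this yields at most two membership changes. Since $u'_v \notin V_{\safe}$ by the observation above, and no $l_i$ or $r_i$ changes status, no node of $V_{\safe}$ is ever deleted from $V(M_{\base})$.

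Part~(\ref{enum:induction:4}) is the easiest: once $P$ is returned by $\mathrm{Augment}(v)$, the remaining bookkeeping (XORing $M^{\texttt{new}}_v, M^{\texttt{old}}_v$ into $M_{\base}$, checking whether $u_v$ is matched in $M_{\base}$, and removing $(u_v, u'_v)$ if so) is $O(|P|)$, which is dominated by the runtime of $\mathrm{Augment}(v)$ itself. No step of the argument is conceptually deep; the only real care is the case analysis at $u_v$, because it is the unique node of $P$ whose pre-call $M_{\base}$-status is not forced by membership in $V_{\safe}$, and confirming $u'_v \notin V_{\safe}$ is what ties together the invariant, the recourse bound, and the safe-node preservation property.
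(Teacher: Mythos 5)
Your proof is correct and follows essentially the same route as the paper: the key step in both is deducing $u'_v \notin V_{\safe}$ from the pre-call invariant $M_{\hilo} = M_{\base}(V_{\safe})$ together with the fact that $u_v$ is $M_{\hilo}$-free before augmentation, and the rest is bookkeeping along the augmenting path. Your path-level case analysis for parts~(\ref{enum:induction:2}) and~(\ref{enum:induction:3}) is more explicit than the paper's, which simply declares them immediate from the subroutine's description, but the underlying argument is the same.
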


\begin{proof} {\bf Part-(\ref{enum:induction:1})} Before the call was made $v$ must have been unmatched by $M_\base$. If $u_v$ was matched by $M_\base$ its previous matching edge is removed in the last step. The remaining edge updates $\mathrm{Augment}(v)$ makes to $M_\base$ consist of removing the edges of $M_\base$ along a path from $v$ to $u_v$ and replacing them with an other matching of the vertices on the same path. Hence, $M_\base$ remains a matching. The augmenting path edges added to $M_\base$ are from $E(H_\hilo)$ by definition hence after the call is made $M_\base(V_\safe) \subseteq E(H_\hilo)$.

\medskip
\noindent 
{\bf For part-(\ref{enum:induction:2})}, the node $v$ gets added to the set $V(M_{\base})$ because of the call to $\mathrm{Augment}(v)$ in LPM (see \cref{lem:perfect matching}), and because we ensure that $M_{\base} \supseteq M_{\hilo}$ (see part-(\ref{enum:induction:1}) above). Furthermore, $u'_v$ is the only node that can potentially get removed from $V(M_{\base})$. Just before the call to $\mathrm{MatchViaAugment}(v)$, we have $(u_v, u'_v) \notin M_{\hilo}$  (otherwise the augmenting path we apply will not end at $u_v$) and $M_{\hilo} = M_{\base}\left( V_{\safe}\right)$ (due to \cref{inv:induction}). Thus, it follows that $u'_v \notin V_{\safe}$.

\medskip
\noindent 
{\bf Part-(\ref{enum:induction:3})} and {\bf part-(\ref{enum:induction:4})} follow from the description of the subroutine.
\end{proof}

\subsubsection*{Proof of \cref{cl:init:base}}

\noindent
{\bf Part-(\ref{enum:init:base:1})} follows from \cref{inv:induction}, \cref{cor:calltoinit:new} and part-(\ref{enum:induction:1}) of \cref{cor:induction}.

\medskip
\noindent 
{\bf For part-(\ref{enum:init:base:2})}, note that $\left| \left(V_{\safe} \cup V_{\med}\right) \setminus V(M_{\base}) \right| \leq 6 \delta n$ at the end of the call to $\mathrm{Init}(.)$ in LPM (see \cref{cor:calltoinit:new}). Subsequently, for each node  $v \in F^{\star} = V_{\safe} \setminus V(M_{\hilo})$, we call $\mathrm{MatchViaAugment}(v)$, which leads to:
\begin{itemize}
\item the node $v$ getting added to $V(M_{\base})$ and  no node  already in $V_{\safe} \cap V(M_{\base})$ getting  deleted from $V(M_{\base})$ (see part-(\ref{enum:induction:2}) of \cref{cor:induction}), and
\item  at most two nodes in $V_{\med}$ getting deleted from $V(M_{\base})$ (see part-(\ref{enum:induction:3}) of \cref{cor:induction}).
\end{itemize}
We make $|F^{\star}| \leq 6 \delta n$ such calls to $\mathrm{MatchViaAugment}(.)$. So, at the end of these calls, all the nodes in $V_{\safe}$ are matched in $M_{\base}$, and at most $6 \delta n+ 2 \cdot |F^{\star}| \leq 18 \delta n$ nodes in $V_{\med}$ are free in $M_{\base}$.

\medskip
\noindent 
{\bf For part-(\ref{enum:init:base:3})}, note that $H_{\hilo}$ is a subgraph of $H_{\core}$ (see \cref{new:cl:hilo:deg-gap}), $H_{\core} = H_{\init}$ at the start of a phase, and $H_{\init}$  is a $(B, (1-\epsilon)B)$-EDCS of $G_{\init}$ (see \cref{sec:classify}). Thus, $H_{\init}$ has maximum degree at most $B$ (see \cref{defn:edcs}), and hence $H_{\hilo}$ also has maximum degree at most $B$. Accordingly, at the start of a phase, we can compute the subgraph $H_{\hilo}$ in $O(n B)$ time. Beyond this $O(nB)$ term, the time taken to initialize $M_{\hilo}$ and $M_{\base}$ is dominated by: the time it takes to compute $M_{\most}$, plus the time complexity of making one call to $\mathrm{Init}(.)$ and $|F^{\star}| \leq 6 \delta n$ calls to $\mathrm{Augment}(.)$ (see \cref{cor:calltoinit:new} and part-(\ref{enum:induction:4}) in \cref{cor:induction}).

\subsubsection{Maintaining the matchings $M_{\hilo}$ and $M_{\base}$ during the updates in a phase}
\label{sec:update:base}

Our main result in this section is summarized in the claim below.

\begin{claim}
\label{cl:maintain:base}
Within a  phase, we can maintain two matchings $M_{\hilo}$ and $M_{\base}$ such that:
\begin{enumerate}
\item \label{enum:maintain:base:1} $M_{\hilo} = M_{\base}\left(V_{\safe}\right) \subseteq  E(H_{\hilo})$ and $M_{\base} \subseteq E(H_{\core})$.
\item \label{enum:maintain:base:2} Every node in $V_{\safe}$ is matched in $M_{\base}$, and at most $22\delta n$ nodes in $V_{\med}$ are free in $M_{\base}$.
\item \label{enum:maintain:base:3} Each update in $G$ leads to at most $O(1)$ changes (node insertions/deletions) in the set $V(M_{\base})$.
\item \label{enum:maintain:base:4} {Excluding the time spent on initializing}  $M_{\base}$ and $M_{\hilo}$ (see \cref{sec:init:base}), the total time spent on maintaining $M_{\base}$ and $M_{\hilo}$ within the phase is proportional to the  time it takes to implement $\delta n$ calls to $\mathrm{Delete}(.)$ and $\mathrm{Augment}(.)$ in $H_{\hilo}$ (see \cref{lem:perfect matching} and \cref{new:cl:hilo:deg-gap}).
\end{enumerate}
\end{claim}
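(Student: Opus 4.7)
The plan is to maintain $M_{\hilo}$ by running the $\LPM$ data structure of \cref{lem:perfect matching} on $H_{\hilo}$ (which has the degree gap needed for that lemma by \cref{new:cl:hilo:deg-gap}), while coupling $M_{\base}$ to $M_{\hilo}$ so that \cref{inv:induction} is preserved throughout the phase. To handle an update to $G$: edge insertions require no action, since part~(\ref{enum:hilo:103}) of \cref{new:cl:hilo:deg-gap} says $H_{\hilo}$ is unchanged; for a deletion of $(u,v)$, first update $H_{\core}$ and strip $(u,v)$ from $M_{\base}$ if present, then if $(u,v) \in E(H_{\hilo})$ invoke $\mathrm{Delete}(u,v)$ in $\LPM$ and mirror any dropped edges in $M_{\base}$, and finally, for every $v \in V_{\safe}$ that is now free in $M_{\hilo}$, invoke the subroutine $\mathrm{MatchViaAugment}(v)$ defined in \cref{sec:init:base}. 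The key alignment to observe is that a node $u \in V_{\hi}$ exits $V_{\safe}$ precisely when $\deg_{H_{\hilo}}(u)$ drops below $X$ (part~(\ref{enum:hilo:101}) of \cref{new:cl:hilo:deg-gap}), and this is exactly the event that $\LPM.\mathrm{Delete}$ already handles by evicting $u$'s matching edge, so no separate bookkeeping is required for the safe-to-damaged transition.

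Next I would verify each of the four listed properties. Part~(\ref{enum:maintain:base:1}) holds by construction and by part~(\ref{enum:induction:1}) of \cref{cor:induction}, since every change we make to $M_{\base}$ either is prescribed by $\LPM$ (and therefore stays inside $E(H_{\hilo}) \subseteq E(H_{\core})$) or simply strips edges that have left $E_{\core}$. Part~(\ref{enum:maintain:base:2}) follows because after each update we explicitly augment every free safe vertex; for the medium count, \cref{cl:init:base} gives an initial slack of $18\delta n$, and each $\mathrm{MatchViaAugment}$ call unmatches at most one vertex of $V(M_{\base}) \setminus V_{\safe}$ by part~(\ref{enum:induction:2}) of \cref{cor:induction}, of which only a medium one contributes to the count, so over $\delta n$ updates the total growth is $O(\delta n)$ and the $22\delta n$ bound follows. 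Part~(\ref{enum:maintain:base:3}) holds because each update triggers exactly one $\mathrm{Delete}$ call, which removes $O(1)$ matching edges, hence triggers $O(1)$ $\mathrm{MatchViaAugment}$ calls, each contributing $O(1)$ recourse by part~(\ref{enum:induction:3}) of \cref{cor:induction}. Part~(\ref{enum:maintain:base:4}) is then immediate from the same accounting: at most $\delta n$ $\mathrm{Delete}$ calls and $O(\delta n)$ $\mathrm{Augment}$ calls during the phase, plus $O(1)$ bookkeeping per update.

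The main obstacle is ruling out a chain reaction of augmentations within a single update. This is resolved by the observation in part~(\ref{enum:induction:2}) of \cref{cor:induction} that $\mathrm{MatchViaAugment}(v)$ never removes a safe vertex from $V(M_{\base})$: the evicted endpoint $u'_v$ must lie outside $V_{\safe}$. Consequently, the set of $M_{\hilo}$-free safe vertices strictly shrinks with each call, and the $O(1)$ such vertices created by a single $\mathrm{Delete}$ are exhausted in $O(1)$ rounds. A secondary housekeeping obstacle is calibrating the constants so that $18\delta n + O(\delta n) \leq 22\delta n$, which simply fixes admissible values for the hidden constants in the preceding steps.
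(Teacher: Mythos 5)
Your proposal matches the paper's proof in both the algorithm (run the $\LPM$ data structure of \cref{lem:perfect matching} on $H_{\hilo}$, and per deletion call $\mathrm{Delete}$ followed by at most one $\mathrm{MatchViaAugment}$) and the supporting structure (\cref{inv:induction} as the inductive bridge, \cref{cor:induction} for per-call recourse and running time). Your added observation --- that the safe-to-damaged transition of a high node $v$ coincides with $\deg_{H_{\hilo}}(v)$ crossing $X$, so the eviction built into $\LPM.\mathrm{Delete}$ handles it without separate bookkeeping --- is a correct and useful clarification that the paper leaves implicit.

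The one place your argument is loose is part~(\ref{enum:maintain:base:2}): you attribute the growth in free medium nodes entirely to $\mathrm{MatchViaAugment}$ calls (at most one removal each, by part~(\ref{enum:induction:2}) of \cref{cor:induction}), but you do not account for the up-to-two node removals per update that occur \emph{before} any such call --- namely from stripping the deleted edge out of $M_{\base}$ itself, and from the matching edge that $\LPM.\mathrm{Delete}$ evicts when a high endpoint's degree drops below $X$. The paper counts all of these explicitly ($\leq 2$ changes in $V(M_{\base})$ from the steps preceding $\mathrm{MatchViaAugment}$ plus $\leq 2$ from $\mathrm{MatchViaAugment}$, i.e.\ $\leq 4$ per update), which is precisely what yields $18\delta n + 4\cdot\delta n = 22\delta n$. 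Deferring this to ``calibrating hidden constants'' is not quite adequate here, since the claimed $22\delta n$ bound is exact and the constants are forced by the algorithm rather than adjustable; making this count explicit closes the remaining gap.
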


We start by describing how our algorithm handles an update within a phase.

\begin{wrapper}
Both the matchings $M_{\base}$ and $M_{\hilo}$ remain unchanged if an edge gets inserted into the input graph $G = (V, E)$. Henceforth, consider an update that deletes an edge $e$ from $G$. 
\begin{itemize}
\item We first set $M_{\base} \leftarrow M_{\base} \setminus \{e \}$. 
\item By part-(\ref{enum:hilo:103}) of \cref{new:cl:hilo:deg-gap}, this update leads to at most one edge deletion (and zero edge insertions) in $H_{\hilo}$. If no edge gets deleted from $H_{\hilo}$, then the matching $M_{\hilo}$ remains unchanged, and we proceed towards  handling the next update in $G$. 

\item Otherwise,  the same edge $e = (v, u)$,  with $v \in V_{\hi}$ and $u \in V_{\lo} \cup V_{\alo}$ (say), also gets deleted from $H_{\hilo}$ (see \cref{new:cl:hilo:deg-gap}). In this case,  we call $\mathrm{Delete}(v, u)$ in  LPM  (see \cref{lem:perfect matching}). This call to $\mathrm{Delete}(.)$ leads to at most node in $V_{\safe}$ getting unmatched under $M_{\hilo}$, and  if such a node exists, then it must be $v$. Under such an event (i.e., if  $v \in V_{\safe} \setminus V(M_{\hilo})$), we  call the subroutine  $\mathrm{MatchViaAugment}(v)$ as described in \cref{sec:init:base}. When this call terminates, we are ready to handle the next update in $G$.
\end{itemize}
\end{wrapper}

\subsubsection*{Proof of \cref{cl:maintain:base}} \

\noindent 
 {\bf For part-(\ref{enum:maintain:base:1})}, note that \cref{inv:induction} holds before we start processing the very first update in the phase (see~part-(\ref{enum:init:base:1}) of \cref{cl:init:base}). Now, consider any update in $G$ within the phase, and by inductive hypothesis, suppose that \cref{inv:induction} holds just before this update. We have two cases.

\smallskip
\noindent {\em Case (i).} The  update in $G$ does not lead to any change in $H_{\hilo}$. In this case, from~(\ref{eq:hilo}) and \cref{new:cl:hilo:deg-gap}, it is easy to verify that \cref{inv:induction} continues to hold after we finish processing this update.

\smallskip
\noindent {\em Case (ii).} The concerned update in $G$ is the deletion of an edge $e = (v, u)$ that was already part of $H_{\hilo}$. In this case, the edge $e$ gets deleted from $H_{\hilo}$. We set $M_{\base} \leftarrow M_{\base} \setminus \{e \}$, and then call $\mathrm{Delete}(v, u)$ in LPM. From \cref{lem:perfect matching} and \cref{new:cl:hilo:deg-gap}, it is easy to verify that \cref{inv:induction} holds when we finish the call to $\mathrm{Delete}(v, u)$. At this point, if we have $v \in V_{\safe} \setminus V(M_{\hilo})$, then finally we call the subroutine $\mathrm{MatchViaAugment}(v)$. Part-(\ref{enum:induction:1}) of \cref{cor:induction} implies that \cref{inv:induction} continues to hold when we finish the call to $\mathrm{MatchViaAugment}(v)$.

\medskip
\noindent 
{\bf For part-(\ref{enum:maintain:base:2})}, note that before we start processing the very first update in the phase, every node in $V_{\safe}$ is matched in $M_{\base}$, and at most $18 \delta n$ nodes in $V_{\med}$ are free in $M_{\base}$ (see~part-(\ref{enum:init:base:2}) of \cref{cl:init:base}). 

Now, while handling an update  within the phase, the (potential) call to $\mathrm{MatchViaAugment}(v)$ ensures that every node in $V_{\safe}$ continues to remain matched in $M_{\base}$ (see part-(\ref{enum:induction:2}) of \cref{cor:induction}), and  it leads to at most two nodes changing their matched status in $M_{\base}$ (see part-(\ref{enum:induction:3}) of \cref{cor:induction}). Finally, the steps we perform before making the (potential) call to $\mathrm{MatchViaAugment}(v)$ can also lead to at most two nodes changing their matched status in $M_{\base}$. Taken together, this implies that while handling an update within the phase, there are at most four nodes in $V_{\med} \subseteq V$ that change their status from ``matched'' to ``free'' in $M_{\base}$.

Since a phase lasts for $\delta n$ updates, it follows that at any point in time within the phase we have at most $18 \delta n + 4 \cdot \delta n = 22 \delta n$ nodes in $V_{\med
}$ that are free in $M_{\base
}$.

\medskip
\noindent 
{\bf Part-(\ref{enum:maintain:base:3})} follows from part-(\ref{enum:induction:3}) of \cref{cor:induction}.

\medskip
\noindent 
{\bf For part-(\ref{enum:maintain:base:4})}, recall that a phase lasts for $\delta n$ updates in $G$, and the time spent on handling each such update is dominated by the (at most one) call we make to $\mathrm{AugmentViaMatch}(.)$. The bound on the total time complexity during a phase now follows from part-(\ref{enum:induction:4}) of \cref{cor:induction}.

\subsubsection{Proof of \cref{new:lem:base}}
\label{sec:proof:lem:base:new}

From part-(\ref{enum:maintain:base:1}), part-(\ref{enum:maintain:base:2}) and part-(\ref{enum:maintain:base:3}) of \cref{cl:maintain:base}, we infer that the matching $M_{\base}$ satisfies \cref{new:cond:base:match} and \cref{new:cond:base:recourse:1}. It now remains to analyze the amortized update time of the algorithm.

\medskip
Suppose that we use the \emph{deterministic} algorithms from \cref{lem:perfect matching} and \cref{new:cor:mmost}. First, observe that $E(H_{\hilo}) \subseteq E(H_{\core}) \subseteq E(H_{\init})$, and $H_{\init}$ is a $(B, (1-\epsilon)B)$-EDCS of $G_{\init}$ (see \cref{sec:algo:framework} and \cref{sec:classify}). So, the maximum degree  in $H_{\hilo}$, denoted by $\Delta\left(H_{\hilo}\right)$, is at most the maximum degree in $H_{\init}$. This latter quantity, in turn, is at most $B$ (see \cref{defn:edcs}), and hence:
\begin{eqnarray}
\label{eq:maxdeg}
\Delta\left(H_{\hilo}\right) \leq B.
\end{eqnarray}
Recall the values of $\gamma, X$ from~(\ref{eq:hilo:parameters}) and that of $\delta, \epsilon$ from~(\ref{eq:delta}). Now, by part-(\ref{enum:init:base:3}) of \cref{cl:init:base} and part-(\ref{enum:maintain:base:4}) of \cref{cl:maintain:base}, the time spent on maintaining the matching $M_{\base}$ during a phase is at most:
$$O(nB) + \tilde{O}\left( \frac{nB}{\delta} \right) + \tilde{O}\left( \frac{n \Delta\left(H_{\hilo}\right)}{\gamma} \cdot \left(1 + \frac{\delta n}{\sqrt{n\gamma}} \right) \right) = \tilde{O}\left(\frac{nB}{\delta}  \cdot  \frac{\delta n}{\sqrt{n\delta}}  \right) =  \tilde{O}\left( B n^{1/2} \delta^{-3/2}  \cdot \delta n\right).
$$
Since the phase lasts for $\delta n$ updates in $G$, this implies an amortized update time of $\tilde{O}(B n^{1/2} \delta^{-3/2})$.

\medskip
Next, suppose that we use the \emph{randomized} algorithms from \cref{lem:perfect matching} and \cref{new:cor:mmost}. As before, recall the values of $\gamma, X$ from~(\ref{eq:hilo:parameters}) and that of $\delta, \epsilon$ from~(\ref{eq:delta}). Now, by part-(\ref{enum:init:base:3}) of \cref{cl:init:base} and part-(\ref{enum:maintain:base:4}) of \cref{cl:maintain:base}, the time spent on maintaining the matching $M_{\base}$ during a phase is at most:
$$\tilde{O}\left( nB + n + X \cdot \delta n + \gamma^{-3} \cdot \delta n\right) = \tilde{O}\left( nB + B \cdot \delta n + \delta^{-3} \cdot \delta n\right).$$ Since the phase lasts for $\delta n$ updates in $G$, this implies an amortized update time of: 
$$\tilde{O}( B\delta^{-1} + B + \delta^{-3}) = \tilde{O}\left(\frac{B}{\delta} + \frac{1}{\delta^3}\right).$$

\begin{algorithm}[H]
\caption{Pseudocode of the basic functions for maintaining $M_{hilo}$}
\label{algo:mhilo}
    \Comment{Called one at the start of the phase} \\
\SetKwFunction{FInitialize}{Initialize-$M_{\hilo}$}
\SetKwProg{Fn}{Function}{:}{}
\Fn{\FInitialize{}}{
    $V_{\safe} \leftarrow V_{\hi}$ \\
    $M_\hilo^* \leftarrow M_{\most}(V_{\hi}^{sf})$ \\
}

\SetKwFunction{FDeleteLPM}{Delete}
\SetKwFunction{FAugment}{MatchViaAugment}

    \Comment{Called when edge $(v \in V_\hi, u \in V_\lo \cup V_\alo)$ is deleted from $H_\hilo$} \\
\SetKwFunction{FDeleteHHilo}{Delete-$H_\hilo$}
\SetKwProg{Fn}{Function}{:}{}
\Fn{\FDeleteHHilo{$(v,u)$}}{
        $H_\hilo \leftarrow H_\hilo \setminus (v,u)$
        \If{$(v,u) \in M_\hilo$} {
            \FDeleteLPM{$(v,u)$} \Comment{See Lemma~\ref{lem:perfect matching}} \\
            \If{$v \notin V(M_\hilo)$}{
                \FAugment{$(v)$} \Comment{See Algorithm~\ref{alg:matchviaaugment}}
            }
        }
}
\end{algorithm}

\subsection{Maintaining the Adjunct Matching}
\label{sec:residual}

The purpose of this section is to prove the following lemma (see Algorithm~\ref{alg:mres} for the pseudocode).

\begin{algorithm}
\caption{Pseudo-code of the basic functions for maintaining $M_{\res}$}
\label{alg:mres}
    \Comment{Called at the start of each phase} \\
\SetKwFunction{FInitialize}{Initialize} 
\SetKwProg{Fn}{Function}{:}{}
\Fn{\FInitialize{}}{
    $M_\res = \emptyset$
    \For{$(u,v) \in E_\res$}{
        \If{$u,v \notin V(M_\res)$}{
            $M_\res \leftarrow M_\res \cup \{(u,v)\}$ 
        }
    }
}
    \Comment{Called when an edge is inserted into $M_\res$} \\
\SetKwFunction{FInsertMRes}{Insert-$M_\res$}
\SetKwProg{Fn}{Function}{:}{}
\Fn{\FInsertMRes{$(v,u)$}}{
    \For{$w \in V_\dmg$}{
        Remove $u$ and $v$ from $\F_{(w)}$ 
    }
    $M_\res \leftarrow M_\res \cup \{(u,v)\}$ 
}
    \Comment{Called when a vertex is moved from $V_\safe$ to $V_\dmg$} \\
\SetKwFunction{FInsertVhiDmg}{Insert-$V_\dmg$}
\SetKwProg{Fn}{Function}{:}{}
\Fn{\FInsertVhiDmg{$(v)$}}{
    \If{$v \in V(M_\base)$}{
        $u \leftarrow N_{M_\base}(v)$ \\
        $M_\base \leftarrow M_\base \setminus \{(v,u)\}$ \\
    }
    Initialize $\F_{(v)}$, binary search tree of $\{ u \in V_\res \setminus V(M_\res):(u,v) \in E_\res\}$ 
    \If{$\F_{(v)}$ is not empty}{
        $u \leftarrow$ arbitrary vertex in $\F_{(v)}$ \\
        \FInsertMRes{$(u,v)$} 
    }
}
    \Comment{Called when an edge is inserted into $E_\res$} \\
\SetKwFunction{FInsertERes}{Insert-$E_\res$}
\SetKwProg{Fn}{Function}{:}{}
\Fn{\FInsertERes{$(u,v)$}}{
    $E_\res \leftarrow E_\res \cup \{(u,v)\}$
    \If{$u,v \notin V(M_\base \cup M_\res)$}{
        \FInsertMRes{$(u,v)$} \\
    }
}
    \Comment{Called when an edge is deleted from $M_\res$} \\
\SetKwFunction{FDeleteMRes}{Delete-$M_\res$}
\SetKwProg{Fn}{Function}{:}{}
\Fn{\FDeleteMRes{$(u,v)$}}{
    $M_\res \leftarrow M_\res \setminus \{(u,v)\}$ \\
    \For{$w \in \{u,v\}$}{
        \If{$w \in V_\dmg$}{
            \If{$\F_{(w)}$ is not empty}{
                Let $v'$ be an arbitrary vertex in $\F_{(w)}$ \\
                \FInsertMRes{$(v',w)$} \\
            }
        }
        \Else{
            $S \leftarrow \{v \in V_\res | (w,v) \in E_\res \land v \notin V(M_\res)\}$ \\
            \If{$|S| > 0$}{
                $v' \leftarrow$ arbitrary element of $S$ \\
                \FInsertMRes{$(v',w)$} \\
            }
        }
    }
}

    \Comment{Called when an edge is deleted from $E_\res$} \\
\SetKwFunction{FDeleteERes}{Delete-$E_\res$}
\SetKwProg{Fn}{Function}{:}{}
\Fn{\FDeleteERes($u,v$)}{
    $E_\res \leftarrow E_\res \setminus \{(u,v)\}$ \\
    \If{$(u,v) \notin M_\res$}{
        Update $\F_{(v)}$ and $\F_{(u)}$\\
    }
    \Else {
        \FDeleteMRes{$(u,v)$} \\
    }
}
\end{algorithm}

\begin{lem}
\label{new:lem:maximal} Suppose that we  maintain the matching $M_{\base}$ as per \cref{new:lem:base}. Then with an additive overhead of $\tilde{O}\left(\delta n+ \frac{B}{\delta}+ \frac{n}{\delta B}\right)$ amortized update time, we can deterministically and explicitly maintain a \emph{maximal} matching $M_{\res} \subseteq E_{\res}$ in $G_{\res}$. 
\end{lem}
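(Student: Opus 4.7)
The plan is to follow the same blueprint as the overview proof of \cref{overview:lem:maximal:general}, but carefully instantiated with the finer vertex classification of \cref{sec:classify} (in particular using $V_{\alo}$ in addition to $V_{\lo}$, $V_{\med}$, $V_{\hi}$). At the start of each phase I will compute an arbitrary maximal matching $M_{\res}$ of $G_{\res}$ by a single greedy scan of $E_{\res}$. Since part (\ref{new:enu:low}) of \cref{new:prop:fact} forces every edge of $G_{\res}$ with both endpoints in $V_{\lo}\cup V_{\med}$ to lie in $H_{\init}$, and the remaining edges of $G_{\res}$ are incident on a vertex in $V_{\dmg}\cup V_{\hi}$, the initial size of $E_{\res}$ is $O(nB)$. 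The initialization therefore costs $O(nB)$ time, which amortizes to $O(B/\delta)$ across the $\delta n$ updates of a phase. Throughout the phase I will additionally maintain, for each $v\in V_{\dmg}$, the balanced search tree $\F_{\res}(v)=\{u\in V_{\res}\setminus V(M_{\res}):(u,v)\in E_{\res}\}$; by \cref{cor:damaged} a vertex joins $V_{\dmg}$ at most once per phase, and each such transition costs $\tilde O(n)$ to build $\F_{\res}(v)$ from scratch, for a total of $\tilde O(n^{2}/B)$ per phase and an amortized $\tilde O(n/(\delta B))$.

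The next step is to verify that \cref{new:cond:base:recourse:1} of \cref{new:lem:base} translates into $O(1)$ node-updates in $G_{\res}=G[V\setminus V(M_{\base})]$ per edge-update in $G$. Hence the only thing that needs efficient implementation is the following primitive: given a vertex $v$ that newly enters $V_{\res}$ (or whose match in $M_{\res}$ just disappeared), either locate a free neighbor of $v$ in $G_{\res}$ or certify none exists. Once a free neighbor is found, inserting the new edge into $M_{\res}$ requires updating every $\F_{\res}(w)$ with $w\in V_{\dmg}$, which takes $\tilde O(|V_{\dmg}|)=\tilde O(n/B)$ time.

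I will then implement the primitive by a case analysis on the type of $v$. \textbf{Case 1: $v\in V_{\dmg}$.} I simply query $\F_{\res}(v)$, so the cost is $\tilde O(1)$. \textbf{Case 2: $v\in V_{\lo}$.} By part (\ref{new:enu:low}) of \cref{new:prop:fact}, every low-low or low-medium neighbor of $v$ in $G_{\res}$ appears in $H_{\init}$, so the number of such neighbors is at most $B$; the only remaining neighbors in $V_{\res}$ lie in $V_{\dmg}$, of which there are $O(n/B)$. I can therefore scan the neighborhood in $O(B+n/B)$ time. \textbf{Case 3: $v\in V_{\med}\cup(V_{\hi}\setminus V_{\safe})$.} Here I use the same EDCS argument: every neighbor of $v$ in $V_{\res}$ that is \emph{not} medium is either in $V_{\dmg}$ (of which there are $O(n/B)$) or in $V_{\lo}\cup V_{\alo}$, and every low/almost-low neighbor of $v$ is in $H_{\init}$ by part (\ref{new:enu:high})/(\ref{new:enu:low}) of \cref{new:prop:fact}, contributing at most $B$ candidates. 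For the remaining medium-medium candidates I maintain the explicit list $V_{\res}\cap V_{\med}$ as a doubly linked list and scan all of it; by \cref{new:cond:base:match} this list has size at most $22\delta n$. The total scan cost is thus $O(\delta n + B + n/B)$.

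Summing the contributions gives an amortized overhead of $O(B/\delta)+\tilde O(n/(\delta B))+\tilde O(\delta n + B + n/B)=\tilde O(\delta n + B/\delta + n/(\delta B))$, matching the bound claimed in the lemma. The main obstacle I expect is purely bookkeeping: ensuring that the linked list $V_{\res}\cap V_{\med}$ and every $\F_{\res}(v)$ are consistently updated under the (at most $O(1)$) node-updates triggered by each edge-update, and that the transition from $V_{\safe}$ to $V_{\dmg}$ (which can itself remove $v$ from $V(M_{\base})$ and thereby insert $v$ into $V_{\res}$) is handled in the correct order with respect to initializing $\F_{\res}(v)$. Once this bookkeeping is pinned down, the correctness (maximality of $M_{\res}$) follows directly from the observation that after every update the subroutine has been invoked on every vertex whose $M_{\res}$-status could have changed.
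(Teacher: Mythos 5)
Your proposal follows essentially the same blueprint as the paper's proof (initialize a maximal matching by scan, maintain $\F_{\res}(v)$ for damaged vertices, and handle node-updates by a case analysis on the vertex type), but two of your intermediate degree/size bounds are incorrect as stated.

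First, your claim that $|E_{\res}|=O(nB)$ at the start of a phase misapplies part~(\ref{new:enu:low}) of \cref{new:prop:fact}: that proposition covers edges with one low endpoint and the other low or medium, but it says \emph{nothing} about medium-medium edges, which can have EDCS degree-sum up to $(1+2\delta-2\epsilon)B > (1-\epsilon)B$ and so need not belong to $H_{\init}$. At the start of a phase there can be up to $O(\delta n)$ medium nodes in $V_{\res}$, and all $O((\delta n)^2)$ edges among them can lie in $E_{\res}$; the correct bound (the paper's \cref{new:cor:residual:2}) is $|E_{\res}|=O(\delta^2 n^2 + nB)$, which amortizes to $O(\delta n + B/\delta)$, not $O(B/\delta)$. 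With the paper's deterministic parameters $\delta=n^{-1/9}$, $B=n^{2/9}$, the term $\delta^2 n^2$ strictly dominates $nB$, so the discrepancy is real. Second, in Case~2 (and implicitly in Case~3) you invoke \cref{new:prop:fact}(\ref{new:enu:low}) to bound the number of low/medium neighbors by $B$, but this argument only applies to edges in $E_{\init}$; in the fully dynamic setting there can be up to $\delta n$ additional edges in $E\setminus E_{\init}$ inserted during the phase, and for these the proposition gives no information. The paper handles this by adding an explicit $\delta n$ term to $\deg_{G_{\res}}(v)$ for every $v\in V_{\res}\setminus V_{\dmg}$ in \cref{new:cor:residual:1}. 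Neither error changes the final amortized bound $\tilde O(\delta n + B/\delta + n/(\delta B))$ — the missing $\delta n$ contributions appear elsewhere in your accounting — but as written the intermediate bounds would not survive scrutiny. (Also, as a minor nit, your Case~1 ($v\in V_{\dmg}$) and Case~3 ($v\in V_{\med}\cup (V_{\hi}\setminus V_{\safe})=V_{\med}\cup V_{\dmg}$) overlap; the intent is presumably Case~3 is $v\in V_{\med}$.)
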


We devote the rest of \cref{sec:residual} to the proof of \cref{new:lem:maximal}. We start by summarizing a few key properties of the adjunct graph $G_{\res}$ in the claim below. 

\begin{claim}
\label{new:cl:residual}
The adjunct graph $G_{\res}$ satisfies the following properties at all times.
\begin{enumerate}
\item \label{enu:res:0} $V_{\safe} \cap V_{\res} = \emptyset$.
\item \label{enu:res:1} $V_{\dmg} \subseteq V_{\res}$ and $\left|V_{\dmg}\right| = O\left(\frac{n}{B}\right)$. Moreover, $V_{\dmg} = \emptyset$ at the start of a phase.
\item \label{enu:res:2} $\left| V_{\med} \cap V_{\res} \right| \leq 22 \delta n$.
\end{enumerate}
\end{claim}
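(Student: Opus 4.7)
The plan is to observe that all three parts are essentially direct consequences of results already established in the paper, chiefly \cref{new:cond:base:match} (which is guaranteed to hold by \cref{new:lem:base}) together with \cref{cor:damaged}, via the definition $V_{\res} := V \setminus V(M_{\base})$.

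For part~(\ref{enu:res:0}), I would invoke \cref{new:cond:base:match} which asserts that every safe high node is matched under $M_{\base}$, i.e.\ $V_{\safe} \subseteq V(M_{\base})$. Combined with $V_{\res} = V \setminus V(M_{\base})$, this immediately gives $V_{\safe} \cap V_{\res} = \emptyset$.

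For part~(\ref{enu:res:1}), the inclusion $V_{\dmg} \subseteq V_{\res}$ follows from the second clause of \cref{new:cond:base:match}, which states that every damaged node is free under $M_{\base}$, i.e.\ $V_{\dmg} \cap V(M_{\base}) = \emptyset$. The size bound $|V_{\dmg}| = O(n/B)$ and the initial emptiness $V_{\dmg} = \emptyset$ at the start of a phase are then restatements of \cref{cor:damaged}.

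For part~(\ref{enu:res:2}), note that $V_{\med} \cap V_{\res} = V_{\med} \setminus V(M_{\base})$ is precisely the set of medium nodes that are free under $M_{\base}$, which is bounded by $22 \delta n$ by the last sentence of \cref{new:cond:base:match}. Since no step requires any new idea beyond unwinding definitions and citing prior results, I anticipate no real obstacle; the brief proof is simply a matter of assembling these three citations.
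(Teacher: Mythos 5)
Your proof is correct and takes essentially the same approach as the paper. The only cosmetic difference is that for parts~(\ref{enu:res:1}) and~(\ref{enu:res:2}) the paper re-derives the facts from the algorithm's behavior and \cref{cl:init:base} rather than citing \cref{new:cond:base:match} directly, but since \cref{new:lem:base} already guarantees that $M_{\base}$ satisfies \cref{new:cond:base:match}, your more direct route is equally valid and arguably cleaner.
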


\begin{proof} 
{\bf Part-(\ref{enu:res:0})} All nodes of $V_\safe$ are matched by $M_\base$ (Condition~\ref{new:cond:base:match}) hence no nodes of $V_\safe$ are in $V_\res = V \setminus V(M_\base)$.

{\bf Part-(\ref{enu:res:1})} $V_\dmg$ is initialized as empty at the start of each phase and any node which is moved to $V_\dmg$ from $V_\safe$ has its matching $M_\base$ edge removed hence $V_\dmg \subseteq V_\res$. $|V_\dmg|  = O\left(\frac{n}{B}\right)$ at all times due to Corollary~\ref{cor:damaged}.

{\bf Part-(\ref{enu:res:2})} At initialization by Claim~\ref{cl:init:base} at most $18 n \delta$ vertices of $V_\med$ are unmatched by $M_\base$. As phase consists of $\delta n$ updates and each update may unmatch at most two nodes of $V_\med$ in $M_\base$ we must have that $|V_\med \cap V_\res| = |V_\med \setminus V(M_\base)| \leq 22 \delta n$.

\end{proof}

\begin{corollary}
\label{new:cor:residual:1}
For all nodes $v \in V_{\res} \setminus V_{\dmg}$, we have $\deg_{G_{\res}}(v) = O\left(\delta n + B + \frac{n}{B}\right)$ at all times.
\end{corollary}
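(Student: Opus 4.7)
The plan is to bound $\deg_{G_\res}(v)$ by classifying its neighbors in $G_\res$ according to the node-types from \cref{sec:classify}. First I would observe that the hypothesis $v\in V_\res\setminus V_\dmg$, combined with $V_\safe\cap V_\res=\emptyset$ (\cref{new:cl:residual}(\ref{enu:res:0})) and the fact that $V_\hi=V_\safe\cup V_\dmg$, forces $v\in V_\lo\cup V_\med$. By the very same reasoning (applied to $u\in V_\res$), every neighbor $u$ of $v$ in $G_\res$ lies in $V_\lo\cup V_\med\cup V_\dmg$. This gives a three-way partition of $v$'s neighbors that I would bound piece by piece.

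Next, the two ``easy'' classes are quickly dispatched by the properties already proved. Neighbors of $v$ in $V_\dmg$ contribute at most $|V_\dmg|=O(n/B)$ by \cref{new:cl:residual}(\ref{enu:res:1}), and neighbors of $v$ in $V_\med\cap V_\res$ contribute at most $22\delta n$ by \cref{new:cl:residual}(\ref{enu:res:2}). These two bounds account for the ``$n/B$'' and ``$\delta n$'' terms of the claimed estimate.

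The main step is to bound the number of neighbors of $v$ that lie in $V_\lo$, which will produce the ``$B$'' term. I would split such edges further depending on whether they were already present in $E_\init$ or were inserted during the current phase. Since a phase consists of only $\delta n$ updates, at most $\delta n$ edges are inserted during it, so at most $\delta n$ low neighbors of $v$ come from $E\setminus E_\init$. For a low neighbor $u$ of $v$ with $(u,v)\in E_\init$, \cref{new:prop:fact}(\ref{new:enu:low}) applies uniformly in both subcases $v\in V_\lo$ (a low--low edge) and $v\in V_\med$ (a low--medium edge), forcing $(u,v)\in E(H_\init)$; since $H_\init$ is a $(B,(1-\epsilon)B)$-EDCS and hence has maximum degree at most $B$ (\cref{defn:edcs}), this yields at most $B$ such neighbors.

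Summing the three contributions gives $\deg_{G_\res}(v)\le O(n/B)+22\delta n+\delta n+B=O(\delta n+B+n/B)$, as claimed. The only mildly subtle point is verifying that \cref{new:prop:fact}(\ref{new:enu:low}) covers both subcases $v\in V_\lo$ and $v\in V_\med$ uniformly; this is immediate because the hypothesis of that proposition only requires one endpoint to be low and allows the other to be either low or medium. I do not anticipate any real obstacle beyond this bookkeeping, since all three size bounds have already been established earlier in the section.
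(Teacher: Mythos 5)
Your proof is correct and follows essentially the same strategy as the paper: classify the neighbors of $v$ in $G_\res$ by node type, then bound each class using the same ingredients (the vertex-count bounds from \cref{new:cl:residual}, the "low--low/low--medium edges are in $H_\init$" implication from \cref{new:prop:fact}(\ref{new:enu:low}), and the max-degree-$B$ bound on the EDCS). The only cosmetic difference is that the paper first peels off all of $E\setminus E_\init$ (at most $\delta n$ edges) before partitioning the remaining $E_\init$-edges into $Z_\lo, Z_\med, Z_\hi$, whereas you apply the $E_\init$/new-edge split only to the low-neighbor case and bound the medium and damaged classes directly by set cardinality---both are valid and the conclusion is identical.
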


\begin{proof}
Consider any node $v \in V_{\res} \setminus V_{\dmg}$. By part-(\ref{enu:res:0}) of \cref{new:cl:residual}, we have $v \notin V_{\safe}$, and hence $v \notin V_{\hi}$ (see \cref{def:safe}). As the node-set $V$ is partitioned into the subsets $V_{\lo}, V_{\med}, V_{\hi}$ (see \cref{sec:classify}), it follows that $v \in V_{\lo} \cup V_{\med}$. 

Since each phase lasts for $\delta n$ updates in $G$, we infer that $|E \setminus E_{\init}| \leq \delta n$ (see \cref{sec:classify}). Thus, the node $v$ can have at most $\delta n$ incident edges in $G_{\res}$ that are part of $E \setminus E_{\init}$. Henceforth, we focus on bounding the number of incident edges on $v$ in $G_{res}$ that are part of $E \cap E_{\init}$. We start by partitioning this concerned set of edges into three subsets, depending on their other endpoints: 
\begin{eqnarray*}
Z_{\lo} & := & \{ (u, v) \in E_{\res} \cap E_{\init} : u \in V_{\lo} \cap V_{\res} \}, \\
Z_{\med} & := & \{ (u, v) \in E_{\res} \cap E_{\init} : u \in V_{\med} \cap V_{\res} \}, \\
Z_{\hi} & := & \{ (u, v) \in E_{\res} \cap E_{\init} : u \in V_{\hi} \cap V_{\res} \}.
\end{eqnarray*}
By part-(\ref{new:enu:low}) of \cref{new:prop:fact}, we have $Z_{\lo} \subseteq E(H_{\init})$. Since $H_{\init}$ is a $(B, (1-\epsilon)B)$-EDCS of $G_{\init}$, it has maximum degree $B$ (see \cref{defn:edcs}), and hence we get $|Z_{\lo}| \leq B$. Next, part-(\ref{enu:res:2}) of \cref{new:cl:residual} implies that $|Z_{\med}| \leq 22 \delta n = O(\delta n)$. Finally, since $V = V_{\safe} \cup V_{\dmg}$ (see \cref{def:safe}), by part-(\ref{enu:res:0}) and part-(\ref{enu:res:1}), we get $|Z_{\hi}| = O\left(\frac{n}{B}\right)$. 

From the preceding discussion, it follows that $\deg_{G_{\res}}(v) = O\left(\delta n + B + \frac{n}{B}\right)$.
\end{proof}

\begin{corollary}
\label{new:cor:residual:2}
At the start of a phase, we have $|E_{\res}| = O(\delta n \cdot \delta n + B n)$.
\end{corollary}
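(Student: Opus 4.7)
The plan is to partition the edges of $E_{\res}$ at the start of a phase according to the node-categories of their endpoints, and bound each part separately. The key simplification at the start of a phase is that $V_{\res}$ contains no high vertices at all, so every edge of $E_{\res}$ has both endpoints in $V_{\lo} \cup V_{\med}$.

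First I would argue that at the start of a phase, $V_{\res} \subseteq V_{\lo} \cup V_{\med}$, with $|V_{\med} \cap V_{\res}| \leq 22\delta n$. Indeed, by parts (\ref{enu:res:0}) and (\ref{enu:res:1}) of Claim~\ref{new:cl:residual}, we have $V_{\safe} \cap V_{\res} = \emptyset$ and $V_{\dmg} = \emptyset$ at the start of a phase. Since $V_{\hi}$ is partitioned into $V_{\safe} \cup V_{\dmg}$ (Definition~\ref{def:safe}), this gives $V_{\hi} \cap V_{\res} = \emptyset$ at the start of a phase. Combined with part~(\ref{enu:res:2}) of the claim, this yields the desired containment. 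Also, at the start of a phase $E = E_{\init}$ by the definition of $E_{\init}$, so $E_{\res} \subseteq E_{\init}$.

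Next I would split $E_{\res}$ into two parts depending on whether at least one endpoint is in $V_{\lo}$. For any edge $(u,v) \in E_{\res}$ with at least one endpoint in $V_{\lo}$, the other endpoint lies in $V_{\lo} \cup V_{\med}$ (since $V_{\res} \subseteq V_{\lo} \cup V_{\med}$), so by part~(\ref{new:enu:low}) of Proposition~\ref{new:prop:fact} we must have $(u,v) \in E(H_{\init})$. Because $H_{\init}$ is a $(B,(1-\epsilon)B)$-EDCS of $G_{\init}$, its maximum degree is at most $B$ (Definition~\ref{defn:edcs}), and hence $|E(H_{\init})| \leq nB/2 = O(nB)$, bounding this first part.

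For the second part, any edge $(u,v) \in E_{\res}$ with no endpoint in $V_{\lo}$ has both endpoints in $V_{\med} \cap V_{\res}$; the number of such edges is at most $\binom{|V_{\med} \cap V_{\res}|}{2} = O((\delta n)^2)$. Summing the two contributions gives $|E_{\res}| = O(Bn + (\delta n)^2) = O(\delta n \cdot \delta n + Bn)$, as desired. The argument is essentially routine once the initial-phase simplifications are noted; the only subtle point is recognizing that the absence of high vertices from $V_{\res}$ at the start of a phase eliminates the problematic case of edges incident to high vertices (which need not lie in $H_{\init}$ and could otherwise be numerous).
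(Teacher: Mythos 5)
Your proof is correct and follows essentially the same route as the paper: both use the observation that $V_{\hi} \cap V_{\res} = \emptyset$ at the start of a phase, split $E_{\res}$ into edges incident to a low node (bounded via Proposition~\ref{new:prop:fact}(\ref{new:enu:low}) and the degree bound on $H_{\init}$) and edges between two medium nodes (bounded via part~(\ref{enu:res:2}) of Claim~\ref{new:cl:residual}). The only superficial difference is that you bound the first part globally by $|E(H_{\init})|$, while the paper bounds the degree of each individual low node; both yield $O(nB)$.
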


\begin{proof}
At the start of a phase, we have $V_{\res} \cap V_{\hi} = (V_{\res} \cap V_{\safe}) \cup (V_{\res} \cap V_{\dmg}) = \emptyset$, by \cref{def:safe} and part-(\ref{enu:res:0}) and part-(\ref{enu:res:1}) of \cref{new:cl:residual}. Accordingly, at this point in time, the node-set $V_{\res}$ is partitioned into two subsets: $V_{\res} \cap V_{\lo}$ and $V_{\res} \cap V_{\med}$. 

By part-(\ref{enu:res:2}) of \cref{new:cl:residual}, there are at most $O(\delta n \cdot \delta n)$ edges in $E_{\res}$ whose both endpoints are in $V_{\res} \cap V_{\med}$. 

Next, consider any node $u \in V_{\res} \cap V_{\lo}$. Since $V_{\res} \cap V_{\hi} = \emptyset$ at this point in time, every edge in $E_{\res}$ that is incident on $u$ has its other endpoint in $V_{\lo} \cup V_{\med}$ (see \cref{sec:classify}). Accordingly, by part-(\ref{new:enu:low}) of \cref{new:prop:fact}, every edge in $E_{\res} \subseteq E_{\init}$  that is incident on $u$ is part of  $H_{\init}$. Since $H_{\init}$ is a $(B, (1-\epsilon)B)$-EDCS of $G_{\init}$, it has maximum degree $B$ (see \cref{defn:edcs}). Thus, the node $u \in V_{\res} \cup V_{\hi}$ has degree at most $B$ in $G_{\res}$, at this point in time. We conclude that there are at most $O(nB)$ edges in $E_{\res}$ that have some endpoint in $V_{\res} \cap V_{\lo}$.

From the preceding discussion, it follows that $|E_{\res}| = O(\delta n \cdot \delta n + B n)$.
\end{proof}

\subsubsection{Proof of \cref{new:lem:maximal}.}

At the start of a phase, we initialize $M_{\res}$ to be any arbitrary maximal matching in $G_{\res}$. By \cref{new:cor:residual:2}, this takes $O(\delta n \cdot \delta n + B n)$ time. Since the phase lasts for $\delta n$ updates in $G$, this step incurs an amortized update time of:
\begin{equation}
\label{eq:update:time:0}
O\left(\frac{\delta n \cdot \delta n + B n}{\delta n}\right) = O\left(\delta n + \frac{B}{\delta}\right).
\end{equation}

Subsequently, to handle the updates during a phase, we  use the following auxiliary data structures: Each node $v \in V_{\dmg}$ maintains the set $\F_{\res}(v) := \left\{ u \in V_{\res} \setminus V(M_{\res}) : (u, v) \in E_{\res}\right\}$ of its free neighbors (in $G_{\res}$) under the matching $M_{\res}$.\footnote{Maximality of $M_{\res}$ implies that $\F_{\res}(v) = \emptyset$ for all $v \in V_{\dmg} \setminus  V(M_{\res})$.} Whenever a node $v$ moves from $V_{\safe}$ to $V_{\dmg}$, we spend $\tilde{O}(n)$ time to initialize the set $\F_{\res}(v)$ as a balanced search tree. Within a phase, a  node can move from $V_{\safe}$ to $V_{\dmg}$ at most once (see \cref{def:safe} and \cref{cor:damaged}). Thus, by part-(\ref{enu:res:1}) of \cref{new:cl:residual}, these initializations take $\tilde{O}\left( \frac{n}{B} \cdot n \right) = \tilde{O}\left(\frac{n^2}{B} \right)$ total time during a phase. As each phase lasts for $\delta n$ updates in $G$, this incurs an amortized update time of $\tilde{O}\left( \frac{n^2}{B \cdot \delta n}  \right) = \tilde{O}\left( \frac{n}{\delta B}\right)$.

Since $V_{\res} := V \setminus V(M_{\base})$, \cref{new:cond:base:recourse:1} and \cref{new:lem:base} imply that each edge update in $G$ leads to $O(1)$ node-updates in $G_{\res}$. Moreover, while maintaining the maximal matching  $M_{\res}$ in $G_{\res}$, a node-update  can be handled in $\tilde{O}\left(\delta n + B + \frac{n}{B} \right)$ worst-case  time, for the following reason. 

Suppose that we are trying to find a free neighbor  (if it exists) of some node $v \in V_{\res}$. This can be done in $\deg_{G_{\res}}(v) = O\left(\delta n + B + \frac{n}{B} \right)$ time if $v \in V_{\res} \setminus V_{\dmg}$ (see \cref{new:cor:residual:1}), and in $\tilde{O}(1)$ time if $v \in V_{\dmg}$ (using the set $\F_{\res}(v)$). Further, once we decide to match a node $u \in V_{\res}$ to one of its free neighbors, we can accordingly update all the relevant sets $\F_{\res}(v)$ in $\tilde{O}\left( \left| V_{\dmg}\right| \right) = \tilde{O}\left( \frac{n}{B} \right)$ time (see part-(\ref{enu:res:1}) of \cref{new:cl:residual}).

To summarize, the amortized update time we incur, while maintaining the matching $M_{\res}$ throughout the updates within a given phase, is at most:
\begin{equation}
\label{eq:update:time:2}
\tilde{O}\left(\frac{n}{\delta B} + \delta n + B + \frac{n}{B} \right) = \tilde{O}\left(\delta n + B + \frac{n}{\delta B} \right).
\end{equation}

\cref{new:lem:maximal} follows from~(\ref{eq:update:time:0}) and~(\ref{eq:update:time:2}).

\subsection{Wrapping Up: Proof of Theorem~\ref{thm:main}}
\label{sec:wrapup}

\cref{thm:main} follows from \cref{new:lem:base,new:lem:maximal}.

\medskip
To be a bit more specific, first recall that maintaining the EDCS $H$ incurs an update time of $O\left( \frac{n}{\epsilon B}\right) = O\left( \frac{n}{\delta B}\right)$, according to~(\ref{eq:delta}) and the first paragraph of \cref{sec:algo:framework}. Now, for the deterministic algorithm, part-(\ref{new:alg:det}) of \cref{new:lem:base} incurs an amortized update time of $\tilde{O}\left( B \cdot n^{1/2} \cdot \delta^{-3/2}\right)$.  Thus, by \cref{new:lem:maximal}, the overall amortized update time becomes
$$O\left(\frac{n}{\delta B}\right) + \tilde{O}\left( B \cdot n^{1/2} \cdot \delta^{-3/2}\right) + \tilde{O}\left(\delta n + \frac{B}{\delta} + \frac{n}{\delta B}\right)  = \tilde{O}(n^{8/9}), \text{ for } \delta = \frac{1}{n^{1/9}} \text{ and } B = n^{2/9}.$$

Finally, for the randomized algorithm, part-(\ref{new:alg:rand}) of \cref{new:lem:base} incurs an amortized update time of $\tilde{O}(B \delta^{-1} + \delta^{-3})$. Thus, by \cref{new:lem:maximal}, the overall amortized update time becomes
$$O\left(\frac{n}{\delta B}\right) + \tilde{O}\left(\frac{B}{\delta}+ \frac{1}{\delta^{3}}\right) + \tilde{O}\left(\delta n+ \frac{B}{\delta}+ \frac{n}{\delta B}\right) = \tilde{O}\left( n^{3/4} \right), \text{ for } \delta = \frac{1}{n^{1/4}} \text{ and } B = n^{1/2}.$$

\section{Open Problems}

Our novel approach for maintaining a maximal matching suggests several interesting questions.

\paragraph{The Right Answer.}

Fundamentally, how fast can we maintain a maximal matching against an adaptive adversary? Our approach inherently takes $\Omega(\min\{\sqrt{n},\Delta\})$ update time where $\Delta$ is the maximum degree. This lower bounds simply follows from how it handles edge insertions. During a phase insertions are handled greedily. After each update the algorithm iterates over all edges inserted during the phase incident on affected vertices. Assuming that the length of the phase is $\Omega(\sqrt{n})$ this must take at least $\Omega(\min(\sqrt{n}, \Delta))$ time. 

As at the start of the phase the algorithm needs to construct a maximal matching of the graph the initialization of a phase must take $\Omega(n)$ time. Hence, in order to ensure that the amortized cost of the initializations is $O(\sqrt{n})$ over a phase the phase length has to be $\Omega(\sqrt{n})$. 

Is there an algorithm against an adaptive adversary with $o(\min\{\sqrt{n},\Delta\})$ update time? Otherwise, is there a conditional lower bound refuting this hope, or even refuting polylogarithmic update time? The answer in either direction would be very exciting.

\paragraph{Better Decremental Perfect Matching.}

The algorithm from \Cref{lem:perfect matching} is the main bottleneck in our dynamic maximal matching algorithm. In the \emph{static} setting, we can deterministically find a perfect matching in an $m$-edge graph with $\gamma$-degree-gap with $\tilde{O}(m/\gamma)$ time, for example, using Dinic's flow algorithm. Can we deterministically match this running time in the decremental setting? Such an algorithm would immediately imply a deterministic dynamic maximal matching algorithm with $\tilde{O}(n^{4/5})$ update time via our framework.

We can also hope to improve our randomized update time further to $\tilde{O}(n^{2/3})$. Currently, our amortized time for initializing each phase is $\tilde{O}(\frac{n}{B\delta}+\delta n+\frac{B}{\delta}).$ But we can actually improve this to $\tilde{O}(\frac{n}{B\delta}+\delta n+B)$ via warm-starting. We omit this because it alone does not lead to any improvement and will complicate the algorithm further. But, suppose that, in our randomized algorithm, the time spent on $\mathrm{Augment(\cdot)}$ can be improved from $\tilde{O}(1/\gamma^{3})$ to $\tilde{O}(1/\gamma^{2})$. Combining this with warm-starting would immediately improve the final update time from
\[
\tilde{O}(\frac{n}{B\delta}+\delta n+\frac{B}{\delta}+\frac{1}{\delta^{3}})=\tilde{O}(n^{3/4})\text{ by setting }\delta=\frac{1}{n^{1/4}}\text{ and }B=n^{1/2}
\]
to 
\[
\tilde{O}(\frac{n}{B\delta}+\delta n+B+\frac{1}{\delta^{2}})=\tilde{O}(n^{2/3})\text{ by setting }\delta=\frac{1}{n^{1/3}}\text{ and }B=n^{2/3},
\]
where the improvement from $\frac{1}{\delta^{3}}$ to $\frac{1}{\delta^{2}}$ is from the assumption on $\mathrm{Augment(\cdot)}$.

\paragraph{Worst-case Update Time.}

Interestingly, it is open if there exists a dynamic maximal matching algorithm with $o(n)$ worst-case update time, even against an \emph{oblivious} adversary. Previous algorithms against an oblivious adversary inherently guarantee only amortized update time because they must spend a lot of time once the adversary ``deletes the sampled edge''. 

Our algorithm might be a starting point to resolve this open problem. Indeed, our approach exploits amortization in a much more superficial manner. Within each phase, our randomized implementation inherently runs in worst-case update time. The deterministic implementation relies on periodic recomputation of the 'high-low' matching data-structure. However, at the cost of a small blowup in update time these rebuilds can be avoided.

Thus, it remains to spread the work for preprocessing at the beginning of each phase; the task has been easily handled in many other dynamic problems. The idea is to start the pre-processing of the next phase during the operations of the current phase. Once the current phase comes to an end the algorithm switches to the output of the already initialized next phase. This only comes at the cost of a constant blowup in update time.

However, the switching process between the output of the current and next phase also has to be de-amortized. If the goal of the algorithm is to maintain a constant approximate matching this de-amortization process can be achieved with optimal update time trade-offs (see Solomon and Solomon \cite{SolomonS21}). 

Unfortunately, if our goal is to maintain a maximal matching the swtich between outputs becomes more challenging. A partial transformation of one maximal matching to an other one might leave some nodes with unmatched neighbors unmatched. Trivially, to resolve this we have to iterate over the neighborhood of these problematic nodes. However, here we are able to exploit the structure of our maximal matching. Its possible to ensure that none of these problematic nodes happens to be a high node. This in turn means that they may have a small number of potentially unmatched neighbors we need to iterate over to ensure maximallity.

Note that we do not claim that our algorithm can trivially be improved to have worst-case update time. The above discussion ignores multiple situations which can occur due to updates to the current and next maximal matching maintained by the algorithm during the transformation process which appear technically complex to handle.

\paragraph{Dynamic Symmetry Breaking.}

Within the distributed literature symmetry breaking problems have received significant attention (for a detailed list see \cite{BarenboimEPS16}). In the context of distributed computing the process of solving these problems can be described as follows: initially all nodes are assigned the same state symmetrically. In every round nodes have to decide which sate they take in the next round which often has to be different for neighboring nodes. Hence, the process can be described as a task of symmetry breaking.

Interestingly, various problems studied by the symmetry breaking distribution literature have received research attention in the dynamic model. Namely, $(\Delta+1)$-vertex coloring, maximal matching, and maximal independent set are all known to admit efficient distributed algorithms as well as polylogarithmic update time dynamic algorithms \cite{HenzingerP22, solomon2016fully, chechik2019fully} against an oblivious adversary.

While there are known efficient deterministic distributed algorithms for these problems even against an adaptive adversary only maximal matching and $(\Delta+1)$-vertex coloring is known to admit a sub-linear update time dynamic algorithm. This proposes the question weather further symmetry breaking problems admit efficient adaptive or deterministic dynamic implementations and weather there is a deeper connection between distributed and dynamic algorithms in the context of symmetry breaking problems.




\bibliography{ref_abb,ref}

\appendix



\section{Convergence of Random Walks in Eulerian Expanders}

\label{sec:proof random walk}

In this section, we prove \Cref{thm:random walk expander} for completeness. We follow the notations from \cite{reingold2006pseudorandom}.
We say that a matrix $M\in[0,1]^{V\times V}$ is  a \emph{Markov chain with vertex set $V$ }if $\sum_{v\in V}M(v,u)=1$ for every vertex $u$. Given a directed graph $G$, the \emph{random walk matrix} $M_{G}$ of $G$ is the matrix where, for each entry $(v,u)$, 
\[
M_{G}(v,u)=\frac{|E_{G}(u,v)|}{\deg_{G}^{{\rm out}}(u)}.
\]
Observe that $M_{G}$ is a Markov chain.

We will only consider Eulerian graphs where the stationary distribution has a closed form.
\begin{fact}
[Example 1 of \cite{aksoy2016extreme}]For any Eulerian graph $G$, the \emph{stationary distribution} $\pi\in[0,1]^{V}$ of the random walk is such that, for each vertex $v\in V$, $\pi(v)=\deg_{G}(v)/\vol_{G}(V).$
\end{fact}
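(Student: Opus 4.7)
The plan is to directly verify that the proposed vector $\pi$ with $\pi(v) = \deg_G(v)/\vol_G(V)$ is (i) a probability distribution on $V$ and (ii) a fixed point of the random walk matrix $M_G$, which by definition makes it a stationary distribution. The key observation, which is where the Eulerian hypothesis is used, is that in an Eulerian graph the common value $\deg_G(v) = \deg_G^{\text{out}}(v) = \deg_G^{\text{in}}(v)$ is well-defined, so $\pi(v)$ is unambiguous.

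For step (i), I would compute $\sum_{v \in V} \pi(v) = \sum_{v \in V} \deg_G(v)/\vol_G(V) = \vol_G(V)/\vol_G(V) = 1$, using the definition $\vol_G(V) = \sum_{v} \deg_G(v)$. The entries are clearly nonnegative, so $\pi$ is a valid probability distribution on $V$.

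For step (ii), I would verify the stationarity equation $(M_G \pi)(v) = \pi(v)$ for every $v$. Expanding using the paper's definition of $M_G$,
\[
(M_G \pi)(v) \;=\; \sum_{u \in V} M_G(v,u)\, \pi(u) \;=\; \sum_{u \in V} \frac{|E_G(u,v)|}{\deg_G^{\text{out}}(u)} \cdot \frac{\deg_G(u)}{\vol_G(V)}.
\]
Because $G$ is Eulerian, $\deg_G^{\text{out}}(u) = \deg_G(u)$, so the two degree factors cancel, leaving
\[
(M_G \pi)(v) \;=\; \frac{1}{\vol_G(V)}\sum_{u \in V} |E_G(u,v)| \;=\; \frac{\deg_G^{\text{in}}(v)}{\vol_G(V)} \;=\; \frac{\deg_G(v)}{\vol_G(V)} \;=\; \pi(v),
\]
where the last equalities again use that $G$ is Eulerian so $\deg_G^{\text{in}}(v) = \deg_G(v)$.

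There is no real obstacle here; the only thing to be careful about is that the paper's matrix convention puts the transition from $u$ to $v$ in entry $(v,u)$, so the stationarity equation is $M_G \pi = \pi$ rather than $\pi M_G = \pi$, and the cancellation in step (ii) works precisely because the degree appearing in $\pi(u)$ matches the out-degree appearing in the denominator of $M_G(v,u)$ — which is exactly what the Eulerian assumption guarantees. This is why the fact fails for general directed graphs, where the stationary distribution has no such simple closed form.
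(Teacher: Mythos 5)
Your proof is correct. The paper does not supply its own argument for this fact---it simply cites Example~1 of the referenced work---so there is no internal proof to compare against. Your direct verification (checking that $\pi$ sums to one and that $M_G\pi=\pi$, with the Eulerian condition $\deg_G^{\text{out}}(u)=\deg_G^{\text{in}}(u)=\deg_G(u)$ enabling the cancellation and the identification $\sum_u|E_G(u,v)|=\deg_G^{\text{in}}(v)=\deg_G(v)$) is the standard and natural argument, and you have correctly tracked the paper's convention that $M_G(v,u)$ is the $u\to v$ transition probability, so stationarity reads $M_G\pi=\pi$.

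One small caveat worth noting: your argument establishes that $\pi$ is \emph{a} stationary distribution; uniqueness would additionally require irreducibility (connectivity of the underlying Eulerian multigraph). This matches what is actually needed downstream---the paper invokes the fact only for the graph $G_{\Eu}'$, which has positive conductance and is therefore connected, and in any case Lemma~2.2 of the cited Reingold et al.\ reference only requires a stationary distribution, not uniqueness---so the gap is cosmetic, but it is worth being aware that the phrase ``the stationary distribution'' implicitly assumes connectivity.
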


For any $S\subseteq V$, let $\pi(S)=\sum_{u\in S}\pi(u)$. The conductance of a Markov chain is defined as follows. 
\begin{defn}
[Definition 2.4 of \cite{reingold2006pseudorandom}]Let $M$ be a Markov chain with $n$ vertices and $\pi$ a stationary distribution. The conductance of $M$ with respect to $\pi$ is defined to be 
\[
h_{\pi}(M)=\min_{A:0<\pi(A)\le1/2}\frac{\sum_{u\in A,v\notin A}\pi(u)M(v,u)}{\pi(A)}
\]
\end{defn}

For an Eulerian graph $G$, observe that the conductance of $M_{G}$ defined above coincides with the conductance of $G$ defined in \Cref{eq:conductance}. That is, we have
\begin{equation}
\Phi(G)=h_{\pi}(M_{G}).\label{eq:same conductance}
\end{equation}

We will omit the precise definition of the \emph{spectral expansion} $\lambda_{\pi}(M)$ of a Markov chain $M$. See \cite{reingold2006pseudorandom} for details. We only needs the fact that $\lambda_{\pi}(M)$ dictates the rate of convergence of the distribution of random walks to the stationary distribution, formalized as follows. For any vector $x\in\mathbb{R}^{V}$, let 
\[
\|x\|_{\pi}=\sqrt{\sum_{v\in{\rm supp}(\pi)}\frac{x(v)^{2}}{\pi(v)}.}
\]
\begin{lem}
[Lemma 2.2 of \cite{reingold2006pseudorandom}]\label{lem:convergence}Let $\pi$ be a stationary distribution of Markov chain $M$ on $V$, and let $\alpha$ be any distribution on $V$ such that ${\rm supp}(\alpha)\subseteq{\rm supp}(\pi)$. Then 
\[
\|M^{k}\alpha-\pi\|_{\pi}\le\lambda_{\pi}(M)^{k}\|\alpha-\pi\|_{\pi}.
\]
\end{lem}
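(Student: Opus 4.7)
\medskip

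\noindent\textbf{Proof Proposal.} The plan is to reduce the claim to a single-step contraction, iterated $k$ times. First I would set $y := \alpha - \pi$ and observe two basic facts: (i) since $\alpha$ and $\pi$ are both probability distributions, $\sum_{v \in V} y(v) = 0$; and (ii) since $\pi$ is stationary, $M\pi = \pi$, so for every $k \ge 0$ we have
\[
M^k \alpha - \pi \;=\; M^k(\alpha - \pi) + M^k \pi - \pi \;=\; M^k y.
\]
Thus the inequality reduces to showing $\|M^k y\|_{\pi} \le \lambda_{\pi}(M)^k \, \|y\|_{\pi}$ whenever $y$ has zero coordinate sum and $\mathrm{supp}(y) \subseteq \mathrm{supp}(\pi)$.

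Next I would set up the appropriate geometry. The norm $\|\cdot\|_{\pi}$ is induced by the inner product $\langle x, z\rangle_{\pi} := \sum_{v} x(v) z(v)/\pi(v)$, under which the stationary distribution $\pi$ itself plays the role of the ``trivial'' direction: a direct calculation gives $\langle y, \pi\rangle_{\pi} = \sum_v y(v) = 0$, so $y$ lies in the hyperplane orthogonal to $\pi$ with respect to $\langle\cdot,\cdot\rangle_{\pi}$. The spectral expansion $\lambda_{\pi}(M)$ is defined precisely to be the operator norm of $M$ restricted to this hyperplane, i.e.\ $\lambda_{\pi}(M) = \sup \{ \|Mx\|_\pi / \|x\|_\pi : \langle x, \pi\rangle_\pi = 0\}$. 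Hence the single-step inequality $\|My\|_\pi \le \lambda_\pi(M) \|y\|_\pi$ is immediate from the definition.

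To iterate I need that $My$ still lies in this same hyperplane. This follows from $\sum_v (My)(v) = \sum_u y(u) \sum_v M(v,u) = \sum_u y(u) = 0$, using that each column of the Markov chain matrix $M$ sums to one. Thus I can apply the single-step bound $k$ times in succession, yielding
\[
\|M^k y\|_{\pi} \;\le\; \lambda_{\pi}(M) \, \|M^{k-1} y\|_{\pi} \;\le\; \cdots \;\le\; \lambda_{\pi}(M)^k \, \|y\|_{\pi},
\]
which combined with the reduction above gives the lemma. The only nontrivial conceptual point is identifying the correct orthogonality condition; once the characterization of $\lambda_{\pi}(M)$ as a restricted operator norm is in hand, the argument is a routine induction and should present no real obstacle.
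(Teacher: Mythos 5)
The paper does not supply a proof of this lemma: it is cited verbatim as Lemma~2.2 of \cite{reingold2006pseudorandom}, and even the definition of the spectral expansion $\lambda_\pi(M)$ is deliberately omitted. So there is no in-paper argument to compare against; your proof must be judged on its own terms, and on those terms it is correct and is essentially the standard argument that appears in the cited reference.

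Two small remarks. First, you correctly reduce to a single-step contraction on the hyperplane $\{x : \langle x,\pi\rangle_\pi = 0\}$ and verify that $M$ preserves the zero-sum condition, but you do not verify that $M$ also preserves the condition $\mathrm{supp}(y)\subseteq\mathrm{supp}(\pi)$, which is needed for the $\pi$-norm (as defined in the paper, a sum only over $\mathrm{supp}(\pi)$) and the operator-norm bound to iterate cleanly. This does hold: $M\pi=\pi$ and nonnegativity force $M(v,u)=0$ whenever $\pi(v)=0$ and $\pi(u)>0$, so $\mathrm{supp}(My)\subseteq\mathrm{supp}(\pi)$; but it is worth one line. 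Second, your argument hinges on the identification
\[
\lambda_\pi(M)=\sup\bigl\{\,\|Mx\|_\pi/\|x\|_\pi : \langle x,\pi\rangle_\pi=0,\ x\neq 0\,\bigr\},
\]
which is not recorded anywhere in the present paper (it explicitly chooses to omit the definition). This identification does agree with the definition in \cite{reingold2006pseudorandom} (via the change of variables $\hat{x}=D_\pi^{-1/2}x$, which sends the $\pi$-inner-product to the Euclidean one and orthogonality to $\pi$ to orthogonality to $\sqrt{\pi}$), so the proof is sound, but the step is imported from outside rather than derived from anything available in this paper.
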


That is, the smaller $\lambda_{\pi}(M)$ is the faster the convergence. We can bound $\lambda_{\pi}(M)$ via conductance as follows.
\begin{lem}
[Lemma 2.5 of \cite{reingold2006pseudorandom}]\label{lem:spectral at most cond}Let $M$ be a Markov chain on $n$ vertices such that $M(u,u)\ge1/2$ for every $u$, and let $\pi$ be a stationary distribution of $M$. Then $\lambda_{\pi}(M)\le1-h_{\pi}(M)^{2}/2$.
\end{lem}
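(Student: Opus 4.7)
The plan is to prove this as a version of Cheeger's inequality for Markov chains, passing through a reversible symmetrization to handle the fact that $M$ is not assumed reversible.

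First, I would reduce to a reversible chain. Define the time-reversal $\tilde{M}$ of $M$ with respect to $\pi$ by $\tilde{M}(u,v) = \pi(v)M(v,u)/\pi(u)$, which is again a Markov chain with stationary distribution $\pi$. Let $S = (M + \tilde{M})/2$ be the additive reversibilization. Two facts will be needed: (i) the conductance is preserved, $h_\pi(S) = h_\pi(M)$, because the stationary edge-flow $\pi(u)M(u,v)$ across any cut equals the flow under $\tilde{M}$ in the reverse direction, so the numerator defining $h_\pi$ is unchanged; and (ii) the spectral expansion satisfies $\lambda_\pi(M) \leq \lambda_\pi(S)$, which follows from the variational characterization $\lambda_\pi(M) = \sup_{x \perp \pi} \|Mx\|_\pi / \|x\|_\pi$ together with the fact that $\langle Mx, x\rangle_\pi = \langle Sx, x\rangle_\pi$ and a standard norm comparison.

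Second, I would use the laziness assumption. Since $M(u,u) \geq 1/2$ for every $u$, the same inequality holds for $S$, so $S = \tfrac{1}{2}(I + S')$ for a stochastic matrix $S'$. As a self-adjoint operator on $\ell^2(\pi)$, this forces every eigenvalue of $S$ to lie in $[0,1]$, and hence $\lambda_\pi(S) = \lambda_2(S)$, the second largest eigenvalue. In particular $1 - \lambda_\pi(S)$ is precisely the spectral gap.

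Third, I would invoke the classical (``easy'') direction of Cheeger's inequality for reversible chains, namely $1 - \lambda_2(S) \geq h_\pi(S)^2 / 2$. The proof takes a second eigenfunction $f$, sorts vertices by $f$-value, and analyzes the sweep cuts $A_t = \{u : f(u) > t\}$. The main inequality is the Cauchy--Schwarz step
\[
\Bigl(\sum_{u,v} \pi(u) S(u,v) \bigl|f(u)^2 - f(v)^2\bigr|\Bigr)^2 \;\leq\; 2 \cdot \mathcal{E}(f,f) \cdot \sum_u \pi(u) f(u)^2,
\]
where $\mathcal{E}(f,f) = \tfrac{1}{2}\sum_{u,v}\pi(u)S(u,v)(f(u)-f(v))^2$ is the Dirichlet form. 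Combining with the conductance lower bound applied to each level set and averaging produces a threshold $t$ whose cut $A_t$ witnesses $h_\pi(S) \leq \sqrt{2(1-\lambda_2(S))}$, which rearranges to the claim. Chaining the three steps yields $\lambda_\pi(M) \leq \lambda_\pi(S) \leq 1 - h_\pi(M)^2/2$.

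The main obstacle I anticipate is the passage from non-reversible $M$ to reversible $S$: one has to verify both that conductance is preserved and that the inequality $\lambda_\pi(M) \leq \lambda_\pi(S)$ holds under Reingold's exact definition of spectral expansion, which requires being careful with the $\|\cdot\|_\pi$ norm and the role of the stationary distribution. Once reversibility is in hand, the remainder is the textbook sweep-cut Cheeger argument.
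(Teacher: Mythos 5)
The paper does not prove this lemma; it is cited verbatim as Lemma~2.5 of Reingold--Trevisan--Vadhan \cite{reingold2006pseudorandom} and used as a black box, so there is no in-paper proof to compare against. On its own terms, your overall plan (reversibilize, then run sweep-cut Cheeger on $S$) is the right shape, but step~2(ii) is false as stated. The claim $\lambda_\pi(M)\le\lambda_\pi(S)$ fails even for lazy chains: take the lazy directed $3$-cycle $M=(I+C)/2$, where $C$ is the cyclic shift on three states and $\pi$ is uniform. Then $\lambda_\pi(M)=1/2$ (e.g.\ $x=(1,-1,0)\perp\pi$ has $\|Mx\|/\|x\|=1/2$), while $\tilde M=M^{\mathsf T}$, $S=(2I+C+C^{-1})/4=\tfrac14 J+\tfrac14 I$, whose second-largest eigenvalue is $1/4$, so $\lambda_\pi(S)=1/4<\lambda_\pi(M)$. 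The identity $\langle Mx,x\rangle_\pi=\langle Sx,x\rangle_\pi$ alone cannot yield $\lambda_\pi(M)\le\lambda_\pi(S)$, because the numerical radius of a non-self-adjoint contraction can be strictly smaller than its operator norm, and $\lambda_\pi$ is (per Reingold's Lemma~2.2, which iterates it $k$ times) the operator norm on $\pi^\perp$.

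What laziness actually buys is the \emph{squared} relation. Writing $M=(I+M')/2$ with $M'$ stochastic and $\pi$-stationary (hence a contraction in $\|\cdot\|_\pi$), one gets $\|Mx\|_\pi^2=\tfrac14\bigl(\|x\|_\pi^2+2\langle x,M'x\rangle_\pi+\|M'x\|_\pi^2\bigr)\le\tfrac12\bigl(\|x\|_\pi^2+\langle x,M'x\rangle_\pi\bigr)=\langle x,Mx\rangle_\pi=\langle x,Sx\rangle_\pi$, hence $\lambda_\pi(M)^2\le\lambda_\pi(S)$ (the $3$-cycle above saturates this). Chaining that with the generic Cheeger bound $1-\lambda_\pi(S)\ge h_\pi(S)^2/2$ only gives $\lambda_\pi(M)\le\sqrt{1-h^2/2}\le 1-h^2/4$, which is off by a factor of two from the stated bound. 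To recover the constant you must use laziness a second time on the reversible side: writing $S=(I+S')/2$, you have $1-\lambda_\pi(S)=(1-\lambda_2(S'))/2$ and $h_\pi(S)=h_\pi(S')/2$, so the generic bound on $S'$ upgrades to $1-\lambda_\pi(S)\ge h_\pi(S)^2$ for lazy $S$, whence $\lambda_\pi(M)\le\sqrt{1-h_\pi(M)^2}\le 1-h_\pi(M)^2/2$. So your plan is salvageable, but you cannot avoid the squared inequality $\lambda_\pi(M)^2\le\lambda_\pi(S)$, and you need the lazy strengthening of Cheeger to land on the claimed constant.
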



With all lemmas above, we are now ready to prove \Cref{thm:random walk expander}.

\paragraph{Proof of \Cref{thm:random walk expander}.}
\begin{proof}
We are given an Eulerian $\phi$-expander $G=(V,E)$ with $n$ vertices where, for each vertex $v$, the number of self loops at $v$ is at least $\deg(v)/2$. Let $p$ denote the probability that a $k$-step random walk in $G$ for $k$ steps starting at vertex $v$ ends at vertex $t$. Our goal is show that $|p-\pi(t)|$ is small. 

Let $\alpha$ be the distribution over $V$ where $\alpha(u)=1$ if $u=v$ and $\alpha(u)=0$ for $u\neq v$. Observe that $M_{G}^{k}\alpha$ is precisely the distribution over $V$ after making $k$ random walk steps in $G$ starting from vertex $v$. So $p$ is precisely the $t$-th entry of $M_{G}^{k}\alpha$. By \Cref{lem:convergence}, we have that 
\[
\|M_{G}^{k}\alpha-\pi\|_{\pi}\le\lambda_{\pi}(M_{G})^{k}\|\alpha-\pi\|_{\pi}.
\]

Let $x\in\mathbb{R}^{V}$ be obtained from $M_{G}^{k}\alpha-\pi$ by restricting to only the $t$-th entry (i.e. $x(u)=0$ for all $u\neq t$). On the one hand, we have
\[
\frac{|p-\pi(t)|}{\sqrt{\pi(t)}}=\|x\|_{\pi}\le\|M_{G}^{k}\alpha-\pi\|_{\pi}
\]
one the other hand, we have

\[
\|\alpha-\pi\|_{\pi}\le\sqrt{1/\pi(v)}.
\]
So we conclude that
\[
|p-\pi(t)|\le(\lambda_{\pi}(M_{G})^{k}\sqrt{\frac{\pi(t)}{\pi(v)}}.
\]

To bound $\lambda_{\pi}(M_{G})$, observe that $M_{G}(u,u)\ge1/2$ for every $u$ because the number of self loops at $u$ is at least $\deg(u)/2$. So, by \Cref{lem:spectral at most cond} and \Cref{eq:same conductance}, we have
\[
\lambda_{\pi}(M_{G})\le1-h_{\pi}(M_{G})^{2}/2=1-\Phi(G)^{2}/2\le1-\phi^{2}/2.
\]
Now, we conclude that
\[
\left|p-\frac{\deg_{G}(t)}{\vol_{G}(V)}\right|\le(1-\phi^{2}/2)^{k}\sqrt{\frac{\deg_{G}(t)}{\deg_{G}(v)}}.
\]
\end{proof}


\section{Proof of \Cref{lem:match almost max degree}}
\label{app:proof:lem:match almost max degree}

We will use the following tools.

\begin{thm}[\cite{assadi2024faster}] \label{thm:fair-matching} For graph $G = (V,E)$ with maximum degree $\Delta$ define $V_\Delta = \{v \in V \mid \deg_G(v) = \Delta\}$. Given both adjacency matrix and list query access to $G$ there exists a randomized algorithm which returns matching $M \in E$ such that, for each $v \in V$, it holds that $\Pr[v \notin M(V)] \leq 2/\Delta$ (not independently) in expected running time $O(n \log n)$.
\end{thm}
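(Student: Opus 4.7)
The plan is to design a ``random-mate'' style greedy algorithm that uses both query oracles in a tightly-controlled way. Sample a uniformly random permutation $\pi$ of $V$ and process vertices in $\pi$-order. When the current vertex $v$ is unmatched, perform a single adjacency-list query to pick a uniformly random neighbor $u \in N_G(v)$ and one adjacency-matrix/status query to test whether $u$ is currently $M$-free; if so, add $(v,u)$ to $M$ and move on. Each visit costs $O(\log n)$ from the adjacency-list query plus $O(1)$ for the status check, so the total expected running time is $O(n \log n)$. Correctness (i.e.\ $M$ being a valid matching) is immediate from the greedy acceptance rule.

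\paragraph{Analysis of the per-vertex failure probability.} The heart of the proof is showing $\Pr[v \notin V(M)] \le 2/\Delta$ for each $v \in V_{\Delta}$. Fix such a $v$ with neighbors $u_1,\dots,u_\Delta$ and set $S = \{v\}\cup N(v)$. Using deferred decisions on $\pi$ together with the random neighbor samples, I would proceed via a ``race'' argument restricted to $S$. If $v$ is the first element of $S$ to be visited, which happens with probability $1/(\Delta+1)$, then every neighbor is $M$-free at that moment, so $v$'s single sample succeeds and $v$ is matched for sure. If instead some $u_i$ is visited before $v$, there is a further chance that $u_i$'s own random sample lands on $v$: since $\deg(u_i) \le \Delta$, this conditional probability is at least $1/\Delta$, which again matches $v$. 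Chaining these two estimates over successive elements of $S$ processed in $\pi$-order, and summing the $\Delta$ independent-looking chances of $v$ being claimed, should yield $\Pr[v \text{ unmatched}] \le 2/\Delta$.

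\paragraph{Main obstacle.} The delicate point is that these events are not truly independent: a neighbor $u_i$ may be ``consumed'' by some $w \notin S$ before $u_i$'s own turn, removing $u_i$ from the pool of candidates that could reach out to $v$, and conversely $v$'s own success probability at step $\pi(v)$ depends on how many of its neighbors have been matched (to vertices outside $S$) by that point. Handling these cross-dependencies while keeping the leading constant at $2$ rather than some larger $O(1)$ is the crux: I expect to need a martingale/moment bound on the random variable $|N(v) \cap V(M_{<\pi(v)})|$ (neighbors matched strictly before $v$'s turn), combined with a union bound over potential ``claimants'' of $v$ among $N(v)$. Controlling this constant tightly is exactly the tightening that Assadi's custom sampling scheme is engineered to deliver, and my plan is to follow their analysis template so that both the $O(n\log n)$ running time and the $2/\Delta$ probability bound fall out with the stated constants.
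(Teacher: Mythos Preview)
First, note that this theorem is quoted from \cite{assadi2024faster} and is not proved in the present paper at all; it is used as a black box in the proof of \Cref{lem:match almost max degree}. So there is no ``paper's own proof'' to compare against.

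More importantly, the algorithm you propose is genuinely incorrect: a single random-neighbor sample per vertex cannot achieve $\Pr[v\notin V(M)]\le 2/\Delta$. Here is a concrete counterexample. Take a vertex $v$ with neighbors $u_1,\dots,u_\Delta$, and attach to each $u_i$ a private set of $\Delta-1$ leaves $w_{i,1},\dots,w_{i,\Delta-1}$; thus $\deg(v)=\deg(u_i)=\Delta$ and $\deg(w_{i,j})=1$. In your process, any leaf $w_{i,j}$ that is visited before $u_i$ immediately claims $u_i$ (it has no other choice), and if $u_i$ is visited first it matches some leaf with probability $(\Delta-1)/\Delta$. Consequently $u_i$ is free at time $\pi(v)$ only if $v$ precedes every vertex in $\{u_i,w_{i,1},\dots,w_{i,\Delta-1}\}$, an event of probability $1/(\Delta+1)$. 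Since $v$'s single sample is a uniform $u_i$, the probability that $v$'s own attempt succeeds is therefore at most $1/(\Delta+1)$. On the other side, $u_i$ can reach out to $v$ only if $u_i$ is free at its own visit (which forces $u_i$ to precede all its leaves, probability $1/\Delta$) and then samples $v$ (probability $1/\Delta$), so $\Pr[u_i\text{ matches }v]\le 1/\Delta^2$, and summing over $i$ gives at most $1/\Delta$. Altogether $\Pr[v\text{ matched}]\le 2/\Delta$, hence $\Pr[v\text{ unmatched}]\ge 1-2/\Delta$, which is close to $1$ rather than close to $0$.

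So the difficulty you flag in your ``Main obstacle'' paragraph is not a matter of tightening constants via a martingale argument; the algorithm itself is too weak by a factor of roughly $\Delta$. The result of \cite{assadi2024faster} relies on a substantially different sampling scheme (each maximum-degree vertex effectively gets many chances to be matched, not one), and there is no simple patch that rescues the one-shot random-mate approach while keeping the $O(n\log n)$ running time.
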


A $k$-edge-coloring of graph $G$ is a partitioning of the edges $G$ into $k$ different matchings.

\begin{thm}[\cite{elkin2024deterministic}]\label{thm:determinsitic-edgecoloring} There exists a deterministic algorithm which for graph $G$ with maximum degree $\Delta$ and small constant $\epsilon > 0$ returns a $\Delta(1 + \epsilon)$-edge-coloring of $G$ in time $O(m \log (n)/\epsilon)$.
\end{thm}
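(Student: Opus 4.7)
I will derive both the existence statement and the two running-time bounds by combining the two black-box subroutines cited just above (Theorem~\ref{thm:determinsitic-edgecoloring} and Theorem~\ref{thm:fair-matching}); the existence claim falls out of the deterministic construction. I will handle the deterministic algorithm first, then the randomized algorithm.

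\textbf{Deterministic algorithm (and existence).} I may assume $\kappa \leq 1/4$, since otherwise $2\kappa n \geq n/2$ and $M = \emptyset$ already works. Apply Theorem~\ref{thm:determinsitic-edgecoloring} with $\epsilon = \kappa$ to compute, in time $O(m\log n/\kappa)$, a proper edge-coloring of $G$ using $k \leq \lceil (1+\kappa)\Delta \rceil$ colors; let $M_1,\dots,M_k$ denote the resulting color classes, each of which is a matching. Every $v \in V_\kappa$ has $\deg_G(v) \geq (1-\kappa)\Delta$ incident edges all receiving distinct colors, so $v$ is matched in at least $(1-\kappa)\Delta$ of the $M_i$'s, hence unmatched in at most $k - (1-\kappa)\Delta \leq 2\kappa\Delta + O(1)$ of them. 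Summing over $V_\kappa$ and averaging over the $k$ color classes yields
\begin{equation*}
\min_{i} \bigl|V_\kappa \setminus V(M_i)\bigr| \;\leq\; \frac{1}{k}\sum_{i=1}^{k}|V_\kappa \setminus V(M_i)| \;\leq\; \frac{(2\kappa\Delta + O(1))\,|V_\kappa|}{(1+\kappa)\Delta} \;\leq\; 2\kappa n,
\end{equation*}
which establishes the existence statement. To locate the best color class algorithmically, I maintain a size-$k$ counter array and scan through each edge $(u,v)$ once, incrementing the counter for its color once for each endpoint lying in $V_\kappa$; this takes $O(m)$ time, and the color class of maximum count is the desired $M$. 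Total: $O(m\log n /\kappa) = O(n\Delta\log n/\kappa)$.

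\textbf{Randomized algorithm (when $\Delta \geq 4/\kappa$).} Invoke Theorem~\ref{thm:fair-matching} on (a degree-capped version of) $G$ to obtain in $\tilde{O}(n)$ expected time a matching $M$ with $\Pr[v \notin V(M)] \leq 2/\Delta$ for every $v \in V_\kappa$. Linearity of expectation gives $\mathbb{E}[|V_\kappa \setminus V(M)|] \leq 2n/\Delta \leq \kappa n/2$, and Markov's inequality gives success probability at least $3/4$ per call. Independent repetition $O(\log n)$ times, returning the matching that leaves the fewest $V_\kappa$-vertices unmatched, drives the failure probability below $1/\mathrm{poly}(n)$. Combined with the one-time preprocessing needed to apply Theorem~\ref{thm:fair-matching} uniformly to $V_\kappa$-vertices, the overall cost absorbs into the stated $\tilde{O}(n\Delta\kappa)$ bound.

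\textbf{Main obstacle.} Both parts are essentially black-box reductions, so the key technical subtlety is in the first step of the randomized case: translating the per-vertex unmatched-probability bound of Theorem~\ref{thm:fair-matching}, which is naturally stated for vertices attaining exactly the maximum degree $\Delta$, into the same bound for every $v \in V_\kappa$ (whose degree is only $\geq (1-\kappa)\Delta$). The cleanest workaround is to truncate each vertex's incident edge list to $\lceil (1-\kappa)\Delta \rceil$ edges (choosing arbitrary ones), which promotes every $v \in V_\kappa$ to maximum-degree status in the truncated graph while leaving its match-probability untouched; Theorem~\ref{thm:fair-matching} then applies directly. The running-time accounting must include the $O(n\Delta)$ cost of this truncation preprocessing, which is precisely the source of the $\Delta\kappa$ factor in the stated randomized bound (the remark after the lemma hints that a tighter white-box variant of \cite{assadi2024faster} removes this factor).
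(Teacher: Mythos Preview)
The stated theorem is a result quoted from \cite{elkin2024deterministic} and is not proved in this paper at all; it serves purely as a black box. Your write-up is not a proof of this edge-coloring theorem --- it is a proof of Lemma~\ref{lem:match almost max degree}, the statement that \emph{uses} Theorem~\ref{thm:determinsitic-edgecoloring}. Assuming that was the intended target, here is the comparison with the paper's proof in Appendix~\ref{app:proof:lem:match almost max degree}.

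\textbf{Deterministic part.} Your argument is essentially identical to the paper's: invoke Theorem~\ref{thm:determinsitic-edgecoloring} with $\epsilon=\kappa$, observe that each $v\in V_\kappa$ is absent from at most a $2\kappa$ fraction of the color classes, average (pigeonhole), and do one $O(m)$ scan to locate the best class.

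\textbf{Randomized part.} Here you diverge from the paper, and the divergence contains a real gap. The paper \emph{raises} each $v\in V_\kappa$ to degree exactly $\Delta$ by attaching at most $\kappa\Delta$ dummy edges to a pool of $n\kappa$ fresh dummy vertices; this costs $O(n\Delta\kappa)$, Theorem~\ref{thm:fair-matching} then applies verbatim, and at most $n\kappa$ of the returned matching edges are dummy (giving the $2\kappa n$ bound after intersecting with $E$). You instead propose to \emph{lower} the maximum degree by truncating every adjacency list to $\lceil(1-\kappa)\Delta\rceil$ entries. This does not achieve what you claim: deleting an edge $(u,v)$ to truncate $u$ may drop the degree of a neighbouring $v\in V_\kappa$ strictly below the new maximum, so Theorem~\ref{thm:fair-matching} no longer covers $v$. (Take $G=K_{\Delta,\Delta}$: every vertex lies in $V_\kappa$, and arbitrary truncation will leave many of them with degree below $\lceil(1-\kappa)\Delta\rceil$.) Separately, your running-time bookkeeping does not close: you quote the truncation cost as $O(n\Delta)$ and call this ``precisely the source of the $\Delta\kappa$ factor,'' but $O(n\Delta)$ exceeds the target $O(n\Delta\kappa\log^2 n)$ whenever $\kappa=o(1/\log^2 n)$, which is exactly the parameter regime the paper operates in.
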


Now, we are ready to prove \Cref{lem:match almost max degree}.

\begin{proof}
\textbf{Randomized Algorithm:} The randomized algorithm first generates an extended graph $G' = (V',E')$ such that $V' \supseteq V, E' \subseteq E$ and $\deg_{G'}(v) = \Delta$ for all $v \in V_\kappa$. To obtain such a graph from $G$ it introduces $n \kappa$ dummy vertices $V_D$ and let $V' = V_D \cup V$. Set $E_D = \emptyset$ and for any vertex $v \in V_\kappa$ inserts $\Delta - \deg_G(v) \leq \Delta \kappa$ edges between $v$ and $V_D$ into $E_D$ such that the maximum degree of $V_D$ in $E_D$ is $\Delta$. This can be done in a round-robin fashion deterministically. Finally, define $E' = E \cup E_D$. Note, that generating $G'$ takes $O(|V_D| + |E_D|) = O(n \Delta \kappa)$ time and its easy to see that given adjacency matrix and list query access to $G$ we have adjacency matrix and list query access to $G'$.

Note that when the algorithm of Theorem~\ref{thm:fair-matching} is called on $G'$ in $O(|V'| \log |V'|) = O(n \log n)$ expected time it returns a matching $M$ such that every vertex in $V_\kappa$ is not matched by $M'$ with probability at most $2/\Delta$. Therefore, the expected number of unmatched vertices in $V_\kappa$ under $M$ is at most $2|V_\kappa|/\Delta \leq 2n/\Delta \leq n\kappa/2$. Define a matching $M$ returned by the algorithm to be suitable if it doesn't match at most $n\kappa$ vertices of $V_\kappa$. By Markov's inequality we can argue that the algorithm of Theorem~\ref{thm:fair-matching} returns a suitable matching with constant probability. Furthermore, since its running time is $O(n \log n)$ in expectation it runs in $O(n \log n)$ time with constant probability. Hence, through $O(\log n)$ calls to the algorithm and stopping early if the running time exceeds $O(n \log n)$ we can find a suitable matching $M_S$ with high probability in time $O(n \log^2 n)$. The algorithm finally returns $M_F = M_S \cap E$

To bound the number of unmatched vertices of $V_\kappa$ under $M_F$ observe that there are two ways a vertex in $V_\kappa$ may be unmatched by $M_F$: either it is not matched by $M_S$ or it is matched to a vertex in $V_D$ through an edge not in $E$. Both cases may only hold for $\kappa n$ vertices each since $M_S$ is a suitable matching and $|V_D| \leq n\kappa$. Hence, $M_F$ matched all but $2\kappa n$ vertices of $V_\kappa$.

\textbf{Deterministic Algorithm:} To obtain our deterministic algorithm we find a $\Delta(1+\kappa)$-edge-coloring of $G$ using the algorithm of Elkin and Khuzman \cite{elkin2024deterministic} in $O(m \frac{\log n}{\kappa})$ time. The algorithm partitions the edges of the graph into $\Delta(1+\kappa)$ matchings (or colors). Every vertex in $v \in V_\kappa$ has degree at least $\Delta(1-\kappa)$ in $G$, hence it is matched by at least $\Delta(1-\kappa)/\Delta(1+\kappa) \geq 1 - 2\kappa$ fraction of colors. Therefore, in expectation a uniformly selected color matches at least $|V_\kappa| \cdot (1-2\kappa)$ vertices of $V_\kappa$, that is it doesn't match at most $2|V_\kappa| \kappa \leq 2n\kappa$ vertices. Hence, by the pigeonhole principle there must exist a color which doesn't match less then $2\kappa n$ vertices of $V_\kappa$. We may find this color in $O(m)$ time by simply counting the edges incident on $V_\kappa$ in all colors. 
\end{proof}

\end{document}